\newcommand{\blind}{0}
\newcommand\DoToC{%
 \vskip1cm
 \startcontents[sections]
  \printcontents[sections]{l}{1}{\textbf{Contents}\vskip3pt\hrule\vskip5pt\setcounter{tocdepth}{2}}
  \vskip5pt\hrule\vskip5pt
}
\newcommand\numberthis{\addtocounter{equation}{1}\tag{\theequation}}
\newcommand{\obs}{\text{obs}}
\begin{document}

\def\spacingset#1{\renewcommand{\baselinestretch}%
{#1}\small\normalsize} \spacingset{0}

\onehalfspacing

  \title{\bf \large Optimal Conditional  Inference in Adaptive Experiments\thanks{An extended abstract of this paper was presented at CODE@MIT
   2021. We thank Karun Adusumilli, Kei Hirano, Lucas Janson, Lihua Lei, Jonathan Roth,
   Brad Ross, David Ritzwoller, Pedro Sant'Anna, and participants at the Microsoft
   Research causal inference reading group, Stanford, CODE@MIT, the International Seminar
   in Selective Inference, and MLESC 2025 for comments.}}
  \author{Jiafeng Chen \hspace{.2cm}\\ Department of
   Economics, Stanford University\\ jiafeng@stanford.edu\footnote{Corresponding author} \\
   \quad \\ Isaiah
   Andrews \\
   Department of Economics,
   Massachusetts Institute of Technology \\ iandrews@mit.edu}
  \maketitle

\bigskip
\begin{abstract} 
We study batched bandit experiments and consider the problem of inference
 conditional on the realized stopping time, assignment probabilities, and target
 parameter, where all of these may be chosen adaptively using information up to the last
 batch of the experiment. Absent further restrictions on the experiment, we show that
 inference using only the results of the last batch is optimal.  When the adaptive
 aspects of the experiment are known to be location-invariant, in the sense that they are
 unchanged when we shift  all batch-arm means by a constant, we show that there is
 additional information in the data, captured by one additional linear function of the
 batch-arm means. In the more restrictive case where the stopping time, assignment
 probabilities, and target parameter are known to depend on the data only through a
 collection of polyhedral events, we derive computationally tractable and optimal
 conditional inference procedures.
\end{abstract}

\noindent%
{\it Keywords:}  Conditional inference, sufficient statistic, bandit,
selective inference, uniform inference
\vfill

\newpage

\onehalfspacing

\section{Introduction} %
\label{sec:introduction}

Consider a \emph{batched} bandit experiment \citep
{zhang2020inference,hirano2023asymptotic}, where, at each batch $t = 1, 2,\ldots, T_0$,
an experimenter either continues or ends the experiment.  If they continue the
experiment, $n_t$ units are independently sampled and independently randomized among $K$
different arms. The
probability that a unit is assigned to arm $k$ in batch $t$ is denoted by $\Pi_
{tk}$, usually decided by bandit algorithms based on the history of outcomes. We let
$\Pi_{t} =\diag(\Pi_{t1},\ldots, \Pi_{tK})$ collect the assignment probabilities in batch
$t$. Observations assigned to arm $k$ have mean outcome $\mu_k$, collected in $\mu =
[\mu_1,\ldots,\mu_K]'$, and variance $\sigma_k^2$, collected in $\Sigma = \diag
(\sigma_1^2,\ldots,
\sigma_K^2)$.  Once the experiment ends, either at an experimenter-selected stopping time
($T<T_0$) or upon reaching the last batch ($T=T_0$), we are interested in statistical
inference on some linear combination of the arm means, $\eta'\mu$, where $\eta$
may depend on the data.

The iterative nature of the experiment provides multiple opportunities for adaptive,
data-driven decision-making.  First, treatment assignments may be selected using bandit
algorithms, which set $\Pi_ {t}$ using the results in batches $1,\ldots,t-1$.  For
instance, the experimenter might assign more units to arms which have produced good
outcomes in the past, as in Thompson sampling.  Second, the experimenter may choose
whether to continue the experiment---and thus the observed number of batches $T$---based
on how the results unfold.  For instance, the experimenter might stop the experiment early
if the results are either highly promising or overly discouraging. Third, the results of
the experiment may also inform what objects the experimenter targets for inference, and
hence the target parameter $\eta'\mu$.  For instance, the experimenter might be interested
in inference on average outcomes under an arm that performed especially well in the
experiment.

Each of these forms of adaptivity presents challenges for inference.
\cite{zhang2020inference} and \cite{hadad2021confidence} highlight that standard inference
 procedures that ignore the adaptive choice of $\Pi_ {t}$ can lead to invalid inference
 (e.g. under-coverage for confidence sets) in adaptive experiments, and propose
 alternative procedures which are valid in large samples provided $\eta$ and $T$ are
 fixed in advance.  Similarly, the large and growing literature on anytime-valid
 inference is motivated by the long-standing observation that adaptively chosen stopping
 times $T$ can lead to arbitrarily poor performance for conventional inference
 procedures, and proposes alternatives which guarantee performance in settings where
 $\eta$ is fixed (see
\citealt{ramdasetal2023anytime} for a recent review).  Finally the large literatures on
post-selection and selective inference \citep[e.g.][]{Berk2013,
FST15,andrews2019inference} highlight that the data-driven choice of a target parameter
can invalidate standard inference procedures, and proposes valid alternatives focused
primarily on static settings or, equivalently, settings where $\Pi_ {t}$ and $T$ are fixed
in advance.

In the batched setting we consider, there is a simple procedure available which ensures
valid inference for many different rules for choosing
$(T,\Pi_{1:T},\eta)=(T,\Pi_1,\ldots,\Pi_T,\eta),$ where we collectively
shorthand these rules as ``the experimental design.''  So long as the experimental design depends only
on data observed up to period $T-1$---requiring, for instance, that decisions to stop the
experiment be made one period in advance, inference based on only the last batch of the
experiment is guaranteed to be valid. 

This ``last-batch-only'' approach is in one sense quite restrictive, since it discards
information from all but the last batch of the experiment.  At the same time,
last-batch-only inference is quite flexible in that it requires no knowledge of how $
(T,\Pi_{1:T},\eta)$ are chosen, other than that $(T,\eta)$ must be chosen one batch in
advance.  Consequently, this approach can accommodate a variety of experimental designs,
including ones where decisions are made adaptively by human decision-makers whose
preferences and decision rules are not fully understood.

This paper examines the extent to which it is possible to improve last-batch-only
inference while continuing to allow a very flexible class of experimental designs. We cast
this problem as searching for optimal inference procedures that are valid
\emph{conditionally} on the experimental design $(T, \Pi_{1:T}, \eta)$.  Since the
last-batch-only procedure is a conditionally valid one, the optimal conditional inference
procedure must weakly dominate it. One can think of focusing on conditional inference as
looking for ``safe free lunches'' that inherit the flexibility of last-batch-only. Of
course, conditionally valid intervals are able to accommodate many adaptive choices at the price of
being somewhat longer than unconditionally valid ones \citep[e.g.,][]{zhang2020inference}.

We start with a simple negative result. If there are no restrictions on how the design $
(T,\Pi_{1:T},\eta)$ is chosen, it is impossible to improve upon last-batch-only
inference. This motivates searching  for natural restrictions on the
experimental design that allows for confidence sets that nontrivially use the first $T-1$
batches. 

Many algorithms for selecting $(T,\Pi_{1:T},\eta)$ depend only on the
\emph{contrasts} between different arms, and are unaffected if we increase the average
 outcome for all arms by a fixed amount.  This is the case, for instance, whenever the
 experimenter's decisions depend only on the difference in outcomes relative to a fixed
 treatment arm. We term experimental designs with this property ``location-invariant.''
 
Our core result is that, in the class of location-invariant experiments, a sufficient
statistic for $\mu$ is given by the mean outcomes in the last batch, together with a
scalar $L$ summarizing information from earlier batches. Intuitively, $L$
represents
information that is orthogonal to all pairwise arm differences over the first $T-1$
batches---and thus survives the conditioning. Conditional on these pairwise differences,
the distribution of the sufficient statistic is (asymptotically) multivariate Gaussian.
Thus, inference for $\eta'\mu$ based on these statistics is (asymptotically)  a simple generalized least squares problem, and is optimal among the class of procedures which are conditionally valid given  all pairwise differences between batch-arm means in the first $T-1$ batches.  In particular, our approach dominates inference based on the last batch alone.

Two aspects of this result are worth commenting on. First, we derive it under the
assumption that batch-arm means are exactly normally distributed (though results in
\cref{sub:conditional_inference_in_polyhedral_algorithms} extend to cases where batch-arm
means are linear exponential family). This is a restrictive when viewed as an exact
description of the data-generating process. Instead, we view it as an asymptotic
description for large parts of applied practice, because (i) any batched bandit experiment
is \emph{asymptotically} equivalent---in the Le Cam--Hajek sense with large batch
sizes---to our Gaussian model \citep {hirano2023asymptotic}, and (ii) many adaptive
experiments are batched in practice, due to implementation and cost
concerns.\footnote{See, for instance, \citet {zhang2020inference} and references therein,
``We focus on the batched setting because it closely reflects many of the problem settings
where bandit algorithms are applied. For example, in many mobile health   and online
education problems multiple users use apps / take courses simultaneously, so a batch
corresponds to the number of unique users the bandit algorithm acts on at once. The
batched setting is even common in online recommendations and advertising because it is
impractical to update the bandit after every action if many users visit the site
simultaneously.'' } Thus, so long as one considers a batched bandit experiment with large
batch sizes, our finite-sample optimality results for Gaussian data translate to---and
should be viewed as---\emph {asymptotic optimality} results for non-Gaussian data.
Moreover, the optimal procedure in the Gaussian limit experiment suggests a feasible
procedure applicable to data without distributional assumptions. Indeed, we show that
these feasible confidence intervals have uniformly correct asymptotic coverage over a
large class of non-Gaussian data generating processes.\footnote{These asymptotic results
were already present in Appendix C of the first draft of this paper (September, 2023,
arXiv:2309.12162v1). Independently, later work on bandit-experiment asymptotics
\citep{niu2025assumption} employs closely related proof techniques, showing asymptotics
for a wider class of statistics. See \cref{rmk:proof} for additional discussion. }

Second, because we allow all designs that satisfy location-invariance, our optimal
inference procedure conditions on all batch-arm mean differences. In some cases, however,
the information contained in $ (T,\Pi_{1:T},\eta)$ may be much coarser than all batch-arm
mean differences (though not in all cases: we show that if $(T, \Pi_{1:T}, \eta)$ is
generated by Thompson sampling, conditional validity given the design requires
conditioning on all batch-arm mean differences).  When the information contained in $
(T,\Pi_{1:T},\eta)$ is known to be coarser, we may exploit this knowledge to derive more
powerful conditional inference procedures. To illustrate, we consider settings where the
rules used to determine $(T,\Pi_ {1:T},\eta)$ are fully known and can be expressed in
terms of a finite number of linear-in-data inequalities (that is, settings where these $
(T,\Pi_{1:T},\eta)$ depend on the experimental results only through a collection of
polyhedral events).  This holds, for instance, when $\Pi_{1:T}$ is constructed using an
$\varepsilon$-greedy algorithm and $\eta$ selects the best-performing arm.  In such
settings, we characterize optimal median-unbiased estimators and equal-tailed confidence
intervals conditional on $(T,\Pi_ {1:T},\eta)$ in the limit experiment. These confidence
intervals are closely related to those in selective inference \citep{FST15}.

These results have implications for a wide range of empirical contexts. Adaptive
experiments are used in a variety of academic and commercial settings
\citep[e.g.][]{SBF17,Rafferty_Ying_Williams_2019,CGKQST22,bibaut2024demistifying}.
Moreover, recent theoretical results by \cite{HZZBGY20} show that batched experiments with
a surprisingly small number of batches can achieve the same rate guarantees as fully
adaptive experiments, in which the assignment probabilities may be different for each
observation, further motivating our focus on batched experiments. 

A potentially less obvious application of our results is to experiments with an initial
pilot stage.\footnote{We thank \blind{[anonymous] }bringing
this connection to our attention.} For the purposes of our analysis the pilot stage can be
treated as the first batch, and the main experiment as the second.  If the choice of
assignment probabilities and target parameters depend only on treatment-control contrasts
at the pilot stage, one can apply our results on inference for location-invariant
experimental designs and obtain more precise inference than is possible using the main
experiment alone.  Crucially, these results apply even when we do not have an exact
formula for how the design of the main experiment depends on the pilot, for instance
because design decisions were made on an ad-hoc basis after observing the results from the
pilot.

\Cref{sec:problem_setup} formally introduces the problem we consider in the context of the
normal model, and proves the optimality of last-batch-only inference absent further
restrictions on $(T,\Pi_{1:T},\eta)$.  \Cref{sub:leftover} introduces the class of
location-invariant experiments and derives the sufficient statistic for $\mu$ in this
class. 
\Cref{sub:asymptotics} shows that feasible analogs of the resulting inference procedures are uniformly asymptotically valid over a large class of data generating processes.
\Cref{sec:inference_in_the_gaussian_model} considers optimal conditional inference when $
 (T,\Pi_{1:T},\eta)$ depends on the data only through a collection of polyhedral events.
 Finally \cref{sec:simulation_evidence} provides simulation evidence on the performance
 of our procedures.  Proofs and additional results are provided in the appendix. 

\section{Problem setup} %
\label{sec:problem_setup}

Let $n_t$ denote the total number of observations in batch $t$, $n = \sum_{t=1}^T n_t$ the
total number of observations, and $c_t = n_t / n$ the fraction of observations assigned
in batch $t.$ We write $X_{tk}$ for the sample average outcome among those units assigned
to arm $k$ in batch $t$ and $X_t$ for the vector of these means, $X_t = [X_
{t1},\ldots, X_{tK}]'$.   We assume that the assignment probabilities $\Pi_{t}$ depend on
the data only through $X_{1:t-1}=(X_1,\ldots,X_{t-1}).$

Since $X_t$ is a vector of sample means,
when batch sizes are large, the central limit theorem implies that $X_t$ is approximately
Gaussian conditional on $\Pi_t,X_{1:t-1}$, with $X_{tk} \mid \Pi_t,X_{1:t-1} \sim \Norm
(\mu_k,\frac{\sigma_k^2}{n_t\Pi_{tk}}).$  For $V_t$  the diagonal matrix with $k$\th{}
diagonal element $\frac{\sigma_k^2}{c_t\Pi_{tk}}$ we thus have that, approximately,
\[ X_{t} \mid \Pi_t,X_{1:t-1} \sim \Norm\pr{\mu, \frac{1}{n} V_t},
\numberthis
\label{eq:gaussian}
\]
Since $\Sigma$  is consistently estimable, in \eqref{eq:gaussian} we treat
it (and thus $V_t$) as known. 

Exact Gaussianity \eqref{eq:gaussian} is a strong assumption. Our results on
general conditional inference \cref
{sub:conditional_inference_in_polyhedral_algorithms} extends to exponential family
models (\cref{rmk:expofam}), though within exponential families we view Gaussian as
likely the leading case. More importantly, we view the model
\eqref{eq:gaussian} as reflecting a focus on \emph{asymptotic} approximation for \emph
 {batched} bandits, rather than as a substantive parametric restriction. Indeed, the
 known-variance Gaussian model corresponds to the limit experiment for the batched bandit
 experiment under mild conditions 
 \citep{hirano2023asymptotic}. Since the statistical properties of any procedure in
  batched bandits can be matched asymptotically with a procedure under \eqref
  {eq:gaussian}, we derive optimal procedures under \eqref{eq:gaussian}. Thus, results
  under the Gaussian model characterizes {asymptotic} results for any batched bandit
  experiment, in the Le Cam--Hajek sense.

  To demonstrate that these procedures are indeed asymptotically valid, \Cref
  {sub:asymptotics}  verifies that the finite-sample analogue of our procedure converges
  to its counterpart under
\eqref{eq:gaussian}, uniformly over a large class of (not necessarily Gaussian)
 data-generating processes. For analytical convenience, outside of \cref
 {sub:asymptotics}, we maintain that $\Pi_ {tk} > 0$ almost surely for all $t \in
 [T], k\in [K]$.

\subsection{Inference problem}

Suppose that we observe data from batches $1,\ldots,T$ of the Gaussian experiment 
\eqref{eq:gaussian}, where the stopping decision is based on information available at
 $T-1$ (i.e. $\one(T\le t)$ is measurable with respect to $X_{1:t-1}$).  We
 are interested in inference on $\eta'\mu,$ where the target parameter is again
 measurable with respect to the information available at $T-1$, $\eta=\eta(X_
 {1:T-1},T).$

Since we require that stopping decisions and target parameters are determined by
$X_{1:T-1}$, inference based only on the last batch is valid: The
$z$-statistic confidence interval $\eta'X_T\pm z_{1-\frac{\alpha}{2}}
\sigma_{\eta}$ (where $z_{1-\frac{\alpha}{2}} = \Phi^{-1}(1-\frac{\alpha}{2})$ and
 $\sigma_{\eta}^2 = \frac{1}{n}\eta'V_T\eta$) has correct conditional coverage
\[
\P\left(\eta'\mu\in \bk {\eta'X_T\pm z_{1-\frac{\alpha}{2}}
\sigma_{\eta}} \mid T,\Pi_
{1:T},\eta\right)=1-\alpha.
\]
Moreover, $\eta'X_T$ is unbiased for $\eta'\mu$ even conditional on $T,$ $\Pi_{1:T}$, and
$\eta$: $\E[\eta'X_T \mid T,\Pi_{1:T},\eta]=\eta'\mu.$

Can we  construct a confidence set $C_{\eta}$ that improves on last-batch-only inference
while maintaining the same flexibility? Specifically, if we would like confidence
sets $C_\eta(\alpha)$ that maintain validity \emph{conditional} on the experimental design
(stopping time,
assignment probability, and target parameter) \[
    \P\pr{
        \eta'\mu \in  C_{\eta}(\alpha) \mid T,\Pi_{1:T},\eta
    } \ge 1-\alpha \text{ almost surely}, \quad \alpha \in (0,1).
    \numberthis
    \label{eq:conditional_requirement}
\] Can we do better than last-batch-only? The answer turns out to be \emph{no}, if $\eta$
 can be an arbitrary function $\eta(\cdot)$ of the history $X_{1:T-1}$.
\begin{restatable}{lemma}{lemmaimprovability}

    \label{lemma:improvability}

    Suppose \eqref{eq:conditional_requirement} holds for all measurable $\eta(\cdot)$,
    then for all fixed $\eta \in \R^K$\[
        \P\pr{
            \eta'\mu \in C_{\eta}(\alpha) \mid X_{1:T-1}
        } \ge 1-\alpha  \mbox{ almost surely}.
    \]
\end{restatable}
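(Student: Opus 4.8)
The plan is to reduce the claim to the hypothesis \eqref{eq:conditional_requirement} by, for every event of positive probability on which conditional coverage might fail, constructing a non-constant target-selection rule that \emph{reveals} whether that event occurred. Fix $\eta_0 \in \R^K$ and let $D$ be any event measurable with respect to the information available at $T-1$ with $\P(D) > 0$. I would choose some $v \in \R^K$ with $v \neq \eta_0$ and take the selection rule $\tilde\eta := \eta_0\one_D + v\one_{D^c}$, which is a measurable function of $X_{1:T-1}$ and $T$ and hence admissible in \eqref{eq:conditional_requirement}. Because $v\neq\eta_0$, the event $\{\tilde\eta = \eta_0\}$ coincides exactly with $D$, and on $D$ the realized target is $\eta_0'\mu$ and the realized confidence set $C_{\tilde\eta}(\alpha)$ is $C_{\eta_0}(\alpha)$.

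Next I would apply \eqref{eq:conditional_requirement} to $\tilde\eta$, obtaining $\P(\tilde\eta'\mu \in C_{\tilde\eta}(\alpha)\mid T,\Pi_{1:T},\tilde\eta)\ge 1-\alpha$ almost surely, and average this inequality over $D$. This is legitimate since $D = \{\tilde\eta = \eta_0\}$ belongs to $\sigma(\tilde\eta)\subseteq\sigma(T,\Pi_{1:T},\tilde\eta)$ and $\P(D)>0$, so the tower property gives
\[
    \P\pr{\eta_0'\mu\in C_{\eta_0}(\alpha)\mid X_{1:T-1}\in D} \;=\; \P\pr{\tilde\eta'\mu \in C_{\tilde\eta}(\alpha)\mid D}\;\ge\;1-\alpha,
\]
where the equality uses that the two indicators agree on $D$. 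Since $D$ was an arbitrary positive-probability event in the information available at $T-1$, I would conclude with the routine argument that a conditional probability averaging to at least $1-\alpha$ over every such event must itself be at least $1-\alpha$ almost surely: apply the displayed bound with $D = \{\P(\eta_0'\mu\in C_{\eta_0}(\alpha)\mid X_{1:T-1})\le 1-\alpha-1/m\}$ and let $m\to\infty$.

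The argument has no computationally hard step; the one thing to get right is the construction of $\tilde\eta$ and the check that it is a legitimate selection rule — a function of $X_{1:T-1}$ and $T$, respecting the problem's adaptivity structure — that equals $\eta_0$ on $D$ and nowhere else. The real obstacle is conceptual rather than technical: recognizing that an arbitrary event can be encoded into the target-selection rule, so that requiring conditional validity given the realized target is already at least as stringent as requiring it given that event. Once this is seen the proof is essentially mechanical, and it uses neither normality nor any property of $C_{\eta_0}$ beyond \eqref{eq:conditional_requirement}.
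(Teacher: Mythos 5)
Your proposal is correct and takes essentially the same route as the paper's proof: the key idea in both is to encode an arbitrary positive-probability $\sigma(X_{1:T-1})$-measurable event into the target-selection rule (your $\tilde\eta=\eta_0\one_D+v\one_{D^c}$ versus the paper's $\eta=\eta_0$ on $E$, $e_1$ on $E^c$) and then invoke \eqref{eq:conditional_requirement} on the event $\{\tilde\eta=\eta_0\}$. The only difference is presentational: the paper argues by contradiction from a single failing event, while you argue directly for every event $D$ and finish with the routine limiting step, and you are slightly more careful in choosing $v\neq\eta_0$.
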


\Cref{lemma:improvability} shows that if $\eta(\cdot)$ is entirely unrestricted, then
any confidence set which is conditionally valid in the sense of 
\eqref{eq:conditional_requirement} must also be valid conditional on $X_{1:T-1}.$  In
 fact, requiring conditional coverage given $\eta$ is already sufficient to obtain this
 conclusion. This effectively means we cannot use information from the first $T-1$
 batches of data, so it is impossible to improve on the power of last-batch-only
 inference without losing some degree of robustness. In other words, for the first $T-1$
 batches to provide useful information, we must be able to restrict the design. 

To be sure, \eqref{eq:conditional_requirement} is stronger than some coverage
requirements targeted in the literature, which do not condition on  $ (T, \Pi_
{1:T}, \eta)$ \citep{zhang2020inference,hadad2021confidence}. Correspondingly,
imposing \eqref {eq:conditional_requirement} will tend to yield longer confidence sets.
On the other  hand, relative to the anytime-valid literature, we require weaker coverage
(specifically, a weaker form of time-uniformity), and consequently tend to deliver tighter confidence
intervals.\footnote{For example, the width of $(1-\alpha)$-asymptotic confidence
sequences in
\citet{waudby2024time} scales as $\sqrt{\log (\sqrt{n}/\alpha) \frac{1}{n} }$
 (Theorem 2.2), whereas our confidence sets scale as the usual $1/\sqrt{n}$.}

Whether imposing \eqref {eq:conditional_requirement} is appealing depends on whether one
views conditional coverage as important.  In settings where $\eta$ is adaptively
chosen, \citet{fithian2014optimal} advocate conditioning on $\eta$ on the grounds that
that ``The answer must be valid, given that the question was asked.''  More broadly for
confidence sets that are unconditionally, but not conditionally, valid, there necessarily
exist values of $(T, \Pi_{1:T}, \eta)$ at which the confidence set under-covers some true
parameter.\footnote{See
 \cref{asub:zhang_bet_proof} for an example with \citet{zhang2020inference}.} Thus, if we
  wish to avoid the situation where a reader could reasonably disbelieve our coverage
  claim after learning $(T, \Pi_{1:T}, \eta),$ we should focus on conditional coverage.
  On the other hand, if we are unconcerned with this possibility we are free to consider
  the (strictly larger) class of unconditionally valid procedures.

 This tradeoff between a more robust form of Type I error control and power is an
 important but delicate question present in many contexts in statistics. Without
 taking a stand on on the ``correct'' form of error control in general, 
 our motivation for focusing on \eqref
 {eq:conditional_requirement} stems from a desire to search for ``safe free lunches'' over
 the simple and intuitive procedure last-batch-only. Since last-batch-only satisfies
   \eqref{eq:conditional_requirement}, optimal procedures satisfying 
   \eqref{eq:conditional_requirement} by definition improves on last-batch-only. 

To make progress in light of \cref{lemma:improvability}, we consider two sets of
restrictions on the design. First, we observe that many bandit algorithms are  \emph
{location-invariant}. That is, they have the property that $\Pi_t(X_{1:t-1})$ is
invariant to adding a constant $h \in
\R$ to every batch-arm mean $X_{sk}$. If the adaptive choices of inferential target and
 stopping time are similarly location-invariant, as is often the case, then we can
 condition on less information, and construct procedures that dominate using solely the
 last batch. Such a procedure does not require knowledge of the precise allocation
 algorithm nor of $\eta(\cdot)$ and $T$, beyond location-invariance.

Second, if the experimental design depends solely on a lower-dimensional but known set of
statistics, then we can also design optimal conditional inference procedures conditioning
on these statistics. We show that such procedures are particularly tractable for a large
class of discrete assignment algorithms that we call polyhedral algorithms. Since these
procedures require less stringent conditioning, they are more powerful than conditional
inference procedures that only use location-invariance.

\section{Conditional inference for location-invariant algorithms}
\label{sub:leftover}

This section focuses on location-invariant designs. We show that there is a simple
conditional procedure that assumes only location-invariance and improves upon
last-batch-only inference. To state these results, we first formally define what we mean
by location-invariance.

\begin{defn}
A function $f(X_1,\ldots, X_t)$ is \emph{location-invariant} if \[f(X_1 + h1_K, \ldots,
X_t +
h1_K) = f (X_1,\ldots, X_t)\] for all $(X_1,\ldots, X_t) \in (\R^K)^t$ and $h\in \R$, where $1_K\in\mathbb{R}^K$ is the vector of ones.
\end{defn}

\begin{defn}
An assignment algorithm is \emph{location-invariant} if each batchwise probability $\Pi_t
= \Pi_t(X_1,\ldots, X_ {t-1})$ is location-invariant. 
\end{defn}
\begin{defn}
A stopping time $T$ is \emph{location-invariant} if whether batch $t$ is the last batch is determined by $X_{1:t-1}$ via a location-invariant function. That is, $\one(T > t) = \Xi_t(X_1,\ldots, X_{t-1})$ and $\Xi_t(\cdot)$ is location-invariant for all $t = 1,\ldots, T_0$.
\end{defn}
\begin{defn}
$\eta$ is location-invariant if $\eta = \eta(X_1,\ldots, X_{T-1}; T)$ and, for all $t
=1,\ldots, T_0$, each $\eta(X_1,\ldots, X_{t-1}; t)$ is location-invariant in its first
$t-1$
arguments.
\end{defn}

For concreteness, let us first introduce two leading location-invariant assignment
algorithms. Let \[ W_t = \pr{\sum_{s=1}^t V_s^{-1}}^{-1} \sum_{s=1}^t V_s^{-1}
X_s
\]
be the inverse-variance weighted batch-arm means, and let \[
\Omega_t = \frac{1}{n}\pr{\sum_{s=1}^t V_s^{-1}}^{-1} = \frac{\Sigma}{n} \pr{\sum_{s=1}^t
c_s \Pi_s}^{-1} \numberthis \label{eq:w_omega}
\]
be the posterior variance of $\mu \mid W_t$ under a flat prior for $\mu$. 
\begin{exsq}[Thompson sampling]
\label{ex:thompson}
Let $Q(\nu, \Lambda)$ denote the vector of Gaussian orthant probabilities \[
Q_k(\nu, \Lambda) = \P_{X \sim \Norm(\nu, \Lambda)}\pr{X_k \ge \max_{\ell} X_\ell}.
\numberthis \label{eq:orthant}
\]
Consider a Bayesian experimenter with a flat prior for $\mu$. 
The Bayesian's posterior distribution for $\mu$ after
observing $X_{1:t}$ is $
\mu \mid X_{1:t} \sim \Norm\pr{
    W_t, \Omega_t
}.
$
Thompson sampling then sets $\Pi_{t+1} = Q\pr{ W_t, \Omega_t }.$  Since $Q(\nu+h, \Lambda) = Q(\nu,
\Lambda)$, Thompson sampling is location-invariant.
\end{exsq}

\begin{exsq}[$\varepsilon$-greedy]
\label{ex:egreedy}
For some $0 < \varepsilon < 1/K$, the algorithm sets $\Pi_{tk} = 1-(K-1)\varepsilon$ if
$W_{tk}$ is the largest entry in $W_{t}$. Otherwise, the algorithm sets $\Pi_{tk} =
\varepsilon$. This algorithm is location-invariant since it only involves ordinal
comparisons of $W_{tk}$.
\end{exsq}

Location-invariance implies that  for each $t\in\{1,...,T_0\}$, the design $(\one(T\le
 t),\Pi_{1:t},\eta)$ is measurable with respect to  batch-arm differences against
 $X_{1K}$:
\[
\Delta X_{1:t-1} \equiv \br{X_{sk} - X_{1K} : s=1,\ldots, t-1; k = 1,\ldots, K-1}.
\]
Consider the following statistic, which represents some information from the first $T-1$
batches ``left over'' from conditioning on $(T,\Delta X_{1:T-1})$: \[ L =
\sum_{t=1}^{T-1} \sum_{k=1}^K
c_t \frac{ \Pi_{tk}}{\sigma_k^2} X_{tk}. 
\] Since $c_t n \Pi_{tk}$ is the number of units assigned to arm $k$ in batch $t$, the
 coefficient $c_t \frac{ \Pi_{tk}}{\sigma_k^2}$ is then the proportional to the inverse
 variance of $X_{tk}$. Thus we may view $L$ as the precision-weighted \emph{sum} of all
 batch-arm means in the first $T-1$ batches. This particular weighted sum turns out to be
 orthogonal to random variables measurable with respect to $\Delta X_{1:t-1}$.\footnote
 {It is illustrative to note that for $Z \sim \Norm(0, \diag(\sigma_1^2,
\sigma_2^2)),$ the precision-weighted sum $Z_1/\sigma_1^2 + Z_2/\sigma_2^2$ is independent
 of the difference $Z_1 - Z_2$.}

It turns out that $L$ is the \emph{only} information left over from conditioning, in the
sense that $(L, X_T)$ is sufficient for $\mu$ with respect to the conditional distribution
of the data.%

\begin{restatable}{theorem}{thmleftover}
\label{thm:leftover}
    Under the preceding setup, assuming $\Pi_{t} > 0$ for all $t$ almost surely, \[
\colvecb{2}{L}{X_T} \mid (T, \Delta X_{1:T-1}) 
\sim \colvecb{2}{L}{X_T} \mid (\Pi_{2:T}, T) 
\sim
\Norm\pr{
    \colvecb2{\lambda'\mu}{\mu}, \begin{bmatrix}
        \frac{\lambda'1}{n} & 0 \\
        0 & \frac{1}{n} V_T
    \end{bmatrix}
},\numberthis \label{eq:dist_suff_stat_diff}
    \]
where $\lambda = (\lambda_1,\ldots, \lambda_K)'$ and $\lambda_k = \frac{1}{\sigma_k^2}\sum_
{t=1}^{T-1}
     \frac{n_t}{n}\Pi_{tk}$.
Moreover, $(L, X_T)$ is sufficient for $\mu$ with respect to the conditional
        distribution $X_{1:T} \mid (T, \Delta X_{1:T-1})$.
        
\end{restatable}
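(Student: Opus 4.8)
The plan is to condition on the realized value of the stopping time, $T=\tau$, and argue for each $\tau$ separately; since $\{T=\tau\}$ is determined by $X_{1:\tau-1}$ through a location-invariant function, it is measurable with respect to $\Delta X_{1:\tau-1}$, so this does not interfere with the conditioning. I would then treat $X_\tau$ and $L$ separately. On $\{T=\tau\}$ we have $X_\tau \mid X_{1:\tau-1} \sim \Norm(\mu,\tfrac1n V_\tau)$, and $V_\tau$ is a function of $\Pi_\tau=\Pi_\tau(X_{1:\tau-1})$, which by location-invariance depends on $X_{1:\tau-1}$ only through $\Delta X_{1:\tau-1}$. Hence $X_\tau \perp X_{1:\tau-1} \mid (T,\Delta X_{1:\tau-1})$ and $X_\tau \mid (T,\Delta X_{1:\tau-1}) \sim \Norm(\mu,\tfrac1n V_\tau)$. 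Because $L$ is a function of $X_{1:\tau-1}$, this already delivers the conditional independence of $L$ and $X_\tau$ and the $X_\tau$-block of \eqref{eq:dist_suff_stat_diff}.

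The heart of the argument is the conditional law of $L$ given $(T,\Delta X_{1:\tau-1})$. Conditioning on $\Delta X_{1:\tau-1}$ confines $X_{1:\tau-1}$ to a one-dimensional affine fiber --- location-invariant statistics determine $X_{1:\tau-1}$ up to a common additive shift --- which I would parametrize by a scalar $h$ via $X_t=\tilde\delta_t+h1_K$, with the offsets $\tilde\delta_t$ fixed by $\Delta X_{1:\tau-1}$ (taking, say, $h=X_{1K}$, so the reparametrization is globally linear with constant Jacobian). Crucially, location-invariance forces every $V_t$, $t\le\tau$, to be constant along this fiber, so the conditional density of $X_{1:\tau-1}$ is proportional in $h$ to $\prod_{t=1}^{\tau-1}\phi(\tilde\delta_t+h1_K;\mu,\tfrac1n V_t)\propto_h\exp(-\tfrac n2 Q(h))$, where $Q(h)=\sum_{t=1}^{\tau-1}(\tilde\delta_t+h1_K-\mu)'V_t^{-1}(\tilde\delta_t+h1_K-\mu)$ is quadratic in $h$ with leading coefficient $\sum_t 1_K'V_t^{-1}1_K=\lambda'1_K$. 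Since $L=\sum_t 1_K'V_t^{-1}X_t$ is affine in $h$ with that same slope $\lambda'1_K$, completing the square in $h$ and re-expressing in terms of $L$ gives $Q(h)=(L-\lambda'\mu)^2/(\lambda'1_K)+(\text{terms free of }h)$. Therefore the conditional density of $L$ is proportional to $\exp(-\tfrac{n}{2\lambda'1_K}(L-\lambda'\mu)^2)$, i.e. $L\mid(T,\Delta X_{1:\tau-1})\sim\Norm(\lambda'\mu,\tfrac{\lambda'1_K}{n})$ with $\lambda$ a function of $\Delta X_{1:\tau-1}$; together with the previous paragraph this is \eqref{eq:dist_suff_stat_diff}.

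For the equivalence with conditioning on $(\Pi_{2:\tau},\tau)$: on $\{T=\tau\}$ this $\sigma$-algebra is coarser than $\sigma(\Delta X_{1:\tau-1})$ (location-invariance makes $\Pi_{2:\tau}$ a function of $\Delta X_{1:\tau-1}$), while the conditional law just obtained depends on $\Delta X_{1:\tau-1}$ only through $(\lambda,V_\tau)$, which are in turn functions of $(\Pi_{2:\tau},\tau)$ --- using that $\Pi_1$ is nonrandom and the batch fractions $c_t$ are determined by $\tau$. The standard fact that a regular conditional distribution measurable with respect to a sub-$\sigma$-algebra is also a version of the conditional distribution given that sub-$\sigma$-algebra then yields the middle expression in \eqref{eq:dist_suff_stat_diff}. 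For sufficiency, note that given $(T,\Delta X_{1:\tau-1})$ the pair $(L,X_\tau)$ is in measurable bijection with $X_{1:\tau}$ --- recover $X_{1:\tau-1}$ as the unique point of the fiber with $\sum_t 1_K'V_t^{-1}X_t=L$, and $X_\tau$ is itself --- so the conditional density of $X_{1:\tau}$ factors through $(L,X_\tau)$ against a $\mu$-free Jacobian, and the factorization theorem applies. Letting $\tau$ range over all its values gives the theorem.

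I expect the main obstacle to be the measure-theoretic bookkeeping in the middle step: making rigorous that conditioning on $\Delta X_{1:\tau-1}$ genuinely reduces the joint density to $\propto_h\prod_t\phi(\tilde\delta_t+h1_K;\mu,\tfrac1n V_t)$ with $h$-independent $V_t$'s --- this is precisely where location-invariance of both the assignment rule and the stopping rule is indispensable, since otherwise $V_t$ or the set $\{T=\tau\}$ would vary along the fiber --- and then correctly tracking the constant Jacobian when passing between the $h$- and $L$-coordinates. The completing-the-square algebra itself is routine once this reduction is in place.
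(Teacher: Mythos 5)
Your argument is correct, but it reaches \eqref{eq:dist_suff_stat_diff} by a different route than the paper. You fix $T=\tau$, change variables to $(\Delta X_{1:\tau-1},h)$ with $h=X_{1K}$, use location-invariance to freeze $V_1,\dots,V_\tau$ and the event $\{T=\tau\}$ along the one-dimensional fiber, and then complete the square in $h$, using that $L$ is affine in $h$ with slope $\lambda'1_K$; the independence of $L$ and $X_\tau$ and the $X_\tau$-block come from the fact that the law of $X_\tau$ given $X_{1:\tau-1}$ depends only on $\Pi_\tau$, itself $\Delta X_{1:\tau-1}$-measurable. The paper instead rescales to $Y_t=\sqrt{c_t}\,\Pi_t\sqrt{n}(X_t-\mu)$ and proves an inductive lemma (its Lemma A.2) that tracks the conditional Gaussian law of $\mathbf{Y}_{1:t}$ given general linear statistics $B_s\mathbf{Y}_{1:s}$ annihilating the shift direction, verifying identities such as $B_t\Sigma_{t|t-1}\lambda_t=0$ at each step. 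Your fiber-plus-complete-the-square computation is more elementary and transparent for the theorem as stated, but it leans on $\Pi_t>0$ so that densities and $V_t^{-1}$ exist; the paper's lemma is formulated in the $Y$-scaling precisely so that it tolerates $\Pi_{tk}=0$ and arbitrary conditioning matrices $B_t$, which is what lets it be reused verbatim for the uniform asymptotic results in the appendix. Your sufficiency step (given $\Delta X_{1:\tau-1}$, the map $h\mapsto L$ is an affine bijection with slope $\lambda'1_K>0$, so $(L,X_\tau,\Delta X_{1:\tau-1})$ recovers $X_{1:\tau}$, which is trivially sufficient) is essentially the paper's rank argument in coordinates, and the passage from conditioning on $(T,\Delta X_{1:T-1})$ to $(\Pi_{2:T},T)$ via measurability of the conditional law with respect to the coarser $\sigma$-algebra matches the paper. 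One small caution: make the disintegration explicit via the linear bijection $X_{1:\tau-1}\leftrightarrow(\Delta X_{1:\tau-1},X_{1K})$ (constant Jacobian), so that "conditioning on the fiber" is a genuine conditional density statement rather than a heuristic restriction of the joint density.
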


Therefore, we can base inference for $\mu$ on \eqref{eq:dist_suff_stat_diff}. Optimal
inference in \eqref{eq:dist_suff_stat_diff} is also optimal conditional inference given the
finer information set $\Delta X_{1:T-1}$. 
 The minimal sufficient statistic for $\mu$ under
\eqref{eq:dist_suff_stat_diff} is the weighted least-squares coefficient
\begin{align*}
S^\star %
&= X_T + \frac{ V_T \lambda}{\lambda'1 +  \lambda' V_T \lambda}(L - \lambda'X_T)
\numberthis
\label{eq:wls_coef}
\\
\text{ where } S^\star \mid (\Delta X_{1:T-1}, T) &\sim \Norm\pr{
    \mu, \frac{1}{n} \pr{V_T - \frac{V_T  \lambda \lambda' V_T}{\lambda'1 +
    \lambda'V_T \lambda}}
} \numberthis \label{eq:asymp_variance}.
\end{align*} Optimal conditional inference for $\eta'\mu$ is thus simply based on the
 Gaussian statistic $\eta' S^\star$. For instance, the optimal uniformly most accurate unbiased conditional confidence interval for $\tau$ is the z-statistic interval centered at $\eta' S^\star$. \cref{sub:asymptotics} shows that   the finite
 sample analogue of $\eta'S^\star$ has asymptotic behavior analogous to \eqref
 {eq:asymp_variance} and produces confidence intervals that cover the random parameter
 $\tau = \eta'\mu$ conditional on $(T,\Pi_{1:T},\eta)$.

\subsection{Gain relative to last-batch inference}

\Cref{thm:leftover} shows that under location-invariance there is one usable piece of
information beyond the last batch, namely the statistic $L$.  It is therefore
interesting to understand to what extent this additional information helps us with
inference on $\mu$.

\newcommand{\rc}{r^{(c)}}
\newcommand{\rt}{r^{(t)}}

Let $\rc = V_T \lambda$. Note that $\rc$ collects the ratio of cumulative sample 
size and last period sample size for each arm: $\rc_k = \frac{\sum_{t=1}^{T-1} n_t\Pi_
{tk}}{n_T \Pi_{Tk}}.$ Similarly, let $\rt = \rc + 1$ collect the ratio of total sample
size to last period sample size for each arm. Let $q = \sum_{t=1}^{T-1} n_t \Pi_{t}1_K$
collect the cumulative sample size. Then we have that \[
\var(S_k^\star \mid \Delta X_{1:T-1}, T) = \underbrace{\frac{V_T}{n}}_{\var(X_T \mid
\Pi_T)} \pr{I - \frac{q \Sigma^{-1}
(\rc)'}{q' \Sigma^{-1} \rt}}.
\]
Thus, the reduction in variance relative to using solely the last batch depends on the
alignment of $\eta$ to the matrix $\frac{q \Sigma^{-1}
(\rc)'}{q' \Sigma^{-1} \rt}$, which collects information about variance-weighted relative
sample sizes. The relative improvement is greatest when $\eta$ is proportional to
$\lambda$, since then $L$ provides the greatest amount of information.

\subsection{Conditioning on $\Delta X_{1:T-1}$ in Thompson sampling }

We have so far studied inference conditional on all of the differences $\Delta X_
{1:T-1}$, since $(L, X_T)$ is sufficient with respect to $X_{1:T} \mid (T, \Delta X_
{1:T-1})$. While this allows for any location-invariant experimental design, one might
wonder if this conditioning is excessive.  If we  had more restrictions on how $(T,\Pi_
{1:T},\eta)$ were generated and  our only goal were to ensure conditional coverage \eqref
{eq:conditional_requirement}, perhaps it would suffice to condition on a coarser set of
statistics, which could lead to higher power.  \cref
{sec:inference_in_the_gaussian_model} explores this possibility.

Here, we show instead that conditioning on $\Delta X_{1:T-1}$ is sometimes necessary for
conditional coverage
\eqref{eq:conditional_requirement}, even when the rules generating $(T,\Pi_{1:T},\eta)$
are perfectly known.  Specifically, we show in the case of Thompson sampling
(\cref{ex:thompson}), for fixed $T$ and $\eta$,  efficient conditional inference given
$\Delta X_{1:T-1}$ is also efficient given $\Pi_{1:T}$. This is because $\Pi_{1:T}$ in
Thompson sampling contains rich information about $\Delta X_{1:T-1}$.

\begin{restatable}{prop}{propovercondition}
\label{prop:overcondition} 
Under the setup of \cref{thm:leftover}, suppose, for all $t$, $\Xi_{t}(X_1,\ldots,
    X_{t-1})$ is measurable with respect to $\Pi_{2:t}$. Under Thompson sampling, $(L,
    X_T)$ is sufficient for $\mu$ with respect to the conditional distribution
    $X_{1:T} \mid \Pi_{1:T}, T$.
\end{restatable}

\Cref{prop:overcondition} shows that $(L, X_T)$ is sufficient for $\mu$ with respect to
the distribution $X_{1:T} \mid (\Pi_{2:T}, T)$. Thus, optimal inference based on $(L,
X_T)$ is optimal for $\eta'\mu$. The proof relies on inversion results from the
econometrics of discrete choice \citep{hotz1993conditional,norets2013surjectivity} in
order to show that $\Pi_{1:T}$ reveals sufficiently rich information on $\Delta
X_{1:t-1}$.

\section{Uniform asymptotics for location-invariant algorithms}
\label{sub:asymptotics}

Our focus on the Gaussian model is motivated by the fact that it serves as an
approximation to asymptotic results for any batched bandit setting, without
distributional assumptions. To more formally connect the two, this section shows that the
finite-sample analogue of the inference procedure following \cref{thm:leftover} is
uniformly asymptotically valid over a large class of data distributions and experimental
designs. 

To state these
 results, we first introduce the non-Gaussian version of our setting.
 Fix a sequence $(n, n_1,\ldots, n_{T_0})$ indexed by $n$ such that $n_t / n \to c_t \in (0,1)$ and
 $\sum_t n_t = n$. Let $\mathcal I_t \subset [n]$ collect the indicies of the $n_t$
 individuals in batch $t$. Prior to each batch, assignment probabilities $\Pi_
 {tn} = \diag(\Pi_{t1n},\ldots, \Pi_{tKn})$ are determined by the assignment algorithm,
 as a function of past data. In batch $t$, the $n_t$ participants are assigned to one of
 $K$ arms. The indicator for individual $i$'s assignment is $D_i = [D_{i1},\ldots, D_
 {iK}]'$, which follows a categorical distribution with probabilities $\Pi_{tn}$ and are
 independent across $i$. Let $N_{tn} = [N_{t1n},\ldots, N_{tKn}]' = \sum_
 {i \in \mathcal I_t} D_i$. Let $\hat\Pi_{tkn} = N_{tkn} / n_t $ be the realized
 frequency of samples to arm $k$ in batch $t$. Let $\hat\Pi_{tn} =
\diag(\hat\Pi_{t1n},\ldots, \hat\Pi_{tKn}).$

 Let $X_i^{\obs} = \sum_k D_{ik} X_i(k)$ be the
outcome observed for $i$---equal to the potential outcome at the assigned treatment---and  
let \[
X_{tkn} = \frac{1}{N_{tkn} \vee 1} \sum_{i \in \mathcal I_t} D_{ik} X_i^{\obs}
\]
be the batch-arm empirical mean. The definition for
$X_ {tkn}$ builds in the convention that an empty
mean is zero. Let \[
\hat \sigma_{tkn}^2 =
\frac{\sum_{s=1}^t \sum_ {i\in \mathcal I_s} D_{ik} (X_i^\obs
    -W_{t, k, n})^2}{\sum_{s=1}^t N_{skn}} \numberthis \label{eq:def_sigma2}
\]
be an estimate of the arm variance using data up until batch $t$. Let $\hat \sigma_{kn}^2 =
\hat \sigma_{Tkn}^2$, $\hat\Sigma_{tn} = \diag(\hat\sigma_{t1n}^2, \ldots, \hat\sigma_
{tKn}^2)$ and $\hat\Sigma_n = \diag(\hat \sigma_{1n}^2,\ldots, \hat\sigma_{Kn}^2)$.

Let the realized stopping time be $T_n$ and suppose the experimenter wishes to conduct inference for a possibly
data-dependent parameter $\tau_n = \eta_n'\mu$. We let \[
\lambda_n = \sqrt{n} \sum_{t=1}^{T_n-1} \frac{n_t \hat \Pi_{tkn}}{n \hat\sigma^2_{kn}}
\quad
L_n = \sqrt{n} \sum_{t=1}^{T_n-1} \sum_{k=1}^K \frac{n_t \hat \Pi_{tkn}}{n \hat\sigma^2_
{kn}} X_{tkn}
\]
be scaled analogues of $\lambda$ and $L$. Finally, let \[
S_n^\star = \pr{\frac{\lambda_n \lambda'_n }{\sqrt{n} \lambda'_n1} + \frac{n_{T_n}}{n}
\hat\Sigma^{-1}_n \hat \Pi_{T_n,n} }^+
\pr{\frac{\lambda_n}{\lambda_n'1} \frac{L_n}{\sqrt{n}} + \frac{n_{T_n}}{n} \hat\Sigma^
{-1}_n
\hat\Pi_
{T_nn} X_{T_nn}}\numberthis \label{eq:real_wls_coef}
\]
be an analogue of \eqref{eq:wls_coef}.\footnote{Here,
to conform with our asymptotic results and to accommodate $\Pi_{tk} \approx 0$, we
study the following parametrization of
\eqref{eq:dist_suff_stat_diff}
\[
\colvecb{2}{L}{\sqrt{n_t} \Pi_T X_T} \mid \Pi_{2:T} \sim \Norm\pr{
    \colvecb{2}{\lambda'\mu}{\sqrt{n_t}\Pi_t \mu}, \begin{bmatrix}
        \frac{\lambda'1}{n} & 0 \\ 0 & \Pi_T \Sigma
    \end{bmatrix}
}
\]
which we again treat as a weighted least-squares problem for $\mu$.
\eqref{eq:wls_coef} follows
from applying the Sherman--Morrison identity in the case that $\Pi_T > 0$. } The estimator
for $\tau_n$ is $\hat\tau_n = \eta_n' S_n^\star$, with estimated asymptotic standard error
\[\hat\sigma^2_{\tau,n} = \eta_n'
\pr{\frac{\lambda_n \lambda'_n /
\sqrt{n}}{\lambda'_n1} + \frac{n_{T_n}}{n}
\hat\Sigma^{-1}_n \hat \Pi_{T_n, n} }^+ \eta_n. \]

Our main asymptotic result is that, uniformly over a class $\mathcal P$ in the sense of \citet{andrews2011generic}, $
\sqrt{n} \hat\sigma^{-1}_{\tau,n} (\eta_n'S_n^\star - \tau_n)  \dto Z
$
where $Z$ is conditionally standard Gaussian given (the limiting analogues of) $(T_n, \Pi_{1:T_n}, \eta_n)$.
Thus, $(1-\alpha)$ two-sided confidence intervals $
\mathrm{CS}_{n}(\alpha) \equiv \eta_n'S_n^\star \pm \Phi^{-1}(1-\alpha/2) \cdot
\hat\sigma^2_{\tau,n}
$
have exact asymptotic conditional size for the parameters $\tau_n$:\footnote{For
expositional clarity, the main text \eqref{eq:uniform_inference} presents the conditional
coverage as conditioning on $T$ and $\eta$. \Cref{thm:appendix_main} in the appendix
additionally shows conditional coverage given $\Pi_{1:T}$.} \[
\limsup_{n\to\infty}\sup_{P \in \mathcal P} \, \abs[\big]{P\pr{
    \tau_n \in \mathrm{CS}_{n}(\alpha)
 \mid T_n = T, \eta_n = \eta } - \alpha} P(T_n = T, \eta_n = \eta) = 0.
\numberthis
\label{eq:uniform_inference}
\]

Such a statement is predicated on assumptions about the class $\mathcal P$, the assignment
algorithm for generating $\Pi_{1:T_0,n}$, $T_n$, and $\eta_n(\cdot)$. We assume that
members of
$\mathcal P$ have second moments bounded away from $0$ and $\infty$ and uniformly bounded
fourth moments.
\begin{as}
\label{as:P_assumptions}
$\mathcal P$ is a class of distributions satisfying the following conditions indexed by
constants $C_1, C_2 > 0$: For all $P \in \mathcal P$, let $\sigma_k^2(P) = \var(X_i(k))$ be the variance of the $k$\th{} potential outcome \begin{enumerate}
    \item For all $k$, $\frac{1}{C_1} \le \sigma_k^2(P) \le C_1$.\footnote{The upper bound
     is redundant given item 2 in the assumption, but we will keep it for convenience.}
    \item $\E_P[\norm{X}^4] < C_2$ where $X \sim P$ is a vector of potential outcomes.
\end{enumerate}
\end{as}

This section restricts to $T_n, \eta_n, \Pi_{t+1, n}$ that depend on $X_{1:T_0}$ only
through the cumulative mean vector\footnote{Our proof in \cref
{sub:asymptotics-proof} considers weaker conditions.} \[ W_{tkn} = \sum_{s=1}^{t}
\frac{n_s \hat \Pi_{skn}}{\sum_{r=1}^{t} n_r \hat\Pi_{rkn}} X_{skn}.
\]
Note that this covers the Thompson sampling and $\varepsilon$-greedy examples discussed above.
Let \begin{align*} 
W_{tn} = \pr{\sum_{s=1}^{t} \Pi_{sn}}^{-1} \sum_{s=1}^t
\Pi_ {sn} X_{sn} \quad 
\Omega_{tn} = \hat \Sigma_{n} \pr{\sum_{s=1}^t \frac{n_t}{n} \hat \Pi_{sn}}^{-1} 
\end{align*}
be the analogues of $W_t, \Omega_t$ in \eqref{eq:w_omega}. We assume that $
\one(T_n > t+1) = \Xi_{t+1}\pr{
\sqrt{n} W_{tn}, \Omega_{tn} } $ for some known function $\Xi_{t}.$ Likewise, we assume
 that the assignment probability is generated through some function $\kappa_t$: \[
\Pi_{t+1,n} = \kappa_{t+1,n}\pr{X_{1:t,n}, \hat\Pi_{1:t,n}, \hat\Sigma_{tn}, n_{1:t}} =
\kappa_ {t+1} \pr{
    \sqrt{n} W_{tn}, \Omega_{tn}
}.
\numberthis
\label{eq:simplied_mechanism}
\]
We assume that both $\Xi_{tn}$ and $\kappa_{t+1, n}$ are location-invariant in $W$.
Moreover, we 
require these functions to be suitably continuous, so that they converge weakly when their
arguments converge weakly. All of the following assumptions are relative to some $\epsilon
> 0$. \Cref{sub:kappa-checks} describes some concrete algorithms that satisfy the
following assumptions.

\begin{defn}
\label{def:continuity}

Consider a function $g(w, \Omega)$, we say that $g$ is
\emph{location-invariant and adequately continuous} (LIAC) if the following is true:
\begin{enumerate}[wide]
    \item For all $c \in \R$, $g(w + c1_K , \Omega) = g(w, \Omega)$
    \item For every nonempty $J \subset [K]$,
every diagonal matrix $\Omega$ with entries within $[\epsilon, 1/\epsilon]$, Lebesgue-\emph{almost} every
$(w_j: j \in J) \in \R^{|J|}$,  and every sequence $\R^K \ni w_m \to w_\infty$ and
$\Omega_m \to \Omega$ where \[w_{\infty, j} =
\begin{cases}
    w_j, &\text{ if } j\in J\\
    -\infty & \text{ otherwise}.
\end{cases}\] and $\Omega_m$ are positive-definite diagonal matrices, we have that the
limit \[g(w_{\infty}, \Omega) = \lim_{m\to\infty} g(w_m, \Omega_m) \numberthis
\label{eq:adequately_continuous_statement}\] exists and depends only on $w_\infty, \Omega$.
\end{enumerate}
\end{defn} 

We remark on the continuity requirement \cref{def:continuity}(2). Quantities in the
adaptive experiment depend on $\sqrt{n} W_{tn}$ and $\Omega_{tn}$. We expect that $\sqrt
{n}W_{tn} - \sqrt{n}\mu$ converges in distribution to some continuous distribution, and
$\Omega_{tn}$ converges in probability to some matrix $\Omega$. We would like to impose
enough regularity so that $g(\sqrt{n} W_{tn}, \Omega_{tn})$ converges to a suitable
limit. By location invariance, letting $\bar \mu = \max_k
\mu_k$, we can replace $\sqrt{n}W_{tn}$ with $\sqrt{n}W_{tn} - \sqrt{n} \bar\mu  = (
\sqrt{n}
\mu - \sqrt{n}\bar\mu) + \sqrt{n}(W_{tn} - \mu)$. Given a suitable sequence of
data-generating processes $P_n$ with $\mu = \mu(P_n)$, (sub)sequences of $\sqrt{n}W_{tn} -
\sqrt{n}
\bar\mu$ have entries that either tend to a continuously distributed random variable
 (for coordinates where $\sqrt{n} (\mu_k-\bar\mu) = O(1)$) or diverge to $-\infty$. Thus,
 the continuity of $g$ needs to accommodate entries diverging to $-\infty$ as well. On
 the other hand, the requirement that the limit exists for \emph{almost} every $w_j$
 allows $g$ to be discontinuous on measure-zero sets, and accommodates examples such as $
  \varepsilon$-greedy. 

\begin{as}
\label{as:Xi_continuity}
For all $t \in \br{2,\ldots ,T_0-1}$, $\Xi_t(\cdot)$ is LIAC, and $\Xi_{t} \le \Xi_{t-1}$
almost surely. Moreover, $\Xi_1 = 1$ and $\Xi_{T_0} = 0$.
\end{as}

\begin{as}
\label{as:kappa_continuity}

For all $t \in \br{2,\ldots, T_0-1}$, \[\kappa_{t+1} = \underbrace{\Xi_{t}\Xi_{t+1}}_
{\one(T_n > t+1)} \kappa_{t+1,
(0)} + \underbrace{(\Xi_
{t} -
\Xi_{t+1})}_{\one(T_n = t+1)} \kappa_{t+1, (1)} + \underbrace{(1-\Xi_{t})}_{\one(T_n \le
t)}
\Pi_1.
\numberthis
\label{eq:kappa_decomposition}
\] and $\kappa_{t+1, (1)}, \kappa_{t+1, (0)}$ are LIAC functions of $\sqrt{n} W_{tn},
\Omega_{tn}$. Moreover, $\kappa_{t+1,
(0),k} (w, \Omega) > 0$  when $w_k > 0$, and $\Pi_1$ is fixed and its entries are bounded
below by $\epsilon$.
\end{as}

The expression \eqref{eq:kappa_decomposition} means that the assignment probability
algorithm $\kappa_ {t+1}$ takes the form of a contingency plan. If the experiment
continues, then the next-batch probabilities are prescribed by $\kappa_{t+1,
(0)}$. Otherwise, they are prescribed by $\kappa_{t+1, (1)}$. By convention, we say that
$\kappa_{t+1} = \Pi_1$ when the experiment has already stopped.

For the pseudoinverse $\pr{\frac{\lambda_n \lambda'_n }{\sqrt{n} \lambda'_n1} + \frac{n_T}{n}
\hat\Sigma^{-1}_n \hat \Pi_{Tn} }^+$ to converge weakly when its arguments do, we
additionally require that the rank of $\frac{\lambda_n \lambda'_n }{\sqrt{n} \lambda'_n1} + \frac{n_T}{n}
\hat\Sigma^{-1}_n \hat \Pi_{Tn}$ is well-behaved. A sufficient condition is to assume that
 the range of the assignment probabilities of the last batch excludes $
 (0, \epsilon)$. This can be implemented by \emph{pruning} away small probabilities and
 set them to zero, and redistributing the excess mass. Doing so ensures that the
 arm-level sample sizes in batch $T$ either diverges to infinity or is eventually zero.

\begin{as}
\label{as:kappa_pruning}
$\kappa_{t+1, (1)}$'s range excludes $(0, \epsilon)$ on every coordinate.
\end{as}

Lastly, we consider a simple form of $\eta_n$. We assume that $\eta_n$ takes finitely many
values, but they can be chosen with respect to some rich information.

\begin{as}
\label{as:eta_pruning}
$\eta_n$ takes finitely many values depending on $T_n$, the locations of the zero entries
in $\Pi_{T_n, n}$, and the ordering of $W_{T_n, n}$: Let $S_K$ be the set of permutations
of $K$ symbols,
\[\eta_n = \sum_{\varsigma \in S_K: W_{T_n, n, \varsigma(1)} < \cdots < W_{T_n, n, \varsigma(K)}
} \sum_{t=1}^{T_0} \sum_{E \subset [K]} 
    \one\pr{T_n = t, \br{k : \Pi_{T_n, n, k} = 0} = E} \eta_{\varsigma, t, E}.
\]
Moreover, $\norm{\eta_{\varsigma, t, E, k}} > \epsilon$ and $\eta_{\varsigma, t, E, k} =
0$ whenever $k \in E$.

\end{as}

\Cref{as:eta_pruning,as:kappa_continuity,as:Xi_continuity,as:kappa_pruning} are stronger
 than necessary. We state weaker conditions as
 \cref{as:eta_assumptions,as:kappa_cmt,as:T_assumptions,as:weighting_assumptions,as:pruning_general},
 and verify that the weaker conditions are implications of the stronger ones. Under these
 assumptions, we formally justify \eqref{eq:uniform_inference}. The following is a
 corollary of a more general result \cref{thm:appendix_main} in the appendix.

\begin{restatable}{theorem}{thmmaintext}
\label{thm:maintext}
Under \cref{as:P_assumptions,as:eta_pruning,as:kappa_continuity,as:Xi_continuity,as:kappa_pruning},
level-$(1-\alpha)$ two-sided confidence intervals $
\mathrm{CS}_{n}(\alpha) \equiv \eta_n'S_n^\star \pm \Phi^{-1}(1-\alpha/2) \cdot
\frac{\hat\sigma_{\tau,n}}{\sqrt{n}}
$
have exact conditional asymptotic size for the parameters $\tau_n$: For all $\eta, T$,  \[
\liminf_{n\to\infty}\inf_{P \in \mathcal P} \, \abs[\bigg]{P\pr{
    \tau_n \in \mathrm{CS}_{n}(\alpha)
 \mid \eta_n = \eta, T_n = T} - \alpha} P(\eta_n = \eta, T_n=T) = 0.
\]
\end{restatable}

\begin{rmksq}[Proof strategy and relation to \citet{niu2025assumption}]
\label{rmk:proof}
  The proof of \cref{thm:maintext} largely concurs with recent independent work by
\citet{niu2025assumption}, who show an analogue of \cref{thm:main_convergence} (Theorem
C.1 in an earlier draft\footnote{\url{https://arxiv.org/pdf/2309.12162v1}} of this paper)
in the two-batch, two-arm case, though for a broader class of statistics than we consider.

Both rely on the following idea. Let $Y_{tn} = \sqrt{n_t} \hat \Pi_{tn} (X_{tn} - \mu)$ be
a properly normalized statistic and let $Y_{t} = \sqrt{n_t}\Pi_{t}(X_t-\mu)$ be its
Gaussian counterpart. The key is to build up a joint convergence  \[ (Y_{1n}, \hat
\Pi_{1n}, Y_{2n}) \dto (Y_1, \Pi_1, Y_2), \quad Y_2 \mid Y_1 \sim \Norm(0, \Pi_2\Sigma)
\numberthis,
\label{eq:joint_convergence}
 \]
 from the convergence $ (Y_{1n}, \hat \Pi_{1n}) \dto (Y_1, \Pi_1) $, which is easily
 implied by the central limit theorem and continuous mapping theorem.  By repeatedly
 adding on statistics from later stages to this convergence, like
 \eqref{eq:joint_convergence}, we can inductively show that all relevant statistics
 converge to their Gaussian-model counterpart over $T$ batches.

 We prove \eqref{eq:joint_convergence} by defining an intermediate quantity $Y_2^* \mid
 \hat\Pi_{1n} \sim
 \Norm(0, \hat \Pi_{1n} \Sigma)$ and separately controlling the bounded Lipschitz distance
 (i) between the  laws of $(Y_{1n}, \hat \Pi_{1n}, Y_{2n}), (Y_{1n}, \hat \Pi_{1n}, Y_
 {2}^*)$ and (ii) between the laws of $(Y_{1n}, \hat \Pi_{1n}, Y_{2}^*), (Y_1, \Pi_1,
 Y_2)$. (i) is manageable by conditioning on $Y_{1n}, \hat\Pi_{1n}$ and applying a
 conditional central limit theorem for $Y_{2n}$, and (ii) is manageable through the
 convergence of $(Y_{1n}, \hat \Pi_{1n})$. This decomposition is essentially the same as
 the decomposition in Appendix J.3 of \citet{niu2025assumption}.
\end{rmksq}

 Notably, the asymptotics does not require truncation of $\Pi_{2:T-1,n}$ and instead
 solely requires a pruning step for $\Pi_{Tn}$, unlike existing work
\citep{zhang2020inference,niu2025assumption}. This is because the statistic \eqref
 {eq:real_wls_coef} can be written in terms of the scaled means $Y_t \propto \hat\Pi_{tn}
 (X_ {tn} - \mu)$ instead of $X_ {tkn}.$ The statistics
$Y_{tkn}$ admit a central limit theorem uniformly over $\Pi_{tkn} \in [0,1]$.

\section{General conditional inference} %
\label{sec:inference_in_the_gaussian_model}

If we have additional knowledge of the allocation algorithm as well as the inferential
target and stopping time, then we can design optimal conditional inference procedures that
condition on less information and are hence  more powerful. 

Let us first introduce a general recipe for constructing optimal
conditional procedures by analyzing the likelihood of the observed data. Suppose we wish
to condition on growing information sets $\mathcal F_{T-1} (X_ {1:T-1})$ such that
the experimental design $(\one(T\le t),
\Pi_{1:t}, \eta(X_1,\ldots, X_t; t))$ is measurable with respect to $\mathcal
F_{t-1}$. The law of the observed data, for a given stopping time $T'$, is an exponential
family:
\begin{align*}
p_\mu(X_{1:T} \mid T=T', \mathcal F_{T'-1}) &= \frac{1}{p(\mathcal F_{T'-1})}\prod_{t=1}^
{T'}
\frac{1}{\sqrt{(2\pi)^k \det(V_t)}} \exp\pr{ -\frac{n}{2}(X_t - \mu)' V_t^{-1} (X_t - \mu)
} \\ 
&= \exp\pr{
    n\mu' \sum_{t=1}^{T'} V_t^{-1} X_t
} h(X_{1:T'}) g(\mathcal F_{T'-1}, \mu) \numberthis \label{eq:likelihood}
\end{align*}
where $h(\cdot)$ does not depend on $\mu$. Thus, a sufficient statistic for $\mu$ with
respect to the law of $\pr{X_{1:T} \mid T, \mathcal F_{T-1}}$ is \[
S = n \sum_{t=1}^{T} V_t^{-1} X_{t}. 
\]
For any $\tau = \eta'\mu$, where without loss of generality $\norm{\eta} = 1$, given an
orthogonal matrix $Q$ whose first row is $\eta$ we can further partition \[S'\mu =
S'Q'Q\mu = (\eta'S)
(\eta'\mu) + (\eta_\perp S)'(\eta_\perp\mu) \equiv U\tau + U_\perp'\tau_\perp.\]

By the results in section 5.5 of \citet{pfanzagl2011parametric}, optimal inferences for $\tau$,
conditional on $\mathcal F_{T-1}$,\footnote{\citet{adusumilli2023optimal} studies
Neyman--Pearson-style testing with the likelihood \eqref{eq:likelihood}.} are based on the distribution of
$U
\mid
U_\perp,
\mathcal F_{T-1}$, which depends solely on $\tau$ and is stochastically increasing in
$\tau$. For instance, the optimal $\alpha$-quantile unbiased estimator for $\tau$, based
on observations
$(u, u_\perp, \mathcal F_{T-1}(X_{1:T-1})),$ can be computed by $\hat\tau_\alpha$ such that \[
\P_{\hat\tau_\alpha}\pr{U \le u \mid U_\perp=u_\perp, \mathcal F_{T-1}(X_{1:T-1})} =
\alpha.
\]
The optimal $(1-\alpha)$ equal-tailed confidence interval for $\tau$ is
$[\hat\tau_{\alpha/2}, \hat\tau_{1-\alpha/2}]$. Indeed, this general recipe based on
the likelihood can also be
used to show that the optimal conditional inference procedure in Thompson sampling corresponds
to the procedure based on \eqref{eq:asymp_variance}.

If choices of inferential target and stopping time can be expressed in
terms of linear inequalities in $X_ {1:T-1}$, then the above recipe is particularly
tractable. Like our results in \cref{sub:leftover}, the optimal conditional procedures
dominate using only the last batch.

\begin{rmksq}[Parametric assumptions]
\label{rmk:expofam}
In principle, this recipe for generating conditional inference procedures is not specific
to Gaussian likelihood for $X_{t} \mid \mathcal F_{t-1}$. One could similarly allow for
cases where $X_{t} \mid \mathcal F_{t-1}$ is an exponential family \[
  p(X_{t} \mid \mathcal F_{t-1}) \propto \exp\pr{\mu' S(\mathcal F_t) + \zeta(\mathcal
  F_{t-1})' S_2(X_t)}
\](The Gaussian case corresponds to $S(\mathcal F_t) = nV_t^{-1} X_t$ and $\zeta
 (\mathcal F_{t-1})' S_2(X_t) = X_t'V_t^{-1} X_t$, which is linear in quadratic functions
 of $X_t$.) Conditioning on $\zeta (\mathcal F_
 {t-1})$ ensures that the statistical model has an exponential family structure rather
 than a curved exponential family structure.
  The exponential family structure is then exploited for optimal inference (e.g., optimal
  quantile-unbiased estimators or uniformly most powerful unbiased tests). However, a
  non-Gaussian exponential family structure for batchwise statistics would be hard to
  motivate in practice, since practitioners are typically interested in batchwise means. 
\end{rmksq}

\begin{rmksq}[Sufficient statistics] 

Notably, this procedure is based on $(U, U_\perp)$,
which are sufficient for $\mu$ with respect to the law of $X_{1:T} \mid  \mathcal F_{T-1},
T$. In contrast, we notice that the procedure in \citet{zhang2020inference} is not based
on the sufficient statistic. \Cref{asub:zhang_bet_proof} shows that it can be written as a
linear function of the sufficient statistics and an independent Gaussian noise term. The
natural estimator for $\tau$ from \citet{zhang2020inference} (the mid point of the
confidence interval) is thus dominated by its conditional expectation given the sufficient
statistics (i.e. Rao--Blackwellization). Moreover, we observe that the confidence
intervals in \citet{zhang2020inference} is also not bet-proof
\citep{muller2016credibility}, in the sense that there exists an event of the data with
positive probability on
which the confidence interval undercovers uniformly over $\mu \in \R^K$.
\end{rmksq}

\subsection{Conditional inference for polyhedral algorithms} %
\label{sub:conditional_inference_in_polyhedral_algorithms}

We consider a class of algorithms such that the support of $\mathcal F_
{T-1} \equiv (\Pi_{2:T}, T,
\eta)$ is finite. For every $(\pi, t_0, h)$ in its support, there exist a
conformable matrix $A(\mathcal F_{T-1})$ and vector $b(\mathcal F_{T-1})$ such that for
all $\mu$, \begin{align*}
&\P_\mu\bk{ (\Pi_{2:T}, T,
\eta) = (\pi, t_0, h) \cap A(\mathcal F_{t_0-1}) \bX_{1:t} \le b(\mathcal F_{t_0-1})}
\\
&=\P_\mu
\bk{ (\Pi_{2:T}, T,
\eta) = (\pi, t_0, h)} =\P_\mu\bk{ A(\mathcal F_{t_0-1}) \bX_{1:t_0} \le b(\mathcal F_
{t_0-1})}.
\end{align*}
As an example, suppose $T, \eta$ are fixed, but at each batch $t$, the assignment
algorithm $\epsilon$-greedy
(\cref{ex:egreedy}) favors the arm $k$ with the highest cumulative arm mean
$W_{(t-1),k}$. Each realization of $\pi_{2:T}$ then corresponds to a sequence of
$\epsilon$-greedy winners $(k_1,\ldots, k_{T-1})$. Thus, up to a measure-zero event of
ties, the sequence of winners $(k_1,\ldots, k_{T-1})$ is equivalent to the inequalities \[
\br{W_{t,k_t} \ge W_{t, \ell} : t = 1,\ldots, T-1; \ell = 1,\ldots, K}.
\]
Since $W_{t, \ell}$ is linear in $X_{1},\ldots, X_t$, we may represent these inequalities
as the polyhedron $A(\mathcal F_{T-1}) \bX_{1:T} \le b(\mathcal F_{T-1})$.

Following our conditional inference strategy, optimal inference for $\tau = \eta'\mu$
depends on the distribution \[ (U \mid U_\perp, \mathcal F_{T-1}) \sim (U \mid
U_\perp,
A(\mathcal F_{T-1}) \bX_{1:T} \le b(\mathcal F_{T-1}))
\]
where $U = \eta'S$, $U_\perp = \eta_\perp S$, and $S_k = \sum_{t=1}^T \frac{n_t \Pi_{tk}}
{\sigma_k^2} X_{tk}$. The following theorem makes explicit this distribution, and, in
particular, its dependence on $\mu$ only through $\tau$. 

\begin{restatable}{theorem}{thmpolyhedral}
\label{thm:polyhedral}
    Assume, without loss of generality, that $\norm{\eta} = 1$. Define $R = n\sum_{t=1}^T
    V_t^
    {-1}$. Let $G = 1_T' \otimes I_K = [I_K, \ldots, I_K]$ be a $K \times TK$ matrix.
    Let $G_\perp$ be a $(T-1)K \times TK$ matrix whose rows are unit vectors that are
    orthogonal to each other and to the rows of $G$.
    Let $\bV = \frac{1}{n}\diag(V_1, \ldots, V_T)$
    be a $TK \times TK$ diagonal matrix. Let $\tilde G, \tilde G_\perp$ partition the
    inverse of the following matrix: \[
\colvecb{2}{G \bV^{-1}}{G_\perp}^{-1} = \bk{\tilde G, \tilde G_\perp}.
    \] where $\tilde G$ is $TK \times K$. Let $ c = \frac{1}{\eta' R^
     {-1} \eta}. $ Then \[
\pr{\frac{U}{c} \mid U_\perp=u_\perp, A(\mathcal F_{T-1}) \bX_{1:T} \le b(\mathcal F_{T-1})} \sim \pr{Z_1 \mid M(\mathcal F_{T-1}) Z \le m(\mathcal F_{T-1}) }
    \]
    where the right-hand side follows the law \[Z = (Z_1, Z_2')' \sim \Norm\pr{
\colvecb{2}{\tau + k'u_\perp}{0}, \begin{bmatrix}
    \frac{1}{c} & 0 \\
    0 & G_\perp \bV G_\perp'
\end{bmatrix}
    } \quad k' = \frac{1}{c}\eta' G \bV^{-1} G' \eta_\perp' (\eta_\perp G\bV^{-1}G'
    \eta_\perp')^
    {-1}\]
The constraints are defined by \[
M(\mathcal F_{T-1}) = \bk{cA(\mathcal F_{T-1}) \tilde G \eta, A(\mathcal F_{T-1}) \tilde
G_\perp} \quad m(\mathcal F_{T-1})  = b(\mathcal F_{T-1})-A(\mathcal F_{T-1}) \tilde G 
(I-\eta\eta') S.
\]
\end{restatable}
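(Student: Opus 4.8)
Let me think through what needs to be proven here. We have the data $X_{1:T}$ which, conditional on $T = t_0$ and $\mathcal{F}_{T-1}$, is Gaussian. We want to compute the distribution of $U/c$ conditional on $U_\perp = u_\perp$ and a polyhedral event $A \mathbf{X}_{1:T} \le b$. The key facts: $S$ is sufficient for $\mu$ (from the likelihood factorization in \eqref{eq:likelihood}), and $U = \eta'S$, $U_\perp = \eta_\perp S$.

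The approach is to rewrite everything in terms of a cleaner Gaussian vector. First, I'd establish the joint distribution of the full vector $\mathbf{X}_{1:T}$ under \eqref{eq:gaussian} conditional on $(T, \mathcal{F}_{T-1})$: it is $\Norm(1_T \otimes \mu, \mathbf{V})$ with $\mathbf{V} = \frac{1}{n}\diag(V_1, \ldots, V_T)$. The matrix $G = 1_T' \otimes I_K$ satisfies $G(1_T \otimes \mu) = T\mu$... wait, actually $G \mathbf{X}_{1:T} = \sum_t X_t$, and more relevantly $S = n\sum_t V_t^{-1} X_t = (G \mathbf{V}^{-1}) \mathbf{X}_{1:T}$ since $\mathbf{V}^{-1} = n\,\diag(V_1^{-1}, \ldots, V_T^{-1})$ so $G\mathbf{V}^{-1}\mathbf{X}_{1:T} = n\sum_t V_t^{-1} X_t = S$. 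Good. So $S = G\mathbf{V}^{-1}\mathbf{X}_{1:T}$, which is the sufficient statistic.

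The plan is then:

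\begin{enumerate}
\item Change coordinates via the invertible map $\mathbf{X}_{1:T} \mapsto (S, \mathbf{X}_{1:T}^\perp)$ where $S = G\mathbf{V}^{-1}\mathbf{X}_{1:T}$ and $\mathbf{X}_{1:T}^\perp = G_\perp \mathbf{X}_{1:T}$. The inverse is $\mathbf{X}_{1:T} = \tilde{G} S + \tilde{G}_\perp \mathbf{X}_{1:T}^\perp$ by definition of $[\tilde G, \tilde G_\perp]$ as the inverse of $[G\mathbf{V}^{-1}; G_\perp]'$ — that is, $[\,(G\mathbf{V}^{-1})'\,;\,G_\perp'\,]$ stacked. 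Substitute this into the polyhedral constraint $A\mathbf{X}_{1:T} \le b$ to get $A\tilde G S + A \tilde G_\perp \mathbf{X}_{1:T}^\perp \le b$. Then decompose $S = \eta\eta'S + (I - \eta\eta')S = \eta U + (I-\eta\eta')S$, giving $A\tilde G \eta\, U + A\tilde G_\perp \mathbf{X}_{1:T}^\perp \le b - A\tilde G(I-\eta\eta')S$. Since we condition on $\mathcal{F}_{T-1}$ and on $U_\perp$, one must check that $(I - \eta\eta')S$ is a deterministic function of the conditioning information — this is where care is needed: $(I-\eta\eta')S$ lies in the span of the rows of $\eta_\perp$, so it is determined by $U_\perp = \eta_\perp S$ together with $\eta$ (which is $\mathcal{F}_{T-1}$-measurable). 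Writing $c U = $ (rescaled) and matching $cA\tilde G\eta$ as the coefficient on $U/c$ yields $M$ and $m$ as stated.

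\item Identify $Z = (Z_1, Z_2')'$ with $Z_1 = U/c$ (suitably centered) and $Z_2 = \mathbf{X}_{1:T}^\perp = G_\perp\mathbf{X}_{1:T}$. Compute the joint law of $(U, G_\perp\mathbf{X}_{1:T})$: both are linear in the Gaussian $\mathbf{X}_{1:T} \sim \Norm(1_T\otimes\mu, \mathbf{V})$. We get $\Var(G_\perp\mathbf{X}_{1:T}) = G_\perp \mathbf{V} G_\perp'$ and $\E[G_\perp \mathbf{X}_{1:T}] = G_\perp(1_T\otimes\mu) = 0$ because the rows of $G_\perp$ are orthogonal to the rows of $G = 1_T'\otimes I_K$, and $1_T\otimes\mu$ lies in the row space of $G$ (indeed $1_T\otimes\mu = G'\mu$, wait — $G'\mu = (1_T\otimes I_K)\mu = 1_T\otimes\mu$, yes). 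So $Z_2$ is mean-zero. Then condition $U$ (equivalently $Z_1$) on $Z_2 = G_\perp\mathbf{X}_{1:T}$: a standard Gaussian conditioning computation gives the conditional mean as an affine function of $u_\perp$ with slope $k'$ and conditional variance $\frac{1}{c}$ (using $\frac{1}{c} = \eta'R^{-1}\eta$ and the identity $\Var(\eta'S) = \eta'\Var(S)\eta$, where $\Var(S) = G\mathbf{V}^{-1}\mathbf{V}\mathbf{V}^{-1}G' = G\mathbf{V}^{-1}G' = R$ — checking $G\mathbf{V}^{-1}G' = n\sum_t V_t^{-1} = R$). The slope $k'$ comes out as $\frac{1}{c}\eta' G\mathbf{V}^{-1}G'\eta_\perp'(\eta_\perp G\mathbf{V}^{-1}G'\eta_\perp')^{-1}$, matching the statement after noting $\Cov(\eta'S, \eta_\perp S) = \eta' R \eta_\perp'$ and that $\eta_\perp S$ is the relevant part of $Z_2$.

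\item Combine: the conditional law of $U/c$ given $U_\perp = u_\perp$, $\mathcal{F}_{T-1}$ and the polyhedral event is exactly the conditional law of $Z_1$ given $Z_2$ lands in the half-space region $M Z \le m$, which is what the theorem asserts. The dependence on $\mu$ is only through $\tau = \eta'\mu$: the mean of $Z_1$ is $\eta'\mu + k'u_\perp = \tau + k'u_\perp$, and no other $\mu$-dependence survives because $(I-\eta\eta')S$ enters only through the conditioning event, and conditionally on $U_\perp$ and $\mathcal{F}_{T-1}$ it is a fixed vector (the nuisance $\tau_\perp = \eta_\perp\mu$ is absorbed).
\end{enumerate}

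The main obstacle, I expect, is step 1 — specifically, verifying carefully that $(I-\eta\eta')S$ is measurable with respect to the conditioning $\sigma$-algebra $\sigma(U_\perp, \mathcal{F}_{T-1})$ so that it can legitimately be treated as a constant in $m(\mathcal{F}_{T-1})$, and simultaneously that the change of variables $\mathbf{X}_{1:T} = \tilde G S + \tilde G_\perp \mathbf{X}_{1:T}^\perp$ is valid (i.e. that $[\,(G\mathbf{V}^{-1})'\,;\,G_\perp'\,]$ is invertible, which follows from $G_\perp$ having full row rank $(T-1)K$ and its rows being orthogonal complements to the $K$-dimensional row space of $G$, while $G\mathbf{V}^{-1}$ has the same row space as $G$ since $\mathbf{V}^{-1}$ is positive definite). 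Everything else is routine Gaussian linear algebra: propagating means and covariances through the linear maps and applying the conditional-normal formula. I would also remark that $U/c = c^{-1}\eta'S$ is just a convenient normalization making $\Var(U/c \mid \ldots) = 1/c$ rather than $c$, so that the stated $Z$-distribution has the clean form with $1/c$ on the diagonal.
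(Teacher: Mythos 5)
Your proposal is correct and follows essentially the same route as the paper's proof: write $S = G\bV^{-1}\bX_{1:T}$, treat the conditional law as the Gaussian $\Norm(1_T\otimes\mu,\bV)$ truncated to the polyhedron, change variables via $\bX_{1:T} = \tilde G S + \tilde G_\perp Z_2$ with $S = cZ_1\eta + \eta_\perp' u_\perp$ to obtain $M Z \le m$, and then do the Gaussian conditioning of $U$ on $U_\perp$ with $Z_2$ independent of $S$ (up to a couple of harmless notational slips, e.g.\ saying you condition on $Z_2$ when you mean $U_\perp$). The one step you defer to ``routine linear algebra''---that the conditional variance equals $c = 1/(\eta'R^{-1}\eta)$ and that the conditional mean's dependence on $\mu$ collapses to $c\tau$---is precisely the projection/Schur-complement computation the paper writes out explicitly, and it is indeed standard, so nothing essential is missing.
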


At any realization of the data $X_{1:T}$, the constraints and parameters \[(m(\mathcal F_
{T-1}) , M(\mathcal F_{T-1}), c, G_\perp \bV G_\perp')\] are functions of known
quantities. The law $(Z_1 \mid M(\mathcal F_ {T-1}) Z \le m(\mathcal F_{T-1}) )$ depends
only on the parameter of interest $\tau$. 

For testing the point null hypothesis $H_0: \tau = \tau_0$, we can draw from the
corresponding distribution $(Z_1 \mid M(\mathcal F_ {T-1}) Z \le m (\mathcal F_{T-1}) )$
induced by the null value $\tau_0$. The observed statistic is $z_1 = U/c$. Valid tests are
then constructed by comparing $z_1$ to its distribution under the null. A
level-$(1-\alpha)$ equal-tailed test, for instance, rejects $H_0$ when $z_1$ falls in the
lower $\alpha/2$ or the upper $\alpha/2$ quantiles of the distribution $(Z_1 \mid
M(\mathcal F_ {T-1}) Z \le m (\mathcal F_{T-1}) )$ under $\tau_0$.

In terms of computation, we can efficiently draw from $(Z \mid M(\mathcal F_{T-1}) Z \le
m(\mathcal F_{T-1}) )$ via Gibbs sampling \citep{taylor2016restrictedmvn},\footnote{We
thank \blind{[anonymous] }for this suggestion.} since the
conditional distribution of each coordinate of $Z$ is a truncated Gaussian. Confidence
intervals can be constructed by inverting tests of $H_0 : \tau = \tau_0$ for a range of
$\tau_0$. Computation of confidence intervals does not require drawing the distribution of
$Z$ for every candidate value of $\tau_0$. Note that, since the likelihood ratio between
two candidate values $(\tau_1, \tau_0)$ is known,
\[
\frac{p_{\tau_1}(z_1 \mid M(\mathcal F_{T-1}) Z \le m(\mathcal F_{T-1}) )}{p_{\tau_0}(z_1 \mid M(\mathcal F_{T-1}) Z \le m(\mathcal F_{T-1}) )} = \exp\pr{
    c(\tau_1 - \tau_0) z_1
},
\]
samples from $p_{\tau_0}$ can be reweighted to obtain estimators for quantities under $p_
{\tau_1}$. As a result, inference based on \cref{thm:polyhedral} is computationally
efficient, at least for moderately-sized $(T,K)$.

\section{Simulation evidence} %
\label{sec:simulation_evidence}

\begin{table}[tbh]
\begin{center}
Fixed Target 

(inference on $\mu_3$)

\begin{tabular}{lrrr}
\toprule
 & Rejection rate & Average length & Median length relative to last \\
\midrule
Leftover & 0.052 & 0.582 & 0.890 \\
ZJM & 0.049 & 0.292 & 0.489 \\
Last-only & 0.048 & 0.690 & 1.000 \\
Anytime-valid & 0.004 & 0.692 & 0.954 \\
\bottomrule
\end{tabular}
\begin{proof}[Notes]
10,000 replications. Nominal 5\% test. ``Median length relative to last'' takes the
median of the ratio of confidence interval lengths relative to last-only, as opposed to
the ratio of median lengths.
\end{proof}
\end{center}

\vspace{1em}

\begin{center}
Adaptive Target 

(inference on arm with highest sample mean in the first $T-1$ batches)

\begin{tabular}{lrrr}
\toprule
 & Rejection rate & Average length & Median length relative to last \\
\midrule
Leftover & 0.050 & 0.316 & 0.917 \\
ZJM & 0.071 & 0.203 & 0.569 \\
Last-only & 0.049 & 0.353 & 1.000 \\
Anytime-valid & 0.006 & 0.418 & 1.167 \\
\bottomrule
\end{tabular}

\begin{proof}[Notes]
10,000 replications. Nominal 5\% test. ``Median length relative to last'' takes the
median of the ratio of confidence interval lengths relative to last-only, as opposed to
the ratio of median lengths.
\end{proof}
\end{center}
\caption{Thompson sampling experiment}
\label{tab:ts}
\end{table}

We consider two simulated experiments. Both experiments have $T = 4, K = 3, \mu =
[0,0,0]'$, $n_1 = n_2 = n_3 = n_4 = 200$.  To demonstrate asymptotic validity of our
inference procedures, we take the individual outcomes to be i.i.d. Rademacher random
variables and we plug in estimated versions of $\Sigma$ in the following empirical
exercises. We consider a Thompson sampling experiment and an $\varepsilon$-greedy
experiment.\footnote{The Thompson sampling algorithm \emph{prunes} the last-batch assignment
probabilities at $0.01$ (See \cref{ex:pruned_ts}), which is needed for our asymptotic
results in \cref{sub:asymptotics}. The $\varepsilon$-greedy algorithm
chooses
$\varepsilon = 0.1$. }

\begin{figure}[tb]
    \centering
    \includegraphics{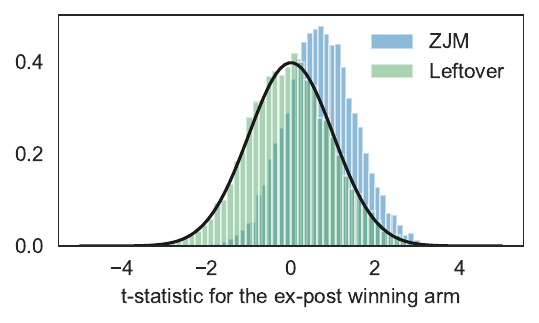}
    \caption{Distribution of $t$-statistics for adaptive target in the Thompson sampling
    experiment}
    \label{fig:t_stat}
\end{figure}

\begin{table}
\begin{center}
Fixed Target 

(inference on $\mu_3$)

\begin{tabular}{lrrr}
\toprule
 & Rejection rate & Average length & Median length relative to last \\
\midrule
Leftover & 0.051 & 0.561 & 0.888 \\
ZJM & 0.053 & 0.286 & 0.430 \\
Last-only & 0.056 & 0.690 & 1.000 \\
Polyhedral & 0.053 & 0.395 & 0.622 \\
Anytime-valid & 0.004 & 0.493 & 0.663 \\
\bottomrule
\end{tabular}
\end{center}
\vspace{1em}

\begin{center}
Adaptive Target 

(inference on arm with highest sample mean in the first $T-1$ batches)

\begin{tabular}{lrrr}
\toprule
 & Rejection rate & Average length & Median length relative to last \\
\midrule
Leftover & 0.050 & 0.279 & 0.951 \\
ZJM & 0.070 & 0.203 & 0.655 \\
Last-only & 0.051 & 0.310 & 1.000 \\
Polyhedral & 0.050 & 0.246 & 0.792 \\
Anytime-valid & 0.005 & 0.440 & 1.365 \\
\bottomrule
\end{tabular}
\end{center}

\begin{proof}[Notes]
10,000 replications. Nominal 5\% test. ``Median length relative to last'' takes the
median of the ratio of confidence interval lengths relative to last-only, as opposed to
the ratio of median lengths.
\end{proof}
\caption{$\varepsilon$-greedy experiment}
\label{tab:eg}
\end{table}

\begin{figure}[tb]
    \centering
    \includegraphics{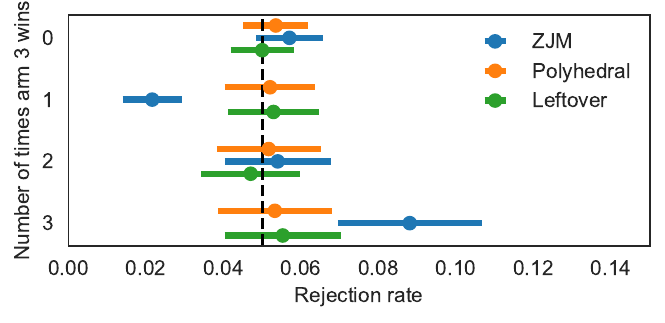}

\begin{proof}[Notes]
    Simultaneous 95\% confidence intervals for coverage shown.
\end{proof}

    \caption{Conditional coverage in $\varepsilon$-greedy experiment}
    \label{fig:conditional_coverage_eg}
\end{figure}

For Thompson sampling, \cref{tab:ts} displays coverage and length of different confidence
intervals. In \cref{tab:ts}, 
\begin{itemize}
  \item Leftover is the
procedure in \cref{sub:leftover} (\cref{thm:maintext})
\item ZJM is the procedure in
\citet{zhang2020inference},\footnote{That is, we use the estimated counterpart of the
following pivotal quantity \[ Z_{\text{ZJM}} = \sum_{t=1}^T V_{t}^{-1/2} (X_t - \mu) \sim
\Norm(0, T I_K).
\]}
\item Last-only is the procedure that only uses the last batch
\item Anytime-valid applies the procedure in \citet{waudby2024estimating} that treats the
 inverse propensity-weighted outcome $Y_{ik}^* \equiv D_{ik} Y_{i} /
\Pi_{tk}$, for a unit $i$ in batch $t$, as a sequence of bounded random variables with
 mean $\mu_k$.\footnote{\citet{waudby2024estimating} provides a procedure that takes in a
 sequence of bounded random variables with mean $\mu$ and produces a confidence sequence
 for $\mu$. Applying this procedure to the sequence $Y_{ik}^* \equiv D_{ik} Y_{i} /
\Pi_{tk}$, $i=1,\ldots,n$, yields a confidence sequence for $\mu_k$. We use the default
 implementation of \texttt{betting\_cs} in the
 \texttt{confseq} Python package \citep{confseq}.} 
\end{itemize} Among these procedures, ZJM and Anytime-valid are not designed to control
 Type I error for adaptive targets. The confidence intervals are truncated to the
 parameter space $[-1,1]$. 

We compare the
procedures over two setups, fixed target and adaptive target. In terms of rejection rates,
the conditional procedures control size for both setups, and ZJM does not control size for
the adaptive setup (\cref{fig:t_stat} plots the distribution of $t$-statistics in the
adaptive setup). Anytime-valid inference \citep{waudby2024estimating} appears to
underreject for both regimes. 

 In terms of length, the improvement in median length from using the leftover information
 is about 10\% for this particular setup of the experiment, relative to only using the
 last batch, whereas the ZJM interval is 40--50\% shorter than only using the last batch.
 This confirms that the improvement that the additional statistic $L$ provides relative
 to last-batch-only is mild but non-trivial (a 10\% improvement in length is comparable
 to a 20\% increase in sample size). On average, we do sacrifice confidence interval
 length to maintain conditional validity, relative to the unconditionally valid procedure
 ZJM. On the other hand, since our results are asymptotic and leverage the batch
 structure, they generally yield shorter intervals than finite-sample inference results
 from the anytime-valid inference literature. 

Similarly, we additionally compare the behavior of the Polyhedral inference procedure
(\cref{thm:polyhedral}) for an $\varepsilon$-greedy experiment in \cref{tab:eg}. For
inference on a fixed target, all procedures achieve their nominal size. The Polyhedral
procedure generates 40\% shorter confidence intervals  (measured in terms of median
length) than Last-only, which is more than double the improvement of Leftover. For
inference on the adaptive target, we again see that the conditional procedures maintain
nominal size (at the cost of greater length), whereas ZJM does not.

Lastly, we plot conditional behavior of these procedures in
\cref{fig:conditional_coverage_eg}, where we condition on the number of times the
inference target, arm 3, is the $\varepsilon$-greedy winner. We find that the conditional
procedures indeed control conditional size, whereas ZJM over-rejects when arm 3 (for whose
mean we perform inference) wins most and least often, and compensates for the
over-rejection by under-rejecting on other sequences of $\varepsilon$-greedy winners.

\section{Conclusion} %
\label{sec:conclusion}

This paper investigates inference conditional on the experimental design in batched
adaptive experiments, and explores the potential of improving upon inference procedures
that only use the last batch. For location-invariant experimental designs, we find there
is a scalar statistic, beyond the last-batch result, which is left over after
conditioning.
Using this additional statistic provides a free lunch improvement for statistical
inference, relative to using the last batch alone. For polyhedral experimental designs, we
characterize optimal conditional inference procedures and demonstrate their computational
tractability.

\bibliographystyle{aer}
\bibliography{main.bib}

@article{waudby2024time,
  title={Time-uniform central limit theory and asymptotic confidence sequences},
  author={Waudby-Smith, Ian and Arbour, David and Sinha, Ritwik and Kennedy, Edward H and Ramdas, Aaditya},
  journal={The Annals of Statistics},
  volume={52},
  number={6},
  pages={2613--2640},
  year={2024},
  publisher={Institute of Mathematical Statistics}
}

@article{niu2025assumption,
  title={Assumption-lean weak limits and tests for two-stage adaptive experiments},
  author={Niu, Ziang and Ren, Zhimei},
  journal={arXiv preprint arXiv:2505.10747},
  year={2025}
}

@Misc{confseq,
  author = {Howard, Steven and Waudby-Smith, Ian and Ramdas, Aaditya},
  title = {{ConfSeq}: software for confidence sequences and uniform boundaries},
  year = {2021},
  url = "https://github.com/gostevehoward/confseq"
}

@article{waudby2024estimating,
  title={Estimating means of bounded random variables by betting},
  author={Waudby-Smith, Ian and Ramdas, Aaditya},
  journal={Journal of the Royal Statistical Society Series B: Statistical Methodology},
  volume={86},
  number={1},
  pages={1--27},
  year={2024},
  publisher={Oxford University Press US}
}

@article{bibaut2024demistifying,
  title={Demistifying Inference after Adaptive Experiments},
  author={Bibaut, Aur{\'e}lien and Kallus, Nathan},
  journal={arXiv preprint arXiv:2405.01281},
  year={2024}
}

@techreport{andrews2019inference,
  title={Inference on winners},
  author={Andrews, Isaiah and Kitagawa, Toru and McCloskey, Adam},
  year={2024},
  journal={Quarterly Journal of Economics}
}

@article{adusumilli2023optimal,
  title={Optimal tests following sequential experiments},
  author={Adusumilli, Karun},
  journal={arXiv preprint arXiv:2305.00403},
  year={2023}
}

@article{FST15,
	author = {William Fithian and Dennis Sun and Jonathan Taylor},
	date-added = {2023-08-14 14:09:33 -0400},
	date-modified = {2023-08-14 14:09:33 -0400},
	journal = {arXiv},
	owner = {toruk},
	timestamp = {2016.10.05},
	title = {Optimal Inference After Model Selection},
	year = {2017}}

@article{HZZBGY20,
	author = {Yanjun Han and Zhengqing Zhou and Zhengyuan Zhou and Jose Blanchet and Peter W. Glynn and Yinyu Ye},
	journal = {arXiv},
	title = {Sequential Batch Learning in Finite-Action Linear Contextual
Bandits},
	year = {2020}}

@article{CGKQST22,
	author = {Stefano Caria and Grant Gordon and Maximilian Kasy and Simon Quinn and Soha Shami and Alexander Teytelboym},
	journal = {Journal of the European Economic Association},
	title = {An Adaptive Targeted Field Experiment: Job Search Assistance for Refugees in Jordan},
	year = {Forthcoming}}

@article{SBF17,
	author = {Eric M. Schwartz and Eric T. Bradlow and Peter S. Fader},
	journal = {Marketing Science},
        number = {4},
	pages = {500-522},
        volume = {36},
	title = {Customer Acquisition via Display Advertising UsingMulti-Armed Bandit Experiments},
	year = {2017}}

@article{Rafferty_Ying_Williams_2019, 
title={Statistical Consequences of using Multi-armed Bandits to Conduct Adaptive Educational Experiments},
volume={11},
number={1},
journal={Journal of Educational Data Mining},
author={Rafferty, Anna and Ying, Huiji and Williams, Joseph},
year={2019}, month={Jun.},
pages={47–79} }

@article{Berk2013,
	author = {Richard Berk and Lawrence Brown and Andreas Buja and Kai Zhang and Linda Zhao},
	date-added = {2023-08-14 14:05:15 -0400},
	date-modified = {2023-08-14 14:05:15 -0400},
	journal = {Annals of Statistics},
	number = {2},
	pages = {802-831},
	title = {Valid post-selection inference},
	volume = {41},
	year = {2013}}

@article{ramdasetal2023anytime,
	author = {Aaditya Ramdas and Peter Gr{\"u}nwald and Vladimir Vovk and Glenn Shafer},
	date-added = {2023-08-14 13:56:17 -0400},
	date-modified = {2023-08-14 14:00:16 -0400},
	journal = {Statistical Science},
	title = {Game-theoretic statistics and safe anytime-valid inference},
	year = {2023}
}

@article{muller2016credibility,
	author = {M{\"u}ller, Ulrich K and Norets, Andriy},
	journal = {Econometrica},
	number = {6},
	pages = {2183--2213},
	publisher = {Wiley Online Library},
	title = {Credibility of confidence sets in nonstandard econometric problems},
	volume = {84},
	year = {2016}}

@misc{hirano2023asymptotic,
	archiveprefix = {arXiv},
	author = {Keisuke Hirano and Jack R. Porter},
	eprint = {2302.03117},
	primaryclass = {econ.EM},
	title = {Asymptotic Representations for Sequential Decisions, Adaptive Experiments, and Batched Bandits},
	year = {2023}}

@book{pfanzagl2011parametric,
	author = {Pfanzagl, Johann},
	publisher = {Walter de Gruyter},
	title = {Parametric statistical theory},
	year = {1994}}

@article{taylor2016restrictedmvn,
	author = {Taylor, J and Benjamini, Y},
	journal = {R package version},
	title = {RestrictedMVN: multivariate normal restricted by affine constraints},
	volume = {1},
	year = {2016}}

@article{norets2013surjectivity,
	author = {Norets, Andriy and Takahashi, Satoru},
	journal = {Quantitative Economics},
	number = {1},
	pages = {149--155},
	publisher = {Wiley Online Library},
	title = {On the surjectivity of the mapping between utilities and choice probabilities},
	volume = {4},
	year = {2013}}

@article{hotz1993conditional,
	author = {Hotz, V Joseph and Miller, Robert A},
	journal = {The Review of Economic Studies},
	number = {3},
	pages = {497--529},
	publisher = {Wiley-Blackwell},
	title = {Conditional choice probabilities and the estimation of dynamic models},
	volume = {60},
	year = {1993}}

@article{fithian2014optimal,
	author = {Fithian, William and Sun, Dennis and Taylor, Jonathan},
	journal = {arXiv preprint arXiv:1410.2597},
	title = {Optimal inference after model selection},
	year = {2014}}

@article{hadad2021confidence,
	author = {Hadad, Vitor and Hirshberg, David A and Zhan, Ruohan and Wager, Stefan and Athey, Susan},
	journal = {Proceedings of the National Academy of Sciences},
	number = {15},
	pages = {e2014602118},
	publisher = {National Acad Sciences},
	title = {Confidence intervals for policy evaluation in adaptive experiments},
	volume = {118},
	year = {2021}}

@article{andrews2011generic,
	author = {Andrews, Donald WK and Cheng, Xu and Guggenberger, Patrik},
	publisher = {Cowles Foundation Discussion Paper},
	title = {Generic results for establishing the asymptotic size of confidence sets and tests},
	year = {2011}}

@article{zhang2020inference,
	author = {Zhang, Kelly and Janson, Lucas and Murphy, Susan},
	journal = {Advances in Neural Information Processing Systems},
	pages = {9818--9829},
	title = {Inference for batched bandits},
	volume = {33},
	year = {2020}}

@article{rakovcevic1997continuity,
	author = {Rako{\v{c}}evi{\'c}, Vladimir},
	journal = {Matematichki Vesnik},
	number = {3-4},
	pages = {163--172},
	publisher = {Drushtvo Matematichara Srbije-SR Jugoslavija},
	title = {On continuity of the Moore-Penrose and Drazin inverses.},
	volume = {49},
	year = {1997}}

@book{van1996weak,
	author = {van der Vaart, Aad W and Wellner, Jon},
	publisher = {Springer Science \& Business Media},
	title = {Weak convergence and empirical processes: with applications to statistics},
	year = {1996}}

@book{vershynin2010introduction,
	author = {Vershynin, Roman},
	publisher = {Cambridge university press},
	title = {High-dimensional probability: An introduction with applications in data science},
	volume = {47},
	year = {2018}}

\appendix

\numberwithin{equation}{section}

\newpage

\begin{center}
\textbf{Appendix for ``Optimal Conditional Inference in Adaptive Experiments''}

\today
\end{center}

\DoToC

\vspace{2em}

\section{Proofs for results in the main text}
\label{asec:proofs_maintext}
\subsection{Improvability}

\lemmaimprovability*

\begin{proof}
    Suppose not. Then there exists some event $E$ and some $\eta_0$ where $\P(X_{1:T-1}
    \in E) > 0 $ and \[
        \P\pr{
            \eta_0'\mu \in \hat C_{\eta_0}(\alpha) \mid X_{1:T-1} \in E
        } < 1-\alpha.
    \]
    Let $\eta = \eta_0$ on $E$ and $\eta = [1, 0,\ldots, 0]$ on $E^C$. Then \[
        \P\pr{
            \eta'\mu \in \hat C_{\eta}(\alpha) \mid \eta(X_{1:T-1}) = \eta_0
        } < 1-\alpha
        \]
    and $\P(\eta(X_{1:T-1}) = \eta_0) > 0$. This contradicts
    \eqref{eq:conditional_requirement}.
\end{proof}

\subsection{Leftover information $L$}
\label{sub:leftover-proof}

\thmleftover*

\begin{proof}
We first scale $X_t$ by $\Pi_t$ and use \cref{lemma:leftover}---doing so mainly conforms
with our subsequent asymptotic results. Recall that $\Pi_t = \diag(\Pi_{t1},\ldots,
\Pi_{tK})$, $\Sigma = \diag (\sigma_1^2,\ldots, \sigma_K^2)$, $V_T =
\frac{1}{n}\frac{1}{c_T}\Pi_T^{-1}
\Sigma$ where $c_t = n_t / n$. Note that we can represent conditioning on $\Delta X_
{1:T-1}$ as  conditioning on certain linear transformations $
A \bX_{1:T-1}$, where $\bX_{1:T-1} = [X_1',\ldots, X_{T-1}']'$. 

To apply \cref{lemma:leftover}, define $Y_t = \sqrt{c_t} \Pi_t \sqrt{n} (X_t - \mu)$ and
let $\bY_{1:t} = [Y_1', \ldots, Y_t']'$. When $\Pi_t > 0$, it suffices to study the joint
distribution of $L$ and $Y_T$ conditional on $B \bY_{1:(T-1)}$, where \[ B = A
\begin{bmatrix}
    \sqrt{nc_1} \Pi_1 &  \\
     & \ddots & \\
     &  & \sqrt{nc_{T-1}} \Pi_{T-1}
\end{bmatrix}^{-1}.
\]
Note that since $A$ takes pairwise differences, $A1_{(T-1)K} = 0$. Thus, \[B\colvecb{3}{
\sqrt{c_t}\Pi_11_K}{\vdots}{\sqrt{c_{T-1}} \Pi_{T-1}1_K} = 0.\]

Note that we can define $A_t, B_t$ to be the submatrices of $A, B$ that only act on $Y_
{1:t}$ such
that $A=A_{T-1}$, $B=B_{T-1}$. Note that for all $t,$ \eqref{eq:B_condition} is satisfied.
Thus we can apply \cref{lemma:leftover} to show that \[
\colvecb{2}{\sqrt{n}(L-\lambda'\mu)}{Y_T} \mid T, B_{T-1} \bY_{1:T-1} \sim \Norm\pr{0,
\begin{bmatrix}
    \lambda'1 & 0 \\
    0 & \Pi_T \Sigma
\end{bmatrix}}
\]
Therefore, \[
\colvecb{2}{L}{X_T} \mid T, A \bX_{1:T-1} \sim \Norm\pr{\colvecb{2}{\lambda'\mu}{\mu},
\begin{bmatrix}
    \lambda'1/n & 0 \\
    0 & \frac{1}{n} V_T
\end{bmatrix}}
\]
Since the right-hand side of the above display depends only on $\Pi_{2:T}, T$, \[
\colvecb{2}{L}{X_T} \mid T, A \bX_{:T-1} \sim \colvecb{2}{L}{X_T} \mid \Pi_{2:T}, T.
\]
This proves  \eqref{eq:dist_suff_stat_diff}.

To prove the ``moreover'' part, we show that the information in $\bX_{1:T-1}$ can be
recovered from $L$ and $\Delta X_{1:T-1}$. Since $\Delta X_{1:T-1}$ and $L$ are both
linear
transformations of $\bX_{1:T-1}$, it suffices to show that the coefficients that generate
$\Delta X_{1:T-1}$ and $L$ are linearly independent. Precisely speaking, 
note that by construction, $L = \ell'
\bX_{1:T-1}$ where $AD\ell = 0$ (almost surely for some positive definite diagonal matrix
$D$ that depends on $T$). It suffices to show that the rows of $A$ and $\ell $ span the
whole space $\R^{K(T-1)}$. Since $A$ is rank $K (T-1)-1$, it suffices to show that $\ell$
is not spanned by the rows of $A$.

The condition $AD\ell = 0$ is sufficient for this. This shows that $\ell$ is linearly
independent with rows of $A$, since $D\ell \in \mathrm{Nul}(A) = \mathrm{Col}(A')^\perp$.
Since $\ell' D\ell > 0$, $\ell \not\in \pr{\mathrm{Col}(A')^\perp}^\perp =
\mathrm{Col}(A')$. Hence $\ell$ is not in the row space of $A$.  Since $A$ is
rank-deficient by a single rank, we conclude that we can invert $L, A \bX_{1:T-1}$ into
$\bX_{1:T-1}$. 

Hence $(L, X_T, A\bX_{1:(1-T)})$ maps into $X_{1:T} $, which is trivially sufficient for
$\mu$ with
respect to $X_{1:T} \mid T$. Since $(L, X_T, A\bX_{1:(1-T)})$ is sufficient for $\mu$ with
respect to $X_{1:T} \mid T$, $ (L, X_T)$ is sufficient with respect to $X_{1:T} \mid T,
\Delta X_{1: T-1}
\sim X_{1:T} \mid \Delta X_{1:T}$. This proves the ``moreover'' part.
\end{proof}

\begin{lemma}
\label{lemma:leftover}
In the Gaussian model (where we allow $\Pi_{tk} = 0$ with positive probability), recall
that $\Pi_t = \diag(\Pi_{t1},\ldots, \Pi_{tK})$ and $\Sigma =
\diag
(\sigma_1^2,\ldots, \sigma_K^2).$ Let $Y_t =
\sqrt{c_t}\Pi_{t}\sqrt{n}\pr{X_t - \mu}$ such that $
Y_t \mid \Pi_t \sim \Norm(0, \Pi_t \Sigma).
$
Let $\bY_{1:t} = [Y_1', \ldots, Y_t']'$.
Let $B_t = B_t(\Pi_{1:t})$ be a matrix such that \[
B_t \colvecb{3}{\sqrt{c_1}\Pi_11_K}{\vdots}{\sqrt{c_t} \Pi_t1_K} = 0 \numberthis
\label{eq:B_condition}
\]
almost surely.
Assume that, for all $t$, $\one(T > t+1) $ and $ \Pi_{t+1}$ are measurable with respect to
$B_1 \bY_{1:1}, \ldots, B_t \bY_ {1:t}$. Consider $
\sqrt{n}(L-\lambda'\mu) =  1_K' \Sigma^{-1} \sum_{t=1}^{T-1}  \sqrt{c_t} Y_t.
$ Then \[
\colvecb{2}{\sqrt{n}(L-\lambda'\mu)}{Y_T} \mid T, B_1\bY_{1:1}, \ldots, B_{T-1}\bY_{1:T-1}
\sim \Norm
\pr{0,
\begin{bmatrix}
    1'\pr{\Sigma^{-1} \sum_{t=1}^{T-1} c_t \Pi_t}1 & 0 \\
    0 & \Pi_T \Sigma
\end{bmatrix}}.
\]
\end{lemma}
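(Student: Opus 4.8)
\textbf{Proof plan for Lemma~\ref{lemma:leftover}.}
The plan is to establish the joint normality and block-diagonal covariance by induction on the stopping time, exploiting that the conditioning statistics $B_t\bY_{1:t}$ are, by construction, orthogonal (under the conditional covariance) to both $Y_T$ and to the linear combination $1_K'\Sigma^{-1}\sum_{t=1}^{T-1}\sqrt{c_t}Y_t$ that defines $\sqrt{n}(L-\lambda'\mu)$. The key observation is that the combination defining $L$ applies the weights $\sqrt{c_t}1_K'\Sigma^{-1}$ to $Y_t$, and since $Y_t\mid\Pi_t\sim\Norm(0,\Pi_t\Sigma)$ has covariance $\Pi_t\Sigma$, the vector of coefficients $\Sigma^{-1}1_K$ paired with the scaling $\sqrt{c_t}\Pi_t$ is exactly what makes the $L$-direction a multiple of $\sqrt{c_t}\Pi_t 1_K$ in the natural inner product --- the same vector that $B_t$ annihilates in \eqref{eq:B_condition}.

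First I would set up a Gaussian conditioning argument batch by batch. Condition first on $\Pi_{2:T},T$ (equivalently, on the event that determines the realized stopping time and assignment path); since $\one(T>t+1)$ and $\Pi_{t+1}$ are measurable with respect to $B_1\bY_{1:1},\dots,B_t\bY_{1:t}$, conditioning on $(T,\Pi_{2:T})$ together with $B_1\bY_{1:1},\dots,B_{T-1}\bY_{1:T-1}$ is the same as conditioning on the latter alone after fixing the path. Within a fixed path, $\bY_{1:T}$ is jointly Gaussian with known mean zero and known block-diagonal covariance $\diag(\Pi_1\Sigma,\dots,\Pi_T\Sigma)$. I would then compute the conditional distribution of the pair $\bigl(\sqrt{n}(L-\lambda'\mu),\,Y_T\bigr)$ given $B_1\bY_{1:1},\dots,B_{T-1}\bY_{1:T-1}$ using the standard Gaussian conditioning formula. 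The mean stays zero because everything is centered and the conditioning variables are centered. For the covariance, $Y_T$ is independent of $\bY_{1:T-1}$ and hence of all the $B_t\bY_{1:t}$, so its conditional variance is unchanged at $\Pi_T\Sigma$ and its conditional covariance with $\sqrt{n}(L-\lambda'\mu)$ (a function of $\bY_{1:T-1}$ only) is zero.

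The main work is the cross-covariance between $\sqrt{n}(L-\lambda'\mu) = 1_K'\Sigma^{-1}\sum_{t=1}^{T-1}\sqrt{c_t}Y_t$ and each conditioning vector $B_s\bY_{1:s}$. I would show this is zero. Writing $B_s\bY_{1:s}=\sum_{r=1}^s B_{s,r}Y_r$ where $B_{s,r}$ is the block of $B_s$ acting on $Y_r$, the relevant covariance term is $\sum_{r}\sqrt{c_r}\,B_{s,r}(\Pi_r\Sigma)\Sigma^{-1}1_K = \sum_r \sqrt{c_r}\,B_{s,r}\Pi_r 1_K = B_s\bigl[\sqrt{c_1}\Pi_1 1_K;\dots;\sqrt{c_s}\Pi_s 1_K\bigr] = 0$ by \eqref{eq:B_condition}. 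Hence $\sqrt{n}(L-\lambda'\mu)$ is conditionally uncorrelated --- and, by joint normality, conditionally independent --- of the entire conditioning sigma-field, so its conditional variance equals its unconditional one, which is $\sum_{t=1}^{T-1} c_t\,1_K'\Sigma^{-1}(\Pi_t\Sigma)\Sigma^{-1}1_K = 1_K'\bigl(\Sigma^{-1}\sum_{t=1}^{T-1}c_t\Pi_t\bigr)1_K$, matching the claimed $(1,1)$ entry. One subtlety I would be careful about is that $B_t$ and $\Pi_{t+1}$ are random (measurable with respect to earlier conditioning variables); the clean way to handle this is to condition on $(T,\Pi_{2:T})$ throughout, so that on each atom of that conditioning all the $B_t$ and $\Pi_t$ are deterministic and the Gaussian computation above is literal. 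Collecting terms over the atoms gives the stated conditional law, and since the resulting parameters depend on the data only through $(\Pi_{2:T},T)$, the final claim follows.
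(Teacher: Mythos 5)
Your proposal is correct in substance, and its algebraic heart coincides with the paper's: the vanishing cross-covariance between the $L$-direction and each conditioning statistic is exactly the identity $B_s\bigl[\sqrt{c_1}\Pi_1 1_K;\ldots;\sqrt{c_s}\Pi_s 1_K\bigr]=0$ from \eqref{eq:B_condition}, and $Y_T$ is handled through conditional independence of the past given $\Pi_T$; the random stopping time is absorbed by the measurability of $\one(T=s)$ with respect to the conditioning variables, just as in the paper. Where you differ is in how the adaptivity of $B_t$ and $\Pi_t$ is handled. The paper never fixes ``the path'': it runs a batch-by-batch Gaussian-conditioning induction, tracking the conditional mean and covariance $\mu_{t|t},\Sigma_{t|t}$ of $\bY_{1:t}$ given $B_1\bY_{1:1},\ldots,B_t\bY_{1:t}$ and establishing the invariants $\Sigma_{t|t}\lambda_t=\bigl[\sqrt{c_1}\Pi_1 1_K;\ldots;\sqrt{c_t}\Pi_t 1_K\bigr]$, $B_t\Sigma_{t|t-1}\lambda_t=0$, and $\lambda_t'\mu_{t|t}=0$; these are precisely what make your one-shot orthogonality computation valid under the nested, data-dependent conditioning. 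Your shortcut --- condition on $(T,\Pi_{2:T})$ so that all $B_t$ are deterministic and then apply the standard Gaussian conditioning formula once --- reaches the same endpoint, but as stated it is loose on two points: conditional on the path \emph{event}, the law of $\bY_{1:T-1}$ is a constrained (truncated) Gaussian, not the block-diagonal Gaussian you invoke, and with continuous assignment rules (e.g.\ Thompson sampling) the path events have probability zero, so ``each atom'' requires regular conditional distributions. The rescue is that the path event is itself measurable with respect to the statistics $B_s\bY_{1:s}$ you subsequently condition on, and the true likelihood restricted to that event factors batch by batch into the fixed-path Gaussian density there, so conditioning on the exact values of these statistics washes out the truncation; spelling this out (or, equivalently, running the paper's induction) is what turns your plan into a complete proof. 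Your route isolates the key covariance identity in a single calculation, while the paper's induction handles the randomness of $B_t,\Pi_t$ and the measure-zero conditioning cleanly without any path-fixing step.
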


\begin{proof}

We show the following claim with a fixed stopping time: For all $s = 2,\ldots, T_0$, let 
\[ Z_s = 1_K' \Sigma^{-1} \sum_{t=1}^{s-1} \sqrt{c_t} Y_t.
\]
Then, for all $s$, we claim that \[
(Z_s, Y_{s}) \mid B_1 \bY_{1:1},\ldots, B_{s-1} \bY_{1:s-1} \sim \Norm\pr{0, \diag\pr{
    1'\pr{\Sigma^{-1} \sum_{t=1}^{s-1} c_t \Pi_t}1, \Pi_s \Sigma
}}. \numberthis \label{eq:claim}
\]

Given \eqref{eq:claim}, note that for all $s$, \[
\br{(Z_T, Y_T) \mid T=s, B_1 \bY_{1:1},\ldots, B_{T-1} \bY_{1:T-1}} \sim \br{(Z_s, Y_s) \mid
B_1
\bY_
{1:1},\ldots, B_{s-1} \bY_{1:s-1}
},\]
by our assumptions about the stopping time $T$. This proves this lemma. 

To show \eqref{eq:claim}, we induct on $t$. To that end, define \[%
\lambda_t = \colvecb{3}{
\sqrt{c_1}\Sigma^{-1}1_K}{\vdots}{\sqrt{c_t}\Sigma^{-1}1_K} \equiv \colvecb{3}{\tilde
\lambda_1}{\vdots}{\tilde \lambda_t}
\]
We wish to verify the following claims for all $t = 1,\ldots, T-1$:
\begin{enumerate}[label=(\arabic*)]
    \item $\bY_{1:t+1} \mid \br{B_s\bY_{1:s}: s \le t} \sim \Norm\pr{\mu_{t+1 \mid t},
    \Sigma_
    {t+1 \mid t}}$ where \begin{align*}
\Sigma_{t+1 \mid t} &= \begin{bmatrix}
    \Sigma_{t|t} & 0 \\
    0 & \Sigma \Pi_{t+1}
\end{bmatrix} \quad \Sigma_{t|t} = \Sigma_{t|t-1} - \Sigma_{t|t-1} B_t'(B_t \Sigma_
{t|t-1} B_t')^{+} B_t \Sigma_{t|t-1},
\\\mu_{t+1|t} &= \colvecb{2}{\mu_{t|t}}{0} \quad \mu_{t|t} = \mu_{t|t-1} + \Sigma_{t|t-1}
B_t'(B_t \Sigma_
{t|t-1} B_t')^{+} B_t (\bY_{1:t} - \mu_{t|t-1}) \\
\Sigma_{1|0} &= \Sigma\Pi_1\\
\mu_{1|0} &= 0
    \end{align*}

    \item
    \[\Sigma_{t|t}\lambda_{t} = \Sigma_{t|t-1} \lambda_{t} = \colvecb{3}{\sqrt{c_1}
    \Pi_11_K}{\vdots}{
    \sqrt{c_{t}} \Pi_{t} 1_K}\]
    \item $B_{t} \Sigma_{t|t-1}\lambda_{t} = 0$
    \item $\lambda_t'\mu_{t|t} = 0$
\end{enumerate}

Assuming items (1)--(4), we show \eqref{eq:claim}. Note that $
Z_s = \lambda_{s-1}'\bY_{1:s-1}.
$
Hence,
\[
\colvecb{2}{Z_s}{Y_s} = \begin{bmatrix}
    \lambda_{s-1}' & 0\\
    0 & I_K
\end{bmatrix} \bY_{1:s}
\]
and they are jointly Gaussian.
Its mean conditional on $\br{B_t\bY_{1:t} : t \le s-1}$ is thus \[
\begin{bmatrix}
    \lambda_{s-1}' & 0\\
    0 & I_K
\end{bmatrix} \colvecb{2}{\mu_{s-1|s-1}}{0} = 0
\]
and its conditional variance is  \begin{align*}
&\begin{bmatrix}
    \lambda_{s-1}' & 0\\
    0 & I_K
\end{bmatrix} \begin{bmatrix}
    \Sigma_{s-1|s-1} & 0 \\
    0 & \Sigma \Pi_{s}
\end{bmatrix} \begin{bmatrix}
    \lambda_{s-1}' & 0\\
    0 & I_K
\end{bmatrix}' \\
&= \begin{bmatrix}
    \lambda_{s-1}'\Sigma_{s-1|s-1}\lambda_{s-1} & 0 \\ 0 & \Sigma \Pi_s
\end{bmatrix} = \begin{bmatrix}
    1_K'\pr{\Sigma^{-1} \sum_{t=1}^{s-1} c_t \Pi_t}1_K & 0 \\ 0 & \Sigma \Pi_s
\end{bmatrix}
\end{align*}
as desired. Thus, items (1)--(4) prove \eqref{eq:claim}. 

Finally, to verify items (1)--(4), consider the base case $t=1$. Then $\bY_1 = Y_1 \sim
\Norm(0,
\Sigma\Pi_1) = \Norm(\mu_{1|0}, \Sigma_{1|0})$. Note that \[
Y_1 \mid B_1 Y_1 \sim \Norm\pr{\underbrace{\mu_{1|0} - \Sigma_{1|0}B_1'(B_1\Sigma_
{1|0}B_1')^{+}B_1
(Y_1 - \mu_{1|0})}_{\mu_{1|1}}, \underbrace{\Sigma_{1|0} - \Sigma_{1|0}B_1'(B_1\Sigma_
{1|0}B_1')^
{+}B_1\Sigma_
{1|0}}_{\Sigma_{1|1}}}
\]
and $(Y_2 \mid B_1 Y_1) \sim (Y_2 \mid \Pi_2) \sim \Norm(0, \Pi_2 \Sigma)$. Thus $\bY_
{1:2} \mid B_1 \bY_{1:1} \sim \Norm(\mu_{2|1}, \Sigma_{2|1})$. This verifies item (1). For
items (2) and (3), note that $\Sigma_{1|0}\lambda_1 = \Sigma\Pi_1\sqrt{c_t} \Sigma^ {-1}
1_K
=
\sqrt{c_t} \Pi_1 1_K$. Note that $B_1 \Sigma_{1|0}\lambda_1 = B_1 \sqrt{c_t} \Pi_1 1_K =
0$ by assumption. Hence $\Sigma_{1|1} \lambda_1 = \Sigma_{1|0}\lambda_1$. This verifies
items (2) and (3). Finally, $\lambda_1 \mu_{1|1} = \lambda'\mu_{1|0} = 0$ by (3). This
verifies item (4).

For the inductive case, assume that items (1)--(4) holds for all $s \le t-1$. Then \[
\bY_{1:t} \mid \br{B_s \bY_{1:s} : s \le t-1} \sim \Norm(\mu_{t|t-1}, \Sigma_{t|t-1})
\]
is jointly Gaussian. Hence \[
\bY_{1:t} \mid \br{B_s \bY_{1:s} : s \le t} \sim \Norm\pr{\mu_{t|t}, \Sigma_{t|t}}.
\]
Note too that \[
Y_{t+1} \mid \br{B_s \bY_{1:s} : s \le t} \sim (Y_{t+1} \mid \Pi_{t+1}) \sim \Norm\pr{0,
\Pi_{t+1} \Sigma}.
\]
independently from $\bY_{1:t}$. This verifies item (1) for $t$.

For items (2) and (3), note that \[
\Sigma_{t|t-1} \lambda_t = \colvecb{2}{\Sigma_{t-1|t-1}\lambda_{t-1}}{\Sigma \Pi_{t} \tilde
\lambda_t} = \colvecb{4}{\sqrt{c_1}\Pi_1 1_K}{\vdots}{\sqrt{c_{t-1}} \Pi_{t-1} 1_K}{
\sqrt{c_t} \Pi_t 1_K}
\text{ and }
B_t \Sigma_{t|t-1} \lambda_t = B_t \colvecb{3}{\sqrt{c_1}\Pi_1 1_K}{\vdots}{
\sqrt{c_t} \Pi_t 1_K} = 0
\]
by assumption. Hence $\Sigma_{t|t-1} \lambda_t = \Sigma_{t|t} \lambda_t$. Finally, for
(4), $\lambda_t'\mu_{t|t} = \lambda_t'\mu_{t|t-1} = \lambda_{t-1}'\mu_ {t-1|t-1} = 0$ by
(3). This completes the proof.
\end{proof}

\subsection{Sufficiency in Thompson sampling}
\label{sub:thompson-sufficiency-proof}

\propovercondition*

\begin{proof}
We will show the following two claims: 
\begin{enumerate}
\item Under Thompson sampling, $(L, \Pi_{2:T})$ is sufficient for $\mu$ with respect
    to $X_
    {1:T} \mid T$.

    This follows directly from \cref{lemma:thompson_sufficiency}.

    \item If $(L, \Pi_{2:T})$ is sufficient for $\mu$ with respect to $X_{1:T-1} \mid T$,
    then $(L, X_T)$ is sufficient for $\mu$ with respect to $X_{1:T} \mid T, \Pi_{2:T}$.

Observe that \begin{align*}
p_\mu(X_{1:T} \mid T) &= p_\mu(X_{1:T-1} \mid T) p_\mu(X_T \mid X_{1:T-1}, T) \\&= g(X_{1:T-1} \mid \Pi_{2:T},
L, T) p_\mu(\Pi_{2:T}, L \mid T) p_\mu(X_T \mid \Pi_{2:T}, T)
\end{align*}
where $g$ does not depend on $\mu$ since $(\Pi_{2:T}, L)$ is sufficient with respect to
$X_{1:T-1} \mid T$. Since $\mu$ enters only through functions that depend on $L, X_T, \Pi_{2:T}$,
we conclude that $(L, X_T, \Pi_{2:T})$ is sufficient for $\mu$ with respect to $X_{1:T} \mid T$.
Thus, $(L, X_T)$ is sufficient for $\mu$ with respect to $X_{1:T} \mid \Pi_{2:T}, T$.
\qedhere
\end{enumerate}
\end{proof}

\begin{lemma}
\label{lemma:thompson_sufficiency}
Under Thompson sampling, assume that $\Xi_t$ is measurable with respect to $\Pi_{2:t}$.
Then $(\Pi_{2:T}, L)$ is sufficient for $\mu$ with respect to $X_
{1:T-1} \mid T$.
\end{lemma}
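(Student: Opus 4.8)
The plan is to combine the exponential-family structure of the likelihood \eqref{eq:likelihood} with the invertibility of the Thompson-sampling assignment map: I will show that the $\mu$-dependent part of the conditional density $p_\mu(X_{1:T-1}\mid T)$ depends on the data only through $(L,\Pi_{2:T})$, and then invoke the factorization theorem.

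\emph{Reduction to a statement about a $K$-vector.} Fix a realized stopping time $T=T'$. Restricting \eqref{eq:likelihood} to the first $T'-1$ batches, the indicator $\one(T=T')$ is $\mu$-free (and, by the hypothesis that each $\Xi_t$ is $\Pi_{2:t}$-measurable, it is a function of $\Pi_{2:T'}$), so it is absorbed into the base measure. The product of the remaining Gaussian densities contributes a $\mu$-dependent factor proportional to $\exp\!\big(\mu'S_{T-1}-\tfrac12\mu'(n\sum_{t=1}^{T-1}V_t^{-1})\mu\big)$, where $S_{T-1}=n\sum_{t=1}^{T-1}V_t^{-1}X_t$. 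Since $V_t^{-1}=c_t\Pi_t\Sigma^{-1}$ and $\Pi_1,\Sigma,(c_t)$ are fixed and known, the diagonal matrix $n\sum_{t=1}^{T-1}V_t^{-1}$ is a deterministic function of $\Pi_{2:T}$. Hence by factorization it suffices to show that the $K$-vector $S_{T-1}$ is a function of $(L,\Pi_{2:T})$. (Note $\Pi_{2:T}$, including $\Pi_T=Q(W_{T-1},\Omega_{T-1})$, is $X_{1:T-1}$-measurable, so $(L,\Pi_{2:T})$ is a bona fide statistic.)

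\emph{Using the Thompson-sampling structure.} Conditioning on $\Pi_{2:T}$ determines $\Omega_{T-1}=\tfrac{\Sigma}{n}(\sum_{s=1}^{T-1}c_s\Pi_s)^{-1}$, since it depends only on $\Pi_{1:T-1}$. Under Thompson sampling $\Pi_T=Q(W_{T-1},\Omega_{T-1})$, and by the discrete-choice inversion results of \citet{hotz1993conditional,norets2013surjectivity} the orthant-probability map $\nu\mapsto Q(\nu,\Omega)$, for fixed positive-definite diagonal $\Omega$, is a bijection between the location-normalized hyperplane $\{\nu:\nu_K=0\}$ and the open simplex. Therefore $\Pi_T$, together with the now-known $\Omega_{T-1}$, recovers $W_{T-1}$ up to an additive constant vector $c1_K$, i.e. the contrasts $\Delta W_{T-1}$. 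Writing $W_{T-1,k}=m_{T-1,k}/q_{T-1,k}$ with $m_{T-1,k}=\sum_{s=1}^{T-1}c_s\Pi_{sk}X_{sk}$ and $q_{T-1,k}=\sum_{s=1}^{T-1}c_s\Pi_{sk}>0$ (the denominator a known function of $\Pi_{2:T}$ under the maintained assumption $\Pi_t>0$ a.s.), knowledge of $W_{T-1}$ modulo $1_K$ pins down $m_{T-1}$, and hence $S_{T-1}=n\Sigma^{-1}m_{T-1}$, up to the one-dimensional family $\{S_{T-1}+a\,n\Sigma^{-1}q_{T-1}:a\in\R\}$. Finally $L=\tfrac1n 1_K'S_{T-1}$, and the coefficient $1_K'\Sigma^{-1}q_{T-1}=\sum_k q_{T-1,k}/\sigma_k^2$ is strictly positive, so $L$ identifies the remaining scalar $a$. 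Thus $S_{T-1}$ is a function of $(L,\Pi_{2:T})$, which is the content of the lemma.

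\emph{Main obstacle.} The crux is the invertibility step: one needs that the Gaussian orthant-probability map reveals \emph{exactly} the contrasts of $W_{T-1}$ (this is also why the relevant conditioning set is $\Pi_{2:T}$ and not just $\Pi_{2:T-1}$, since it is $\Pi_T$ that exposes $\Delta W_{T-1}$), so that $\Pi_{2:T}$ supplies $K-1$ coordinates of the natural statistic $S_{T-1}$ and the single scalar $L$ supplies the last, along the direction $\Sigma^{-1}q_{T-1}$ in which $Q$ is blind. Making this precise requires carefully invoking the Hotz--Miller/Norets--Takahashi bijection for the probit choice model; the remaining steps — that $\Omega_{T-1}$ and $q_{T-1}$ are $\Pi_{2:T}$-measurable, that $q_{T-1,k}>0$, and the bookkeeping reducing \eqref{eq:likelihood} to the first $T-1$ batches — are routine.
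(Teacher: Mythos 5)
Your proposal is correct and follows essentially the same route as the paper's proof: reduce sufficiency to recovering the natural exponential-family statistic, use the Hotz--Miller/Norets--Takahashi inversion of the orthant-probability map to extract the contrasts of $W_{T-1}$ from the Thompson-sampling probabilities, and use $L$ (together with the $\Pi_{2:T}$-measurable weights) to pin down the remaining location scalar. The only cosmetic difference is that you invert only $\Pi_T$ while keeping $\Pi_{2:T-1}$ as conditioning information, whereas the paper converts all of $\Pi_{2:T}$ into the difference vectors $\nabla_{1:T-1}$ before performing the same recovery of $W_{T-1}$.
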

\begin{proof}
Let $L_s = \sum_{k=1}^K \sum_{t=1}^{s-1} \frac{n_t \Pi_{tk}}{n \sigma_k^2} X_{tk}$. We
shall show the claim that $(\Pi_{2:s}, L_s)$ is sufficient  for $\mu$ with respect to $X_
{1:s-1}$. Given this result (which has a fixed stopping time), we note that, since $
\br{T=s}$ is measurable with respect to
$\Pi_{2:s}$, \[
p(X_{1:T-1} \mid T=s, \Pi_{2:T}, L) = p(X_{1:s-1} \mid T=s, \Pi_{2:s}, L_s) = p(X_{1:s-1}
\mid \Pi_{2:s}, L_s).
\]
Since $(\Pi_{2:s}, L_s)$ is sufficient  for $\mu$ with respect to $X_ {1:s-1}$, the above
display does not depend on $\mu$. As a result, $(\Pi_{2:T}, L)$ is sufficient for $\mu$
with respect to $X_ {1:T-1} \mid T$. This proves \cref{lemma:thompson_sufficiency}
provided the claim holds.

To prove the claim, note that by the \citet{hotz1993conditional} inversion result
\citep{norets2013surjectivity}, $Q$ is a bijection between $\Pi_{t+1}$ and the differences
\[
\nabla_t = (W_{t,1} - W_{t,K}, \ldots, W_{t,K-1} - W_{t, K}).
\]
Thus, it suffices to show that $(\nabla_{1:s-1}, L_s)$ is sufficient for $\mu$ with
respect to $X_{1:s-1}$.

Note that \begin{align*}
L_s &= \sum_{k=1}^K \sum_{t=1}^{s-1} \frac{n_t \Pi_{tk}}{n \sigma_k^2} X_{tk}  \\
&= \sum_{k=1}^K \pr{\sum_{t=1}^{s-1}\frac{n_t\Pi_{tk}}{n\sigma_k^2}}W_{s-1, k} \\
L_s - \sum_{k=1}^{K-1} \pr{\sum_{t=1}^{s-1}\frac{n_t\Pi_{tk}}{n\sigma_k^2}}(W_{s-1, k}-W_
{s-1, K}) &= \pr{\sum_{s=1}^{s-1}\frac{n_t\Pi_{tK}}{n\sigma_K^2}} W_{s-1, K} + \sum_{k=1}^{K-1} \pr{\sum_{t=1}^{s-1}\frac{n_t\Pi_{tk}}{n\sigma_k^2}} W_{s-1, K} \\
&= W_{s-1, K} \sum_{t=1}^{s-1} \sum_{k=1}^K \frac{n_t\Pi_{tk}}{n\sigma_k^2}.
\end{align*}
That is, we can transform $ (\nabla_{1:s-1}, L_s)
$
into $(\Pi_{1:s}, W_{s-1})$, the latter of which is sufficient for $\mu$ with respect to
$X_{1:s-1}$. This completes the proof.
\end{proof}

\subsection{Polyhedral assignment}
\label{sub:polyhedral-proof}

\thmpolyhedral*
\begin{proof}
Note that $S = G\bV^{-1} \bX_{1:T}$ and that $R = G \bV^{-1} G'$. We are interested in the
joint distribution of \begin{align*}
\colvecb{2}{U}{U_\perp} \mid \Pi_{2:T} = \pi &\sim \colvecb{2}{U}{U_\perp} \mid A(\mathcal F_{T-1}) \bX_{1:T} \le
b(\mathcal F_{T-1}) \\
&\sim \colvecb{2}{\eta'G\bV^{-1}}{\eta_\perp' G\bV^{-1}} \bX_{1:T}
\mid A(\mathcal F_{T-1}) \bX_{1:T} \le
b(\mathcal F_{T-1}) \\
&\sim \colvecb{2}{\eta'G\bV^{-1}}{\eta_\perp' G\bV^{-1}} \bX_{1:T}^* \mid A(\mathcal F_{T-1}) \bX_{1:T}^*
\le
b(\mathcal F_{T-1})
\end{align*}
where $\bX_{1:T}^* \sim \Norm(1_T \otimes \mu, \bV)$. Let $S^* = G \bV^{-1} \bX_
{1:T}^*$. Thus, it suffices to study the conditional distribution \[
\underbrace{\eta'G\bV^{-1}  \bX_{1:T}^*}_{c Z_1} \mid \underbrace{\eta_\perp' G\bV^
{-1} \bX_{1:T}^*}_
{Z_3}, A(\mathcal F_{T-1}) \bX_
{1:T}^*
\le b(\mathcal F_{T-1}).
\]
Let $G_\perp \bX_{1:T}^* = Z_2$. 

We first study the Gaussian distribution $(Z_1, Z_2')
\mid Z_3$. This amounts to calculating conditional means and variances. Note that
\begin{align*}
\E[c Z_1 \mid Z_3] &= \eta' R\mu + ck'(Z_3 - \eta_\perp R\mu) \\
\eta'R\mu - ck'\eta_\perp R \mu &= \eta'R\mu - \eta'R \eta_\perp' (\eta_\perp
R \eta_\perp')^{-1} \eta_\perp R\mu\\
&= \eta'R \mu - \eta' R \eta_\perp' (\eta_\perp
R \eta_\perp')^{-1} \eta_\perp R (\eta\eta' + \eta_\perp' \eta_\perp)\mu \\
&= \eta'R\mu - \pr{\eta' R \eta_\perp' (\eta_\perp
R \eta_\perp')^{-1} \eta_\perp R \eta} \cdot \tau - \eta' R \eta'_\perp\eta_\perp\mu \\
&= \eta' R( I - \eta'_\perp\eta_\perp) \mu - \pr{\eta' R \eta_\perp' (\eta_\perp
R \eta_\perp')^{-1} \eta_\perp R \eta} \cdot \tau \\
&= \pr{\eta' R \eta - \eta' R \eta_\perp' (\eta_\perp
R \eta_\perp')^{-1} \eta_\perp R \eta}\tau
\end{align*}
and hence \[
\E[c Z_1 \mid Z_3] =ck' Z_3 + \pr{\eta' R \eta - \eta' R \eta_\perp' (\eta_\perp
R \eta_\perp')^{-1} \eta_\perp R \eta}\tau.
\]

Next, we show that \[
c = \eta' R \eta - \eta' R \eta_\perp' (\eta_\perp
R \eta_\perp')^{-1} \eta_\perp R \eta = \eta' R^{1/2} \pr{ I - R^{1/2} \eta_\perp' (\eta_\perp
R \eta_\perp')^{-1} \eta_\perp R^{1/2}} R^{1/2}\eta
\]
The matrix $\pr{ I - R^{1/2} \eta_\perp' (\eta_\perp
R \eta_\perp')^{-1} \eta_\perp R^{1/2}}$ projects to the space of all vectors
perpendicular to $\eta_\perp R^{1/2}$. Note that this space is one dimensional with basis
vector $R^{-1/2} \eta$. As a result, \[
I - R^{1/2} \eta_\perp' (\eta_\perp
R \eta_\perp')^{-1} \eta_\perp R^{1/2} = \frac{R^{-1/2} \eta \eta' R^{-1/2}}{\eta' R^{-1}
\eta}.
\]
Hence \[
\eta' R^{1/2} \pr{ I - R^{1/2} \eta_\perp' (\eta_\perp
R \eta_\perp')^{-1} \eta_\perp R^{1/2}} R^{1/2}\eta = \frac{1}{\eta' R^{-1}\eta} = c.
\]
Hence \[
\E[Z_1 \mid Z_3] = \tau + k'Z_3.
\]

Note that $\cov(S^*, Z_2) = G\bV^{-1} \bV G_\perp' = 0$ by construction. Hence $\cov
(Z_1, Z_2 \mid Z_3) = 0$,  $\var(Z_2 \mid Z_3) = G_\perp \bV G_\perp'$, and $\E[Z_2
\mid Z_3] = 0$. Lastly, \begin{align*}
\var(cZ_1 \mid Z_3) &= \eta'R\eta - \eta'R\eta'_\perp (\eta_\perp R\eta_\perp')^{-1} \eta'
R \eta = c
\end{align*}
and hence \[
\var(Z_1 \mid Z_3) = 1/c.
\]
This implies that \[
\colvecb{2}{Z_1}{Z_2} \mid Z_3 \sim \Norm\pr{\colvecb{2}{\tau + k' Z_3}{0}, \begin{bmatrix}
    1/c & 0 \\ 
    0 & G_\perp \bV G_\perp'
\end{bmatrix}}.
\]

It remains to show that the constraint $A(\mathcal F_{T-1}) \bX_{1:T}^*
\le
b(\mathcal F_{T-1}) $ is equivalent to $M(\mathcal F_{T-1}) Z \le b(\mathcal F_{T-1})$. This is true by plugging in \[
\bX_{1:T}^* = \tilde G S^* + \tilde G_\perp Z_2 = \tilde G (cZ_1) + \tilde G \underbrace{
(I-\eta\eta')}_{\eta_\perp'\eta_\perp}
S^* + \tilde G_\perp Z_2.
\]
This completes the proof.
\end{proof}

\section{Proofs for results in \cref{sub:asymptotics}}

\subsection{Algorithms satisfying \cref{as:kappa_continuity,as:kappa_pruning}}
\label{sub:kappa-checks}

\begin{exsq}[Pruned Thompson sampling]
\label{ex:pruned_ts}
Pruned Thompson sampling is the assignment algorithm of the form
\eqref{eq:kappa_decomposition}
where $\kappa_{t, (1)} = (\rho \circ Q)(\cdot, \cdot)$ 
and $\kappa_{t, (0)} = Q (\cdot, \cdot)$ for the Gaussian
orthant probability $Q$ defined in
\eqref{eq:orthant} and the pruning function \[
\rho_k(\pi) \propto \pi_k \one(\pi_k \ge \epsilon)
\]
and $\sum_{k=1}^K \rho_k(\pi) = 1$. 

\begin{lemma}
This procedure satisfies \cref{as:kappa_continuity,as:kappa_pruning}.
\end{lemma}
\begin{proof}
It suffices to check that (a) both $\rho \circ Q$ and $Q$ are LIAC for the continuity
statements, (b) $Q_k(v, \Omega) > 0$ when $v_k > 0$. 

Note that (b) is immediately true. Now (a) for $Q$ is immediately true as well, since $Q$
is continuous at every input $\nu, \Omega$ where $\Omega$ is a diagonal matrix with
entries between $\epsilon$ and $1/\epsilon$ and $\nu \in [-\infty, 0]^{K}$. 

Lastly, fix a nonempty $J \subset [K]$. For $\kappa_{t,(1)}$, for a fixed $\Omega$, the
discontinuities
are of the form \[
\bigcup_{k \in J} \br{\nu \in \R^{|J|} : Q_k(\nu^*,\Omega) = \epsilon, \nu_{k}^* = -\infty
\text{ for
$k\not\in J$}, \nu^*_j = \nu_j}.
\]
Note that each entry of the union is measure zero with respect to $\R^{|J|}$. Hence, the
\eqref{eq:adequately_continuous_statement} is satisfied for almost every $\nu$. This
verifies \cref{as:kappa_continuity}. The pruning satisfies \cref{as:kappa_pruning} by
construction.
\end{proof}
\end{exsq}

\begin{exsq}[$\varepsilon$-greedy]
Consider the algorithm satisfying \eqref{eq:simplied_mechanism} and \[
\kappa_{t,(1)}(\nu, \Omega) = \kappa_{t,(0)}(\nu, \Omega) = \begin{cases}
    1-\varepsilon, &\text{ if } \nu_k \ge \max_\ell \nu_\ell \\
    \varepsilon/(K-1) & \text{ otherwise}.
\end{cases}
\]
for some $\varepsilon > 0$ (defined on $\nu$ where the largest entry is unique).

\begin{lemma}
This procedure satisfies \cref{as:kappa_continuity,as:kappa_pruning}.
\end{lemma}
\begin{proof}
For $\epsilon < \varepsilon / (K-1)$, this procedure satisfies \cref{as:kappa_pruning} by
construction. For \cref{as:kappa_continuity}, fix some $J$. Note that the discontinuities
are of the form \[
\bigcup_{k \neq \ell, k\in J, \ell \in J} \br{\nu: \nu_k = \nu_\ell \ge \max_{m\in [K]}
\nu_m > -\infty}.
\]
Each entry of the union is measure zero with respect to $\R^{|J|}$. Hence,
\eqref{eq:adequately_continuous_statement} is satisfied for almost every $\nu$. Moreover,
note that the assignment probabilities are never zero, and so the second part of
\cref{as:kappa_continuity} is automatically satisfied. This verifies
\cref{as:kappa_continuity}.
\end{proof}
\end{exsq}

\subsection{Outline of the general argument}
\label{sub:generic-size-control}

We recall the following notation choices in \cref{sub:asymptotics} and define some more
notation.
\begin{itemize}[wide]
\item Recall that there are at most $T_0 > 0$ batches with an adaptive stopping time
$T_n$.
Assume that the sample sizes $n_t/n \to c_t$ for all $ t= 1,\ldots, T_0$ are fixed
sequences. 
    \item Let $\Pi_{tkn}$ be the algorithm-determined probability of assigning to arm $k$
    in
    batch $t$
    \item Let $N_{tkn} \mid \Pi_{tkn} \sim \Bin(n_t, \Pi_{tkn})$ be the realized number of
    samples assigned to arm $k$ in batch $t$, where $N_{tkn} = \sum_{i\in \mathcal I_t}
    D_{ik}$.
    \item Let $\hat \Pi_{tkn} = N_{tkn}/ n_t$.
 \item Let $X_{tkn} = \frac{1}{N_{tkn} \vee 1} \sum_{i\in \mathcal I_t} D_{ik} X_i^{\obs}$
    be the mean arm value for batch $t$ and arm $k$. If no one is assigned to arm $k$,
    then $X_{tkn} = 0$. Let $X_{tn}$ collect $X_{tkn}$.

   \item  Let $Y_{tkn} = \frac{N_{tkn}}{\sqrt{n_t}} (X_{tkn} - \mu_{nk}) = \sqrt{n_t}
   \hat\Pi_
   {tkn} (X_{tkn} - \mu_{nk})  = \frac{1}{
    \sqrt{n_t}} \sum_{i\in \mathcal I_t} D_{ik} (X_i^\obs - \mu_{nk})$ be a scaled version of
    $X_{tkn}$. Let $Y_{tn}$ collect $Y_{tkn}$.

    \item Let $Y_{:t,k,n} = \sum_{s=1}^t \sqrt{n_s/n} Y_{s,k,n}$

    \item Let $\hat\Pi_{:t,k,n} = \sum_{s=1}^t \hat\Pi_{s,k,n} \frac{n_s}{n}$

    \item Let \[W_{t,k,n} = \frac{
\sum_{s=1}^t N_{skn} X_{skn}}{\sum_{s=1}^t N_{skn}}
    \]
    be the cumulative arm-$k$ mean
    where \[
\sqrt{n}(W_{t,k,n} - \mu_{nk}) = \frac{1}{\hat\Pi_{:t,k,n}} Y_{:t,k,n}.
    \]

    \item Let \eqref{eq:def_sigma2} be an estimate of the arm variance using data up until
   batch $t$. Let $\hat \sigma_{kn}^2 =
   \hat \sigma_{Tkn}^2$.

    \item Let \[
L_n = \sum_{t=1}^{T_n-1} \sum_{k=1}^K \frac{n_t \hat\Pi_{tkn}}{\sqrt{n} \hat \sigma_
{kn}^2}
X_{tkn}
    \] be the properly scaled empirical analogue to $L$.

    \item Let $\lambda_{kn} = \sum_{t=1}^{T-1} \frac{n_t \hat\Pi_{tkn}}{\sqrt{n}
    \hat\sigma_
    {kn}^2}$
    and $\lambda_n = [\lambda_{1n}, \ldots, \lambda_{Kn}]'$.
    Note that \[
L_n - \lambda_n'\mu_{n} = \sum_{t=1}^{T_n-1} \sum_{k=1}^K \frac{n_t \hat\Pi_{tkn}}{
\sqrt{n}
\hat
\sigma_{kn}^2} (X_{tkn} - \mu_{nk})  = \sum_{t=1}^{T_n-1} \sum_{k-1}^K \frac{\sqrt{n_t/n}}
{
\hat \sigma_{kn}^2} Y_
{tkn}.
    \]
\end{itemize}

Our goal is to show that \eqref{eq:real_wls_coef} converges to a Gaussian random variable,
and its limit is conditionally Gaussian given limits of $\Pi_{1:T, n}, T_n, \eta_n$. To do
so, we first analyze the behavior of $(\Pi_{1:T_0, n}, Y_{1:T_0, n})$ without a stopping
time---we could imagine that the experiment carries on after $T_n$. Using the scaling $Y$
instead of $X$ allows for $\Pi_{tkn}$ to be very close to zero or actually zero. 

We show $(\Pi_{1:T_0, n}, Y_{1:T_0, n})$ weakly converges by the following iterative
process. So long as $\Pi_{t+1}$ is a suitably continuous function of the past information
$G_{t,n} = (\Pi_{1:t,n}, Y_ {1:t,n})$, it weakly converges to a function of the limits of
the past information $\Pi_{t+1}(G_t)$. Given this convergence, we also show that $Y_{t+1}$
converges to $\Norm(0, \Pi_{t+1} \Sigma)$, and hence $G_{t+1,n}$ converges (\cref{thm:main_convergence}).

Second, given a stopping time $T_n$ that is a suitably continuous function of the past
information, we can further show that $L_n, Y_{T_n,n}$ converges to quantities that behave
like their Gaussian experiment counterparts (\cref{cor:large_sample}).

Third, given suitably continuous $\eta_n$, we can show that the statistic 
\eqref{eq:real_wls_coef}, properly normalized, converges to a random variable $Z$ where $Z
\mid \Pi_{1:T}, T,
\eta \sim \Norm(0,1)$, where $\Pi_{1:T}, T, \eta$ are limits of their finite-sample
analogues (\cref{thm:standard_normal}). Since $Z$ is conditionally Gaussian, for
sufficiently continuous weighting functions $f$, we can show that $f$-weighted average
mis-coverage rates converge to zero uniformly over $\mathcal P$ (\cref{thm:appendix_main}).

Finally, we verify \cref{thm:maintext} by verifying that the continuity restrictions on
$\Pi_t, \one(T_n > t), \eta_n$ are satisfied given the assumptions in
\cref{sub:asymptotics}.

The uniformity aspects of the argument use the subsequencing argument in
\cite{andrews2011generic}. The key to doing so is the following lemma, which establishes
that certain subsequences exist given our assumptions on $\mathcal P$.

\begin{lemma}
\label{lemma:subsequence}
Under \cref{as:P_assumptions}, for any sequence $P_n \in \mathcal P$ and any subsequence
$n_r$ of $n$, there exists a further subsequence $n_p$ of $n_r$ such that, as $p \to
\infty$, \begin{enumerate}
    \item $\mu_{n_p} \equiv \mu(P_{n_p})  \to \mu \in [-\infty, \infty]^K$,
    \item $h_{n_p} \equiv \sqrt{n_p} (\mu_{n_p} - \max_k \mu_{n_p, k}) \to h \in
    [-\infty, 0]^K$,
    \item For all $k\neq k' \in [K]$, $\Delta\mu_{k,k',n_p} \equiv \sqrt{n}(\mu_{k,n_p} -
    \mu_
    {k',n_p}) \to \Delta \mu_{k,k'} \in [-\infty, \infty],$
    \item For all $k$, $\sigma^2_k(P_{n_p}) \to \sigma^2_k \in [C_1^{-1}, C_1]$,
    \item For all $k \in [K], t \in [T_0]$, $\hat\sigma^2_{tkn_p} \pto \sigma^2_k $.
\end{enumerate}
\end{lemma}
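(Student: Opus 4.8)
The plan is to obtain items (1)--(4) by repeated extraction of subsequences from sequentially compact spaces, and to obtain item (5) — the only genuinely stochastic claim — from a martingale law of large numbers for the variance estimator, together with the observation that the estimator's denominator stays bounded away from zero because of the first batch's exploration.

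For (1)--(4) I would argue as follows. Because $\E_P[\norm{X}^4] < C_2$ forces $|\mu_k(P)| \le C_2^{1/4}$, the sequence $\mu_{n_r}$ takes values in the compact set $[-C_2^{1/4},C_2^{1/4}]^K$, so a first extraction yields a subsequence along which $\mu_{n_p}$ converges. Along that subsequence, the tuple $\big(h_{n_p},(\Delta\mu_{k,k',n_p})_{k\neq k'},(\sigma^2_k(P_{n_p}))_{k}\big)$ takes values in the compact space $[-\infty,0]^K\times[-\infty,\infty]^{K(K-1)}\times[C_1^{-1},C_1]^K$ — the last factor by \Cref{as:P_assumptions}(1) — so a single further extraction delivers a subsequence $n_p$ along which (1)--(4) hold simultaneously; the limiting variance $\sigma_k^2$ in (4) is exactly the target in (5).

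For (5), fix $t$ and $k$. Using $D_{ik}X_i^{\obs}=D_{ik}X_{ik}$, where $X_{ik}$ is the arm-$k$ entry of $i$'s potential-outcome vector and is independent of the past and of $i$'s assignment, I would expand the square to write
\[
\hat\sigma^2_{tkn_p} = \frac{\sum_{s=1}^{t}\sum_{i\in\mathcal I_s}D_{ik}(X_{ik}-\mu_{n_p k})^2}{\sum_{s=1}^{t}N_{skn_p}} - (W_{t,k,n_p}-\mu_{n_p k})^2
\]
and handle the two terms separately. Set $\xi_i = (X_{ik}-\mu_{n_p k})^2 - \sigma^2_k(P_{n_p})$, so that $\E[\xi_i]=0$ and $\E[\xi_i^2]\le\E[(X_{ik}-\mu_{n_p k})^4]\le 16C_2$ by \Cref{as:P_assumptions}(2) and the $c_r$-inequality. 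Writing $\mathcal G_{s-1}$ for the information available just before batch $s$ (which includes $\Pi_{sn}$), the partial sums of $\tfrac1n\sum_{i\in\mathcal I_s}D_{ik}\xi_i$ in $s$ form a martingale whose per-step conditional variance is at most $\tfrac{n_s}{n^2}\,16C_2$, so its total conditional variance over $s\le t$ is at most $16C_2/n\to 0$; hence $\tfrac1n\sum_{s\le t}\sum_{i\in\mathcal I_s}D_{ik}(X_{ik}-\mu_{n_p k})^2 = \sigma^2_k(P_{n_p})\cdot\tfrac1n\sum_{s\le t}N_{skn_p}+o_p(1)$. Since $\tfrac1n\sum_{s\le t}N_{skn_p}\ge N_{1kn_p}/n$ with $N_{1kn_p}\sim\Bin(n_1,\Pi_{1kn_p})$, $\Pi_{1kn_p}$ bounded below and $n_1/n\to c_1>0$, this denominator is bounded away from zero with probability approaching one, so the first term converges in probability to $\sigma^2_k(P_{n_p})\to\sigma_k^2$. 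For the correction term, the same kind of bound gives $\sqrt n\,(W_{t,k,n_p}-\mu_{n_p k})=Y_{:t,k,n_p}/\hat\Pi_{:t,k,n_p}$ with $Y_{:t,k,n_p}=O_p(1)$ (conditional variance $\le C_1$) and $\hat\Pi_{:t,k,n_p}\ge N_{1kn_p}/n$ again bounded below in probability, so $(W_{t,k,n_p}-\mu_{n_p k})^2=o_p(1)$, and (5) follows.

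I expect the main obstacle to be item (5): controlling the variance estimator under adaptivity and under later-batch assignment probabilities that may be arbitrarily small or zero. The two moves that make it go through are (i) treating the numerator as a martingale in the batch index, which needs only a crude quadratic-variation bound and avoids conditioning on the full assignment pattern (which would distort the law of later-batch outcomes), and (ii) using the fixed, bounded-below first-batch probabilities to guarantee on the order of $n$ observations of each arm and hence a non-degenerate denominator. I would also check that every $o_p$ bound above depends on $P$ only through the constants $C_1$ and $C_2$ — which is clear from their form — so that the extraction can be run along an arbitrary sequence $P_n\in\mathcal P$.
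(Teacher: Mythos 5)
Your proposal is correct and follows essentially the same route as the paper: items (1)--(4) by compactness of the (extended-real) parameter space, and item (5) via the same decomposition of $\hat\sigma^2_{tkn}$ into a $\mu$-centered second-moment average minus $(W_{t,k,n}-\mu_{nk})^2$, with martingale variance bounds (the paper invokes Kolmogorov's inequality, you use an equivalent quadratic-variation Chebyshev bound) and the order-$n$ denominator guaranteed by the fixed, bounded-below first-batch assignment probabilities. The only cosmetic difference is that you bound $\mu_k$ by $C_2^{1/4}$ to work in a compact real cube, whereas the paper simply compactifies $[-\infty,\infty]$; both are fine.
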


\begin{proof}

Fix the subsequence $n_r$ and the sequence $P_n$.  There exists a subsequence for which
(1)--(4) are satisfied since $[-\infty, \infty]$, $[-\infty, 0]$ (under the metric $d(x,y)
= \abs{\arctan(x) - \arctan(y)}$), and $[C_1^{-1}, C_1]^K$ are all compact.

Lastly, for (5), note that \eqref{eq:def_sigma2} can be written as \[
\sigma_{tkn}^2 = -(W_{tkn} - \mu_k(P_n))^2 + \frac{1}{\sum_{s=1}^t N_{skn}} \sum_{i
\in \mathcal I_{1:t}} D_{ik} (X_i^\obs - \mu_k(P_n))^2.
\]
We verify the following claims:
\begin{enumerate}[label=(\alph*)]
    \item We have  \[\P\bk{
 \abs[\bigg]{\sum_{i
\in \mathcal I_{1:t}} D_{ik} (X_i^\obs - \mu_k(P_n))} > \sqrt{n \log n}
    } \le \frac{C_1}{\log n}.
    \]
    \item  We have \[
\frac{1}{\sum_{s=1}^t N_{skn}} = O_p(1/n)
    \]
    \item For some $C > 0$, \[\P\bk{
 \abs[\bigg]{\sum_{i
\in \mathcal I_{1:t}} D_{ik} ((X_i^\obs - \mu_k(P_n))^2 - \sigma_k^2(P_n))} > \sqrt{n \log
n}
    } \le \frac{C}{\log n}.
    \]
\end{enumerate}
Given these claims, note that (a) and (b) imply that $W_{tkn} - \mu_k(P_n) = o_p
\pr{\sqrt{\frac{\log n}{n}}}$ and (b) and (c) imply that \[
\frac{1}{\sum_{s=1}^t N_{skn}} \sum_{i
\in \mathcal I_{1:t}} D_{ik} (X_i^\obs - \mu_k(P_n))^2 - \sigma^2_k(P_n) = o_p\pr{
\sqrt{\frac{\log n}{n}}}.
\]
As a result, $\sigma_{tkn}^2 - \sigma_{k}^2(P_n) = o_p(\sqrt{\log n / n})$. Hence
$\sigma_{tkn}^2 - \sigma_k^2 = o_p(1)$.

For (a), we note that the sum is a martingale. By Kolmogorov's inequality for martingales,
\[
\P\pr{\abs[\bigg]{\sum_{i
\in \mathcal I_{1:t}} D_{ik} (X_i^\obs - \mu_k(P_n))} > \sqrt{n \log n}
    }  \le \frac{1}{n \log n} \sum_{i=1}^n \sigma_k^2(P_n) \le \frac{C_1}{\log n}.
\]
For (b), we note that $\sum_s N_{skn} > N_{1kn} \sim \Bin(c_1 n, \pi_{1k})$. Since $N_
{1kn}$
grows at the rate of $n$, $
\frac{1}{\sum_{s=1}^t N_{skn}} = O_p(1/n).$ Similarly, we recognize that the sum in (c) is
again a martingale. To apply Kolmogorov's inequality, we need that the second moment of $
(X_i^\obs - \mu_k(P_n))^2 - \sigma^2_k(P_n)$ is finite. This is true due to the fourth
moment condition in \cref{as:P_assumptions}.
\end{proof}

\subsection{Behavior of random variables without a stopping time} 
We first analyze the
limiting behavior of the various random variables without a stopping
time. That is, we consider letting the experiment run until $T_0$.

Recall that $\Pi_{tn} = \diag(\Pi_{t1n},\ldots, \Pi_{tKn})$ collect the batch $t$
probabilities and that $\hat \Sigma_{tn}$ collect $\hat \sigma^2_{tkn}$. We assume that,
for some
function $\kappa_{t+1,n}$, \[
\Pi_{t+1,n} = \kappa_{t+1, n}\pr{X_{1:t,n}, \hat\Pi_{1:t,n}, \hat \Sigma_{tn}, n_{1:t}}.
\]
Note that \[
X_{tkn} = \begin{cases}
    \frac{1}{\sqrt{n_t} \hat\Pi_{tkn}} Y_{tkn} + \mu_{nk}, &\text{ if } \hat\Pi_{tkn} >
    0\\
    0 & \text{ otherwise}.
\end{cases}
\]
and thus is a function of $n_t, \hat\Pi_{tn}, Y_{tn}$. Therefore, we can write instead \[
\Pi_{t+1,n} = \kappa_{t+1, n}\pr{Y_{1:t,n}, \hat\Pi_{1:t,n}, \hat \Sigma_{tn}, n_{1:t}, \mu_n}.
\numberthis \label{eq:kappa_def}
\]

We require that $\kappa_{t+1}$ satisfies the following implication of the continuous
mapping theorem. This is more general than the conditions in \cref{sub:asymptotics}.

\begin{as}
\label{as:kappa_cmt}
The assignment algorithm $(\pi_1, \kappa_2,\ldots, \kappa_{T_0})$, for $\kappa_t$ in
\eqref{eq:kappa_def}, satisfies the following properties: Fix some $\epsilon > 0$,

\begin{enumerate}[wide]
    \item The assignment probabilities in the first batch are fixed and bounded away from
    zero: $\pi_{1k} > \epsilon > 0$ for every $k \in [K]$. Let $\Pi_1 = \pi_1$.

    \item Let $P_{n_m}$ denote a (sub)sequence of data-generating processes satisfying the
conclusion of \cref{lemma:subsequence}. Let $\Sigma = \diag(\sigma_1^2,\ldots,
\sigma_K^2)$.
Assume that, if \[Y_{1n_m} \dto Y_1 \sim \Norm(0, \Pi_1
\Sigma) \quad \hat\Pi_{1n_m} \pto \Pi_1,\numberthis \label{eq:precondition1}\]
then for some measurable function $\kappa_2$, we have the
following convergence jointly with \eqref{eq:precondition1} \[
\Pi_{2n_m} = \kappa_{2,n_m}\pr{Y_{1,n_m}, \hat\Pi_1, \hat \Sigma_{1n_m}, n_
{1}, \mu_{n_m}}
\dto
\kappa_2\pr{Y_1, \Pi_1; h, \sigma^2_{1:K}, c_1} \equiv \Pi_2.
\]

Moreover, assume that, if for some given $1 < t < T_0$, \[
\pr{Y_{1n_m}, \Pi_{2n_m}, \ldots, \Pi_{tn_m}, Y_{tn_m}} \dto (Y_1, \Pi_2, \ldots, \Pi_
{t}, Y_t) \quad \text{ and }\quad  \hat \Pi_{1:t, n_m} \pto \Pi_{1:t}, \numberthis
\label{eq:precondition2}
\]
where for all $s$, $Y_{s} \mid Y_{1:s-1}, \Pi_{2:s} \sim \Norm(0, \Pi_t \Sigma)$, then there
is a measurable function $\kappa_{t+1}$ such that we have the
following convergence jointly with \eqref{eq:precondition2} \[
\Pi_{t+1,n} = \kappa_{t+1, n}\pr{Y_{1:t,n}, \hat\Pi_{1:t,n}, \hat \Sigma_{tn}, n_{1:t}, \mu_n}
\dto
\kappa_{t+1}\pr{Y_{1:t}, \Pi_{1:t}; h, \sigma_{1:K}^2, c_{1:t}} \equiv
\Pi_{t+1}.
\numberthis \label{eq:condition2}
\]

\item For all $t = 2,\ldots,T_0$, the limiting random variable $\Pi_{t}$ is measurable
with
respect to \[
\pr{B_{t-1}(\Pi_{1:t-1}, c_{1:t-1}) \colvecb{3}{Y_1}{\vdots}{Y_{t-1}}, \Pi_{1:t-1}}
\]
for some conformable matrix $B_{t-1}(\Pi_{1:t-1}, c_{1:t-1})$ such that \[
B_{t-1}(\Pi_{1:t-1}, c_{1:t-1}) \colvecb{3}{\sqrt{c_1} \Pi_11_K}{\vdots}{\sqrt{c_{t-1}} \Pi_
{t-1}1_K} = 0
\]
almost surely.

\end{enumerate}
\end{as}

Let $G_{tn} = (\Pi_{1:t,n}, Y_{1:t,n})$ and let $G_t = (\Pi_{1:t}, Y_{1:t})$. The
following theorem characterizes the asymptotic behavior of $G_{T_0, n}$ and connects to
the normal model results. 

\begin{theorem}
\label{thm:main_convergence}
Under \cref{as:P_assumptions}, \cref{as:kappa_cmt}(1), and \cref{as:kappa_cmt}(2), for any
sequence $P_n
\in \mathcal P$ and any
subsequence $n_r$ of $n$, there exists a further subsequence $n_m$ such that, as
$m\to\infty$, \[
G_{T_0, n_m} \dto G_{T_0} = (\Pi_{1:T_0}, Y_{1:T_0})
\]
where, for all $t \in [T_0]$, \[
Y_{t} \mid Y_{1:t-1}, \Pi_{1:t} \sim \Norm(0, \Pi_t \Sigma).
\]
and $\Pi_{t+1}$ is measurable with respect to $\br{B_{s} \bY_{1:s} : s\le t}$.
\end{theorem}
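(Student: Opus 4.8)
The plan is to argue by induction on the batch index $t$, constructing the joint law of $G_{t,n}=(\Pi_{1:t,n},Y_{1:t,n})$ one batch at a time and invoking \cref{as:kappa_cmt}(2) to propagate weak convergence from the data in batches $1,\ldots,t$ to the next assignment probability $\Pi_{t+1,n}$. First I would pass to the subsequence $n_m$ furnished by \cref{lemma:subsequence}, along which $\mu_{n_m}$, the localizations $h_{n_m}$, the pairwise mean gaps, the true variances $\sigma_k^2(P_{n_m})$, and the variance estimates $\hat\sigma_{tkn_m}^2$ all converge (the last in probability); no further passage to a subsequence will be needed, since the inductive CLT identifies the limit directly along $n_m$. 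For the base case $t=1$: $\Pi_{1,n}=\pi_1$ is fixed with entries bounded below by \cref{as:kappa_cmt}(1); $\hat\Pi_{1,n_m}\pto\Pi_1$ by the law of large numbers for the binomial counts $N_{1kn}$; and $Y_{1,n_m}\dto\Norm(0,\Pi_1\Sigma)$ by a triangular-array central limit theorem, since $Y_{1kn}=n_1^{-1/2}\sum_{i\in\mathcal I_1}D_{ik}(X_i^\obs-\mu_{nk})$ is a normalized i.i.d.\ sum of mean-zero terms with covariance matrix $\Pi_1\Sigma_n\to\Pi_1\Sigma$, and the Lyapunov condition follows from the uniform fourth-moment bound in \cref{as:P_assumptions}. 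These facts give precondition \eqref{eq:precondition1} jointly, so \cref{as:kappa_cmt}(2) yields $\Pi_{2,n_m}\dto\Pi_2$ jointly with $(Y_{1,n_m},\hat\Pi_{1,n_m})$.

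For the inductive step, suppose \eqref{eq:precondition2} holds at some $1\le t<T_0$ along $n_m$, with $Y_s\mid Y_{1:s-1},\Pi_{2:s}\sim\Norm(0,\Pi_s\Sigma)$ for $s\le t$. I would establish joint convergence of $(G_{t,n_m},\Pi_{t+1,n_m},Y_{t+1,n_m},\hat\Pi_{t+1,n_m})$ in three substeps. First, $\Pi_{t+1,n_m}\dto\Pi_{t+1}$ jointly with the past comes directly from \cref{as:kappa_cmt}(2) applied to the inductive hypothesis. Second, for $Y_{t+1,n_m}$ the key is a \emph{conditional} central limit theorem: conditional on $G_{t,n}$ and $\Pi_{t+1,n}$, each $Y_{t+1,kn}=n_{t+1}^{-1/2}\sum_{i\in\mathcal I_{t+1}}D_{ik}(X_i^\obs-\mu_{nk})$ is a sum of independent mean-zero terms with conditional covariance $\Pi_{t+1,n}\Sigma_n$, and I would show the conditional characteristic function $\E[\exp(\mathrm i s'Y_{t+1,n})\mid G_{t,n},\Pi_{t+1,n}]$ converges in probability to $\exp(-\tfrac12 s'\Pi_{t+1}\Sigma s)$, a conditional Lindeberg/Lyapunov condition again flowing from the uniform fourth moments; a nested-conditioning (bounded-convergence) argument then upgrades this to joint weak convergence of $(G_{t,n_m},\Pi_{t+1,n_m},Y_{t+1,n_m})$ to a limit in which $Y_{t+1}\mid Y_{1:t},\Pi_{1:t+1}\sim\Norm(0,\Pi_{t+1}\Sigma)$. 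Crucially this must remain valid when $\Pi_{t+1,kn}\to 0$, possibly with $\Pi_{t+1,k}=0$ exactly in the limit---in that coordinate $Y_{t+1,kn}\pto 0$, consistent with the degenerate $\Norm(0,0)$ marginal---which is exactly why the whole argument is phrased in the $Y$-scaling rather than in $X_{t+1,n}$. Third, $\hat\Pi_{t+1,n_m}\pto\Pi_{t+1}$ because, conditionally, $N_{t+1,kn}\sim\mathrm{Bin}(n_{t+1},\Pi_{t+1,kn})$ so $\hat\Pi_{t+1,kn}-\Pi_{t+1,kn}\pto 0$ by Chebyshev, and the limit of $\Pi_{t+1,n_m}$ is $\Pi_{t+1}$. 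Together these give \eqref{eq:precondition2} at $t+1$, and iterating to $t=T_0$ yields $G_{T_0,n_m}\dto G_{T_0}$ with the stated conditional Gaussian structure.

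Finally, the claim that $\Pi_{t+1}$ is measurable with respect to $\{B_s\bY_{1:s}:s\le t\}$ is inherited from \cref{as:kappa_cmt}(3): each finite-sample $\Pi_{t+1,n}$ is a measurable function of $\bigl(B_t(\Pi_{1:t},c_{1:t})\bY_{1:t},\Pi_{1:t}\bigr)$ with $B_t$ annihilating $(\sqrt{c_1}\Pi_11_K,\ldots,\sqrt{c_t}\Pi_t1_K)$, and passing to the limit---using that $B_t(\cdot,\cdot)$ depends continuously on its arguments and that $\hat\Pi_{1:t,n_m}\pto\Pi_{1:t}$---preserves this representation for the limiting random variable. \textbf{The main obstacle is the conditional CLT of the inductive step:} making the conditional-characteristic-function / conditional-Lindeberg argument rigorous \emph{jointly} with the weak convergence of the past $G_{t,n}$ and \emph{uniformly} over assignment probabilities that may be arbitrarily small or exactly zero in the limit. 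Everything else is bookkeeping via the continuous mapping theorem and the hypotheses of \cref{as:kappa_cmt}.
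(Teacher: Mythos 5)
Your skeleton matches the paper's proof exactly: extract the subsequence from \cref{lemma:subsequence}, induct on $t$, use \cref{as:kappa_cmt}(1)--(2) to propagate convergence of $\Pi_{t+1,n}$, establish $\hat\Pi_{s,n}\pto\Pi_s$ (the paper uses the Chernoff bound of \cref{lemma:chernoff} rather than Chebyshev, an immaterial difference), and work throughout in the $Y$-scaling so that vanishing assignment probabilities are harmless. Where you diverge is precisely the step you flag as the main obstacle. The paper does not run a conditional-characteristic-function argument; it couples $Y_{t+1,n_m}$ with an auxiliary variable $U_{n_m}$ that is \emph{exactly} conditionally Gaussian with covariance $\Pi_{t+1,n_m}\Sigma(P_{n_m})$, and bounds the conditional bounded-Lipschitz discrepancy $\abs{\E[g(\cdot,Y_{t+1,n_m})\mid G_{t,n_m},\Pi_{t+1,n_m}]-\E[g(\cdot,U_{n_m})\mid G_{t,n_m},\Pi_{t+1,n_m}]}$ by a constant $c_m\to0$ independent of $g\in\mathrm{BL}_1$ and of the conditioning value; the required uniformity over $\pi\in\Delta^{K-1}$ and $P\in\mathcal P$ is supplied by \cref{lemma:uclt}, a Prokhorov-metric CLT bound (van der Vaart--Wellner, Proposition A.5.2) which works uniformly because $\norm{D_i(X_i^{\obs}-\mu)}\le\norm{X_i^{\obs}-\mu}$. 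Since $(G_{t,n_m},\Pi_{t+1,n_m},U_{n_m})\dto(G_t,\Pi_{t+1},Y_{t+1})$ by construction, the triangle inequality in the BL metric closes the induction. Your CF route can be made rigorous in the same spirit, but as stated it has a typechecking problem: $\Pi_{t+1,n}$ converges only in distribution, so the conditional CF cannot ``converge in probability to $\exp(-\tfrac12 s'\Pi_{t+1}\Sigma s)$''---the limit there is a random variable living on a different space. What you actually need is that the difference between the conditional CF and the Gaussian CF evaluated at the \emph{finite-sample} covariance $\Pi_{t+1,n}\Sigma(P_n)$ is $o(1)$ uniformly over the conditioning value, after which the continuous mapping theorem applied to the converging pair $(G_{t,n},\Pi_{t+1,n})$ identifies the limit; this is exactly the role the paper's coupling device plays. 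One further correction: the measurability of the limiting $\Pi_{t+1}$ with respect to $\br{B_s\bY_{1:s}:s\le t}$ is not deduced by passing the finite-sample representation to the limit (weak convergence does not preserve such statements, and your continuity-of-$B_t$ sketch would not justify it); in the paper it is a direct assumption on the limiting random variable (\cref{as:kappa_cmt}(3), invoked in the proof), so no argument is needed there.
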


\begin{proof}
Fix $P_n$ and a subsequence $n_r$ of $n$. By \cref{lemma:subsequence}, there exists a
further subsequence $n_m$ such that the conclusions of \cref{lemma:subsequence} hold.

We induct on $t$. Consider the base case of $G_{1n} = (\Pi_{1n}, Y_{1n}).$ By assumption
$\Pi_{1n}$ is fixed and constant in $n$. Thus it suffices to show that \[
Y_{1n_m} \dto \Norm(0, \Pi_1 \Sigma).
\]
This is true by the moment condition in \cref{as:P_assumptions}, where we apply the
central limit theorem in \cref{lemma:uclt}.

For the inductive case, suppose $G_{tn_m} \dto (\Pi_{1:t}, Y_{1:t})$ for some $1 \le t <
T_0$. We wish to show that $G_{t+1, n_m} \dto (\Pi_{1:t+1}, Y_{1:t+1})$. First, note that
by
a simple Chernoff bound in \cref{lemma:chernoff}, for all $s \le t$, \[
\hat\Pi_{s,n} - \Pi_{s,n} = o_p(1).
\]
Hence, $\hat\Pi_{s,n} \pto \Pi_s$ under the inductive hypothesis. Now,
by our assumptions
on the assignment algorithm in \cref{as:kappa_cmt}(2), \[
(\Pi_{1:t+1, n_m}, Y_{1:t, n_m}) \dto (\Pi_{1:t+1}, Y_{1:t}). \numberthis
\label{eq:pi_converges}
\]
By \cref{as:kappa_cmt}(2), $\Pi_{t+1}$ is measurable with respect to $\pr{B_{t} \bY_{1:t},
\Pi_{1:t}}$, which is further measurable with respect to $\br{B_{s} \bY_{1:s} : s \le t}$.

It remains to show that $Y_{t+1,n_m}$ converges jointly.
Let \[g(\cdot) \in \mathrm{BL}_1 \equiv \br{\sup_x |g(x) | \le 1, |g(x) - g(y)| \le d
(x,y)}\]
be a bounded Lipschitz function with respect to the distance metric $d$, where $d$ is the
product metric for $\Pi_{1:t+1}, Y_{1:t+1}$: \[
d(G_{t+1}, G_{t+1}') = \max_{s \in [t+1], k\in [K]} \br{|\Pi_{s,k} - \Pi'_{s,k}|, |Y_
{s,k} - Y_{s,k}'|}.
\]

Let $U_{n_m} = \Pi_{t+1, n_m}^{1/2}
Z_{n_m}$ for some $Z_{n_m} \sim \Norm(0,\Sigma(P_{n_m}))$
independently.
 Observe that \begin{align*}
&\abs[\big]{
    \E[g(G_{t,n_m}, \Pi_{t+1,n_m}, Y_{t+1,n_m})] - \E[g(G_{t,n_m}, \Pi_{t+1,n_m}, U_
    {n_m})] } \\ &= \abs[\big]{
    \E\bk{\E[g(G_{t,n_m}, \Pi_{t+1,n_m}, Y_{t+1,n_m}) \mid G_{t,n_m}, \Pi_{t+1,n_m}]} - \E
    \bk{\E[g(G_{t,n_m}, \Pi_{t+1,n_m}, U_{n_m}) \mid G_{t,n_m}, \Pi_{t+1,n_m}]}
}
\end{align*}
We claim that for some $c_m$ independent of $g$, \begin{align*}
&\abs[\big]{\E[g(G_{t,n_m}, \Pi_{t+1,n_m}, Y_{t+1,n_m}) \mid G_{t,n_m}, \Pi_{t+1,n_m}] - \E
[g(G_
{t,n_m}, \Pi_{t+1,n_m}, W_{n_m}) \mid G_{t,n_m}, \Pi_{t+1,n_m}]} \\
&\le c_m \to 0
\numberthis
\label{eq:conditional_converge}
\end{align*}
with probability one. Assuming this claim, we would show that \[
\sup_{g \in \mathrm{BL}_1}\abs[\big]{
    \E[g(G_{t,n_m}, \Pi_{t+1,n_m}, Y_{t+1,n_m})] - \E[g(G_{t,n_m}, \Pi_{t+1,n_m}, W_{n_m})]
} \to 0.
\]
Observe that by construction of $U_{n_m}$ and \eqref{eq:pi_converges}, for some $Y_{t+1}$,
\[
(G_{t,n_m}, \Pi_{t+1, n_m}, U_{n_m}) \dto \pr{G_t, \Pi_{t+1}, Y_{t+1}}.
\]
Equivalently, \[
\sup_{g \in \mathrm{BL}_1} \abs[\big]{
    \E[g(G_{t,n_m}, \Pi_{t+1,n_m}, U_{n_m})] - \E[g(G_{t}, \Pi_{t+1}, Y_{t+1})]
} \to 0.
\]
Therefore, by the triangle inequality, \[
\sup_{g\in \mathrm{BL}_1} \abs[\big]{
    \E[g(G_{t,n_m}, \Pi_{t+1,n_m}, Y_{t+1,n_m})] - \E[g(G_{t}, \Pi_{t+1}, Y_{t+1})]
} \to 0.
\]
Equivalently, \[
G_{t+1,n_m} \dto G_{t+1}.
\]

Thus, to conclude the proof, we show \eqref{eq:conditional_converge}. For given values of
$G_
{t,n_m}, \Pi_{t+1, n_m} = (G, \Pi)$, let \[
h(y) = g(G,\Pi, y).
\]
Observe that $h(\cdot)$ is bounded and Lipschitz with respect to $d_y(y_1, y_2) = \max_k
|y_{1k} - y_{2k}|$. Thus, \eqref{eq:conditional_converge} amounts to the following: \[
\abs[\big]{\E_{Y_{t+1,n_m} \mid \Pi} [h(Y_{t+1, n_m})] - \E_{U_{n_m} \mid \Pi}[h(U_
{n_m})]} \le
c_m. \]
Note that by \cref{lemma:uclt} and \cref{as:P_assumptions}, for $\mathsf{prok}$ the
Prokhorov metric (see \cref{lemma:uclt} and Proposition A.5.2 in \citet{van1996weak}),\[
\abs[\big]{\E_{Y_{t+1,n_m} \mid \Pi} [h(Y_{t+1, n_m})] - \E_{U_{n_m} \mid \Pi}[h(U_
{n_m})]} \lesssim \sup_{\Pi \in \Delta^{K-1}, P\in \mathcal P} \mathsf{prok}\pr{Y_{n_m}
(\Pi, P), U_{n_m}
(\Pi, P)} \le c_m \to 0,
\]
where \[
Y_{k,n_m}(\Pi, P) = \frac{1}{\sqrt{n_t}} \sum_{i \in \mathcal I_t} D_{ik} X_i(k) \quad D_
{i} \sim \Mult(1, \Pi), X_i(\cdot) \sim P
\]
and \[U_{n_m}
(\Pi, P) \sim \Norm(0, \Pi \Sigma(P)).
\]
This concludes the proof.
\end{proof}

\subsubsection{Stopping time}

So far, we have shown that $G_{T_0, n_m}$ converges. Now, we consider the following
assumption on the stopping time $T_n$. Let $\Xi_{t, n} = \one(T_n > t)$. We assume that
$\Xi_{t,n}$ satisfies conditions analogous to those satisfied by $\kappa_t$ in
\cref{as:kappa_cmt}. 

\begin{as}
\label{as:T_assumptions}
We assume that for all $t = 1,\ldots, T_0$, \begin{enumerate}[wide]
    \item $\Xi_{t,n} = \Xi_{t,n}\pr{X_{1:t-1,n}, \hat\Pi_{1:t-1,n}, \hat\Sigma_{t-1, n},
    n_{1:t-1}}
    =
    \Xi_{t,n}\pr{
    Y_{1:t-1,n}, \hat\Pi_{1:t-1,n}, \hat\Sigma_{t-1,n}, n_{1:t-1}, \mu_n
    }$ is measurable with respect to $(Y_{1:t-1,n}, \hat\Pi_{1:t-1,n})$ and $\Xi_{T_0,n}
    = 0$
    a.s.
    \item Let $P_{n_m}$ denote a subsequence of the data-generating process satisfying
    the conclusion of \cref{lemma:subsequence}. Then, if \[
(\hat\Pi_{1:T_0,n}, Y_{1:T_0,n}) \dto G_{T_0} = (\Pi_{1:T_0}, Y_{1:T_0})
    \]
    then for all $t=1,\ldots, T_0$, we have the following convergences:\[
\Xi_{t,n_m}\pr{
    Y_{1:t-1,n}, \hat\Pi_{1:t-1,n}, \hat\Sigma_{t-1,n}, n_{1:t-1}, \mu_n
    } \dto \Xi_{t}\pr{Y_{1:t-1}, \Pi_{1:t-1}; h, \sigma_{1:K}^2, c_{1:t-1}}
    \]
    and \[
T_n \dto T \equiv \argmax_{t=1,\ldots, T_0} \, (\Xi_{t-1} - \Xi_{t})
    \]
    jointly.
    \item For all $t = 1,\ldots, T_0-1$, the limiting random variable $\Xi_{t}\pr{Y_
    {1:t-1},
    \Pi_{1:t-1}; h, \sigma_{1:K}^2, c_{1:t-1}}$ is measurable with respect to the same
    random variables as $\Pi_{t}$ is in \cref{as:kappa_cmt}(3). That is, $\Xi_t$ is
    measurable with respect to $\br{B_s \bY_{1:s} : s\le t-1}$ for matrices $B_s$ in
    \cref{as:kappa_cmt}(3).

\end{enumerate}
\end{as}

Note that we can write \[L_n - \lambda_n'\mu = \sum_{s=1}^{T_0} \underbrace{(\Xi_{s-1,n} -
\Xi_{s,n})}_{\one(T_n=s)}\sum_{t=1}^{s-1} \sum_
{k=1}^{K} \frac{n_t \hat\Pi_{tkn}}{\sqrt{n} \hat\sigma_{skn}^2} (X_{tkn} - \mu)\]
and \[
Y_{T_n, n} = \sum_{s=1}^{T_0} (\Xi_{s-1,n} - \Xi_{s,n}) Y_{sn}.
\]
The following corollary shows the asymptotic analogue of \cref{thm:leftover}.
\begin{cor}
\label{cor:large_sample}
Additionally under \cref{as:kappa_cmt}(3) and \cref{as:T_assumptions}, on the subsequence $n_m$,
\[\colvecb{2}{L_{n_m} - \lambda_n'\mu_{n_m}}{Y_{T,n_m}} \dto \colvecb{2}{
1'\Sigma^{-1} Y_{:T-1}}{Y_{T}},\]
where, in the limit, \[
\colvecb{2}{1'\Sigma^{-1} Y_{:T-1}}{Y_{T}} \mid \pr{T, \br{B_{s} \bY_{1:s} : s \le T-1}}
\sim
\Norm\pr{
    \colvecb{2}{0}{0}, \begin{bmatrix}
       1'\Sigma^{-1} \Pi_{:T-1}  & 0 \\
       0 & \Sigma\Pi_T
    \end{bmatrix}
}
\]
\end{cor}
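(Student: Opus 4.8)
The plan is to derive \cref{cor:large_sample} by composing the joint weak convergence of the data established upstream with the continuous mapping theorem, and then reading off the limiting conditional law from \cref{lemma:leftover} applied to the limit process.

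First I would rewrite $L_n-\lambda_n'\mu_n$ and $Y_{T_n,n}$ as explicit functions of quantities that converge jointly. Using the identity $\frac{n_t\hat\Pi_{tkn}}{\sqrt n}(X_{tkn}-\mu_{nk})=\sqrt{n_t/n}\,Y_{tkn}$ (both sides vanishing when $N_{tkn}=0$), the displays preceding the corollary read
\[
L_n-\lambda_n'\mu_n=\sum_{s=1}^{T_0}(\Xi_{s-1,n}-\Xi_{s,n})\sum_{t=1}^{s-1}\sum_{k=1}^K\frac{\sqrt{n_t/n}}{\hat\sigma^2_{skn}}Y_{tkn},\qquad Y_{T_n,n}=\sum_{s=1}^{T_0}(\Xi_{s-1,n}-\Xi_{s,n})Y_{sn}.
\]
Thus the pair is a fixed measurable function of $(Y_{1:T_0,n},\hat\Pi_{1:T_0,n},\hat\Sigma_{1:T_0,n},T_n)$ whose dependence on $T_n$ is only through the indicators $\Xi_{s-1,n}-\Xi_{s,n}=\one(T_n=s)$. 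On the subsequence $n_m$, \cref{thm:main_convergence} gives $(\hat\Pi_{1:T_0,n_m},Y_{1:T_0,n_m})\dto(\Pi_{1:T_0},Y_{1:T_0})$, \cref{lemma:subsequence}(5) gives $\hat\sigma^2_{skn_m}\pto\sigma_k^2$ for every $s,k$, and \cref{as:T_assumptions}(2) gives $\Xi_{s,n_m}\dto\Xi_s$ and $T_n\dto T$, all jointly (joint convergence of the variance estimates is automatic by Slutsky, their limits being constants). The displayed map is continuous at every limiting configuration except those exhibiting a tie in the $\argmax$ defining $T$, which the limit law assigns probability zero under \cref{as:T_assumptions}. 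Hence the continuous mapping theorem, together with $n_t/n\to c_t$ and the collapse of the estimated variances to $\sigma^2_k$, yields
\[
\colvecb{2}{L_{n_m}-\lambda_{n}'\mu_{n_m}}{Y_{T,n_m}}\dto\colvecb{2}{1'\Sigma^{-1}\sum_{t=1}^{T-1}\sqrt{c_t}\,Y_t}{\sum_{s=1}^{T_0}(\Xi_{s-1}-\Xi_s)Y_s}=\colvecb{2}{1'\Sigma^{-1}Y_{:T-1}}{Y_T}.
\]

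It remains to identify the conditional law of this limit, for which I would invoke \cref{lemma:leftover}. The limiting process $(\Pi_{1:T_0},Y_{1:T_0})$ with stopping time $T$ satisfies that lemma's hypotheses: conditionally $Y_t\mid Y_{1:t-1},\Pi_{1:t}\sim\Norm(0,\Pi_t\Sigma)$ (by \cref{thm:main_convergence}); the matrices $B_t=B_t(\Pi_{1:t},c_{1:t})$ of \cref{as:kappa_cmt}(3) satisfy the analogue of \eqref{eq:B_condition}; and, combining \cref{as:kappa_cmt}(3) and \cref{as:T_assumptions}(3) with an induction on $t$, both $\Pi_{t+1}$ and $\one(T>t+1)$ are measurable with respect to $\{B_s\bY_{1:s}:s\le t\}$. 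The lemma also explicitly permits $\Pi_{tk}=0$ with positive probability, which is exactly the regime the limit can reach. Applying \cref{lemma:leftover} with its ``$\sqrt n(L-\lambda'\mu)$'' identified with $1'\Sigma^{-1}\sum_{t=1}^{T-1}\sqrt{c_t}Y_t$ here gives that, conditional on $(T,\{B_s\bY_{1:s}:s\le T-1\})$, the vector $(1'\Sigma^{-1}Y_{:T-1},Y_T)$ is Gaussian with mean zero and block-diagonal covariance $\diag(1'(\Sigma^{-1}\sum_{t=1}^{T-1}c_t\Pi_t)1,\ \Pi_T\Sigma)$; since $\sum_{t=1}^{T-1}c_t\Pi_t=\Pi_{:T-1}$, this is the asserted covariance.

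The step I expect to be the main obstacle is the continuity bookkeeping in the continuous mapping argument: since the dependence on $T_n$ enters only through the step functions $\one(T_n=s)$, the map fails to be continuous exactly where the limiting rule $T=\argmax_s(\Xi_{s-1}-\Xi_s)$ is not uniquely determined, so one must verify that such ties form a null set under the limit law — precisely the role of the LIAC continuity conditions behind \cref{as:Xi_continuity} and \cref{as:T_assumptions}. Once that and the joint convergence (including $\hat\sigma^2_{skn}\pto\sigma^2_k$) are in place, the reduction to \cref{lemma:leftover} is routine.
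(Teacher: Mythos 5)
Your proposal is correct and follows essentially the same route as the paper: the weak convergence step is the continuous mapping argument that the paper packages as \cref{lemma:convergence_subscript} (you simply inline it, writing the statistics via the indicators $\Xi_{s-1,n}-\Xi_{s,n}$), and the identification of the limiting conditional law is exactly the paper's application of \cref{lemma:leftover} using \cref{as:kappa_cmt}(3) and \cref{as:T_assumptions}(3). Your worry about argmax ties is unnecessary once the map is expressed directly in the $\Xi$'s, since it is then continuous in all arguments and the joint convergence of $(\Xi_{1:T_0,n_m},Y_{1:T_0,n_m},\hat\Sigma)$ suffices, but this does not affect correctness.
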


\begin{proof}
The weak convergence is by the continuous mapping theorem and \cref{lemma:convergence_subscript}, noting that
$\hat\sigma_
{skn}
\to \sigma_k^2$ and $n_t/n \to c_t$ for all $s, t$. 

The second claim is a claim on the properties of the limiting distribution. Note that
\cref{as:kappa_cmt}(3) and \cref{as:T_assumptions}(3) ensure that the limiting assignment
probabilities and the limiting stopping time are functions of certain differences in
$Y_{1:t}$. As a result, we can apply \cref{lemma:leftover} directly to finish the proof.
\end{proof}

\subsection{Behavior of the test statistic } We now rewrite the empirical analogue of the
least-squares coefficient
\eqref{eq:real_wls_coef}: \[
S_n^\star = \pr{\frac{\lambda_n \lambda'_n / \sqrt{n}}{\lambda'_n1} + \frac{n_{T_n}}{n}
\hat\Sigma^{-1}_n \hat \Pi_{T_nn} }^+
\pr{\frac{\lambda_n}{\lambda_n'1} \frac{L_n}{\sqrt{n}} + \frac{n_{T_n}}{n} \hat\Sigma^
{-1}_n
\hat\Pi_
{T_nn} X_{T_nn}}.
\]
Let $\eta_n = \eta(\hat\Pi_{1:T_n,n})$ be a direction of inference. We additionally assume that when the experiment stops, the probability of the last batch
is pruned so as to exclude $(0, \epsilon)$.

\begin{as}[Pruning]
\label{as:pruning_general}
For some $\epsilon > 0$, the random variable \[\Pi_{T_n,n} = \kappa_{T_n, n}
    (\cdot) = \sum_{s=1}^T (\Xi_{s-1,n} -
\Xi_{s,n}) \kappa_{s,n}(\cdot)\] takes values outside
    of $\br{q \in \Delta^{K-1} : q_k \in (0, \epsilon) \text{ for some $k \in  [K]$}}$
    with probability one.
\end{as}

We assume that $\eta(\cdot)$ is suitably continuous. 
\begin{as}
\label{as:eta_assumptions}

Under a subsequence $n_m$ of $n$ that satisfies the consequences of 
\cref{lemma:subsequence,thm:main_convergence}, if \[
(\hat\Pi_{1:T_0, n}, Y_{1:T_0, n}) \dto G_{T_0} \text{ and } T_n \dto T 
\]
Then 
$\eta_n = \eta\pr{
    Y_{1:T_n,n}, \hat\Pi_{1:T_n,n}, T_n, \hat\Sigma_{T_n,n}, n_{1:T_n}, \mu_n
    } \dto \eta(Y_{1:T}, \Pi_{1:T}, T, \Sigma, c_{1:T})$ along the subsequence $n_m$.
    Moreover, $\eta$ is measurable with respect to $(T, \br{B_{s} \bY_{1:s} : s \le T-1})$
    for matrices $B$ in \cref{as:kappa_continuity}. 
We also assume that
$\norm{\eta_{n_m}} > \epsilon > 0$ with probability one.
Moreover, $\eta(T, \Pi_{1:T})$ is in the column space of \[
\frac{\lambda \lambda'}{1' \lambda} + c_T \Sigma^{-1}\Pi_T
\]
almost surely and $\eta_{n_m}$ is in the column space of \[
\frac{\lambda_{n_m} \lambda'_{n_m} / \sqrt{n_m}}{\lambda'_{n_m}1} + \frac{n_T}{n_m}
\hat\Sigma^{-1}_{n_m} \hat \Pi_{Tn_m}
\]
almost surely for all sufficiently large $m$.
\end{as}

Under this assumption, note that \[
\sqrt{n}(\eta_n'S_n^\star - \eta_n'\mu_n) = \eta_n'\pr{\frac{\lambda_n \lambda'_n /
\sqrt{n}}{\lambda'_n1} + \frac{n_T}{n}
\hat\Sigma^{-1}_n \hat \Pi_{Tn} }^+
\pr{\frac{\lambda_n}{\lambda_n'1} \pr{L_n - \lambda_n' \mu_n} + \sqrt{c_T}
\hat\Sigma^{-1}_n Y_{Tn}}
\]
since $\eta'_n Q^+Q \mu = \eta_n'\mu$ if $\eta_n$ is in the column space of the
symmetric matrix $Q$. Let \[\hat\sigma^2_{\tau,n} = \eta_n' \pr{\frac{\lambda_n \lambda'_n /
\sqrt{n}}{\lambda'_n1} + \frac{n_T}{n}
\hat\Sigma^{-1}_n \hat \Pi_{Tn} }^+ \eta_n.\]
Then, we obtain the uniform convergence of the test statistic.
\begin{theorem}
\label{thm:standard_normal}
Under \cref{as:P_assumptions,as:kappa_cmt,as:T_assumptions,as:pruning_general,as:eta_assumptions}, for every sequence $P_n
\in \mathcal P$ and every subsequence $n_r$ of $n$, there exists a further subsequence
$n_m$ such that, jointly, \[
\frac{\sqrt{n_m}(\eta_{n_m}'S_{n_m}^\star - \eta_{n_m}'\mu_{n_m})}{\hat \sigma_{\tau,
n_m}} \dto Z \quad (\eta_{n_m}, \Pi_{1:T_{n_m}}, T_{n_m}) \dto (\eta, \Pi_{1:T}, T)
\]
where $
Z \mid (\eta, \Pi_{1:T}, T) \sim \Norm(0,1). 
$
\end{theorem}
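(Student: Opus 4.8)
The plan is to write the statistic as a continuous functional of quantities whose joint weak limit is pinned down by \cref{thm:main_convergence} and \cref{cor:large_sample}, and then to read off the conditional law from the block structure in \cref{cor:large_sample}. Fix $P_n\in\mathcal P$ and a subsequence $n_r$. Chaining \cref{lemma:subsequence}, \cref{thm:main_convergence}, \cref{cor:large_sample}, \cref{as:T_assumptions}(2), and \cref{as:eta_assumptions}, pass to a further subsequence $n_m$ along which, jointly: $\hat\Sigma_{n_m}\pto\Sigma$; $\lambda_{n_m}/\sqrt{n_m}\pto\lambda$ and $n_{T_{n_m}}/n_m\to c_T$; $(\Pi_{1:T_0,n_m},Y_{1:T_0,n_m})\dto(\Pi_{1:T_0},Y_{1:T_0})$ with $Y_t\mid Y_{1:t-1},\Pi_{1:t}\sim\Norm(0,\Pi_t\Sigma)$; $T_{n_m}\dto T$ and $\eta_{n_m}\dto\eta$; and $\pr{L_{n_m}-\lambda_{n_m}'\mu_{n_m},\,Y_{T,n_m}}\dto\pr{1'\Sigma^{-1}Y_{:T-1},\,Y_T}$. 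Write $M_n:=\frac{\lambda_n\lambda_n'/\sqrt n}{\lambda_n'1}+\frac{n_{T_n}}{n}\hat\Sigma_n^{-1}\hat\Pi_{T_nn}$ and $M:=\frac{\lambda\lambda'}{\lambda'1}+c_T\Sigma^{-1}\Pi_T$, so $M_n\pto M$ (well defined since $\lambda'1>0$ by $\Pi_{1k}\ge\epsilon$ and $T\ge2$). Because \cref{as:eta_assumptions} places $\eta_n$ in the column space of $M_n$, the identity stated just before the theorem reads
\[
\sqrt n\,\pr{\eta_n'S_n^\star-\eta_n'\mu_n}=\eta_n'M_n^+\pr{\frac{\lambda_n}{\lambda_n'1}\bigl(L_n-\lambda_n'\mu_n\bigr)+\sqrt{n_{T_n}/n}\,\hat\Sigma_n^{-1}Y_{T_nn}},
\]
and $\hat\sigma_{\tau,n}^2=\eta_n'M_n^+\eta_n$ (with $\hat\sigma_{\tau,n}$ its nonnegative square root); both are functions of the converging quantities above.

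The one genuinely delicate point is $M_n^+\pto M^+$, since $A\mapsto A^+$ is continuous only where $\operatorname{rank}A$ is locally constant. Writing $M_n=\frac{v_nv_n'}{v_n'1}+D_n$ with $v_n=\lambda_n/\sqrt n\ge0$ and $D_n=\frac{n_{T_n}}{n}\hat\Sigma_n^{-1}\hat\Pi_{T_nn}\ge0$ diagonal, positive semidefiniteness gives $\ker M_n=\{x:v_n'x=0\}\cap\ker D_n$, so $\operatorname{rank}M_n=K-|Z_n|+\one\bigl(Z_n\neq\emptyset,\ v_n|_{Z_n}\neq0\bigr)$ for $Z_n=\{k:\hat\Pi_{T_nkn}=0\}$. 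By \cref{as:kappa_cmt}(1) the first batch assigns every arm with probability at least $\epsilon$, so $\hat\Pi_{1kn}\pto\Pi_{1k}\ge\epsilon$ and hence $v_n>0$ coordinatewise with probability $\to1$; by the pruning in \cref{as:pruning_general}, $\Pi_{T_nkn}\in\{0\}\cup[\epsilon,1]$, so $Z_n$ agrees with the zero pattern of $\Pi_T$ with probability $\to1$. Hence $\operatorname{rank}M_n=K-|Z|+\one(Z\neq\emptyset)=\operatorname{rank}M$ eventually, where $Z=\{k:\Pi_{Tk}=0\}$, giving $M_n^+\pto M^+$. The continuous mapping theorem then yields, jointly with $(\eta_{n_m},\Pi_{1:T_{n_m}},T_{n_m})\dto(\eta,\Pi_{1:T},T)$,
\[
\sqrt{n_m}\,\pr{\eta_{n_m}'S_{n_m}^\star-\eta_{n_m}'\mu_{n_m}}\dto\xi:=\eta'M^+\pr{\frac{\lambda}{\lambda'1}\,1'\Sigma^{-1}Y_{:T-1}+\sqrt{c_T}\,\Sigma^{-1}Y_T},\qquad\hat\sigma_{\tau,n_m}^2\pto\eta'M^+\eta.
\]

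It remains to identify the conditional law of $\xi$. Put $\mathcal G=\sigma\pr{T,\{B_s\bY_{1:s}:s\le T-1\}}$. By \cref{cor:large_sample}, $\pr{1'\Sigma^{-1}Y_{:T-1},Y_T}\mid\mathcal G$ is Gaussian with mean $0$ and covariance $\diag\bigl(1'\Sigma^{-1}\Pi_{:T-1}1,\ \Sigma\Pi_T\bigr)$, while \cref{as:kappa_cmt}(3), \cref{as:T_assumptions}(3), and \cref{as:eta_assumptions} make $\Pi_{1:T},T,\eta$ — hence $\lambda,c_T,\Sigma,M,M^+$ — measurable with respect to $\mathcal G$. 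Therefore $\xi\mid\mathcal G$ is Gaussian with mean $0$; using $1'\Sigma^{-1}\Pi_{:T-1}1=\lambda'1$ (both equal $\sum_k\sigma_k^{-2}\sum_{t<T}c_t\Pi_{tk}$) and the conditional independence of the two blocks, the bracketed term has conditional covariance $\frac{\lambda\lambda'}{\lambda'1}+c_T\Sigma^{-1}\Pi_T=M$, so $\var(\xi\mid\mathcal G)=\eta'M^+MM^+\eta=\eta'M^+\eta$, which is bounded away from $0$ because $\|\eta\|>\epsilon$, $\eta\in\operatorname{col}M$, and $\|M\|$ is bounded (by \cref{as:P_assumptions} and $\Pi_{1k}\ge\epsilon$). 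Hence $\xi/\sqrt{\eta'M^+\eta}\mid\mathcal G\sim\Norm(0,1)$; since this law is free of $\mathcal G$ and $(\eta,\Pi_{1:T},T)$ is $\mathcal G$-measurable, also $\xi/\sqrt{\eta'M^+\eta}\mid(\eta,\Pi_{1:T},T)\sim\Norm(0,1)$. Combining with $\hat\sigma_{\tau,n_m}^2\pto\eta'M^+\eta>0$ via Slutsky and the continuous mapping theorem gives $\sqrt{n_m}(\eta_{n_m}'S_{n_m}^\star-\eta_{n_m}'\mu_{n_m})/\hat\sigma_{\tau,n_m}\dto Z:=\xi/\sqrt{\eta'M^+\eta}$ jointly with $(\eta_{n_m},\Pi_{1:T_{n_m}},T_{n_m})\dto(\eta,\Pi_{1:T},T)$, with $Z\mid(\eta,\Pi_{1:T},T)\sim\Norm(0,1)$. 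The rank-stabilization step for $M_n^+\pto M^+$ is the crux; the remainder is bookkeeping with the continuous mapping theorem and the block structure already established in \cref{cor:large_sample}.
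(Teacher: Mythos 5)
Your argument is correct and follows essentially the same route as the paper's proof: the same decomposition of $\sqrt{n}\,(\eta_n'S_n^\star-\eta_n'\mu_n)$ via the column-space condition in \cref{as:eta_assumptions}, the same reliance on \cref{cor:large_sample} for the conditional Gaussian block structure (including the identity $1'\Sigma^{-1}\Pi_{:T-1}1=\lambda'1$), and the same crux---rank stabilization of $\frac{\lambda\lambda'}{\lambda'1}+c_T\Sigma^{-1}\Pi_T$ under pruning so that the pseudoinverse passes to the limit---which the paper packages as \cref{lemma:penrose} and \cref{lemma:moore_penrhose_converges} and you re-derive inline. The only caveat is technical phrasing: since $M$, $\Pi_T$, $\lambda$ are random weak limits, your ``$\pto$'' claims and the statement that the zero pattern of $\hat\Pi_{T_n,n}$ agrees with that of $\Pi_T$ ``with probability $\to 1$'' should be formalized via the extended continuous mapping theorem applied to deterministic sequences in the supports (as in \cref{lemma:penrose}), or a Skorokhod coupling, and should also note that the Chernoff bound transfers the pruning restriction from $\Pi_{T_n,n}$ to the realized frequencies $\hat\Pi_{T_n,n}$ actually appearing in $S_n^\star$.
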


\begin{proof}
Let $n_m$ be the subsequence that satisfies \cref{cor:large_sample}. Under
\cref{as:eta_assumptions}, $\eta_{n_m} \dto \eta$ jointly with $(\Pi_{1:T_{n_m}}, T_{n_m})
\dto (\Pi_{1:T}, T)$.
By the continuous mapping theorem and \cref{lemma:convergence_subscript}, \[
\frac{\lambda_{n_m} \lambda'_{n_m} / \sqrt{n_m}}{\lambda'_{n_m}1} + \frac{n_{T_{n_m}}}
{n_m}
\hat \Sigma^{-1}_{n_m} \hat \Pi_{T_{n_m}n_m}  \dto \frac{\lambda' \lambda}{\lambda'1} +
c_T
\Sigma^
{-1}
\Pi_T
\]
jointly with $G_{T_{n_m},n_m} \dto G_{T, n_m}$.

Next, \cref{lemma:moore_penrhose_converges} shows that the convergence in the above
display is preserved by the Moore-Penrose pseudoinverse, despite the discontinuity in the
pseudoinverse. This is because of pruning.

Next, we show that the standard error is positive in the limit. Note that there exists
some $c > 0$ such that, uniformly over $\mathcal P$, \[
\eta'\pr{\frac{\lambda' \lambda}{\lambda'1} + c_T
    \Sigma^
  {-1}
  \Pi_T}^+\eta \ge \lambda_{\max}^{-1}\pr{\frac{\lambda' \lambda}{\lambda'1} + c_T
    \Sigma^
  {-1}
  \Pi_T} \norm{\eta} > c > 0
\]
where $\lambda_{\max}(A)$ is the maximum eigenvalue of $A$.
Thus, continuous mapping theorem implies that \begin{align*}
\frac{\sqrt{n}(\eta_n'S_n^\star - \eta_n'\mu_n)}{\hat\sigma_{\tau,n}} \dto \frac{\eta'\pr{\frac{\lambda' \lambda}{\lambda'1} + c_T
    \Sigma^
  {-1}
  \Pi_T}^+\pr{\frac{\lambda}{\lambda'1} 1'\Sigma^{-1} Y_{:T-1} + \sqrt{c_T} \Sigma^{-1}
  Y_T}}
{\sqrt{\eta'\pr{\frac{\lambda' \lambda}{\lambda'1} + c_T
    \Sigma^
  {-1}
  \Pi_T}^+\eta}} \equiv Z.
\end{align*}
Note that by \cref{cor:large_sample} \begin{align*}
\eta'\pr{\frac{\lambda' \lambda}{\lambda'1} + c_T
    \Sigma^
  {-1}
  \Pi_T}^+\pr{\frac{\lambda}{\lambda'1} 1'\Sigma^{-1} Y_{:T-1} + \sqrt{c_T} \Sigma^{-1}
  Y_T} \mid \eta, \Pi_{1:T}, T \\\sim \Norm\pr{0, \eta'\pr{\frac{\lambda' \lambda}
  {\lambda'1} + c_T
    \Sigma^
  {-1}
  \Pi_T}^+\eta}
\end{align*}
 Thus, \[
\frac{\eta'\pr{\frac{\lambda' \lambda}{\lambda'1} + c_T
    \Sigma^
  {-1}
  \Pi_T}^+\pr{\frac{\lambda}{\lambda'1} 1'\Sigma^{-1} Y_{:T-1} + \sqrt{c_T} \Sigma^{-1}
  Y_T}}
{\sqrt{\eta'\pr{\frac{\lambda' \lambda}{\lambda'1} + c_T
    \Sigma^
  {-1}
  \Pi_T}^+\eta}} \mid \eta, \Pi_{1:T}, T \sim \Norm(0,1).
 \]
 This completes the proof since $\eta, \Pi_{1:T}, T$ is measurable with respect to \[
(T, \br{B_s \bY_{1:s} : s \le T-1}). 
 \]
\end{proof}
\subsection{Conditional coverage performance}
\begin{as}
\label{as:weighting_assumptions}
Let $f(\Pi_{1:T}, T,\eta) < C < \infty$ be a bounded function such that along all
subsequences $P_{n_m} \in \mathcal P$, if \[ (\hat\Pi_{1:T_{n_m},n_m}, T_{n_m},
\eta_{n_m}) \dto \pr{\Pi_{1:T}, T, \eta}
\]
then \[
f(\hat\Pi_{1:T_{n_m},n_m}, T_{n_m}, \eta_{n_m}) \dto f\pr{\Pi_{1:T}, T, \eta}.
\]
\end{as}

\begin{theorem}
\label{thm:appendix_main}
Suppose
\cref{as:P_assumptions,as:kappa_cmt,as:T_assumptions,as:pruning_general,as:eta_assumptions} hold and let $\alpha \in (0,0.5)$. Let $f(\Pi_{1:T}, T, \eta) > 0$ be a function satisfying \cref{as:weighting_assumptions}. Then, 
\[
\limsup_{n\to\infty} \sup_{P \in \mathcal P} \abs[\bigg]{\E_P \bk{\one \pr{ \abs{\eta_n'S_
{n}^\star -
\eta'_n\mu(P)}  >
z_{1-\alpha/2} \frac{\hat\sigma_{\tau, n}}{\sqrt{n}}} f(\Pi_{1:T_n,n}, T_n, \eta_{n})} -
\alpha \E_P[f(\Pi_{1:T_n, n}, T_n, \eta_{n})]} =
0.
\]
where $z_{1-\alpha/2} = \Phi^{-1}(1-\alpha/2)$.
\end{theorem}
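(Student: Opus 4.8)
The plan is a subsequencing argument by contradiction resting on \cref{thm:standard_normal}. Write $c_\alpha = z_{1-\alpha/2}$, let $t_n = \sqrt{n}\,(\eta_n'S_n^\star - \eta_n'\mu(P))/\hat\sigma_{\tau,n}$ be the studentized statistic, let $D_n = (\Pi_{1:T_n,n}, T_n, \eta_n)$, and let $D = (\Pi_{1:T}, T, \eta)$ denote its weak limit. First I would record two reductions. Under \cref{as:eta_assumptions} the symmetric PSD matrix defining $\hat\sigma^2_{\tau,n}$ has $\eta_n$ in its column space and $\norm{\eta_n} > \epsilon$, so $\hat\sigma_{\tau,n} > 0$ almost surely and the event $\br{\abs{\eta_n'S_n^\star - \eta_n'\mu(P)} > c_\alpha\,\hat\sigma_{\tau,n}/\sqrt{n}}$ coincides a.s.\ with $\br{\abs{t_n} > c_\alpha}$. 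Consequently the quantity inside the outer absolute value in the statement equals $\E_P[\psi(t_n, D_n)]$, where $\psi(z,d) \equiv (\one(\abs{z} > c_\alpha) - \alpha)\,f(d)$ is bounded by $(1+\alpha)C$.

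Now suppose the claimed limit is not $0$. Then there exist $\epsilon_0 > 0$, a subsequence $n_r \to \infty$, and $P_{n_r} \in \mathcal P$ with $\abs{\E_{P_{n_r}}[\psi(t_{n_r}, D_{n_r})]} \ge \epsilon_0$ for all $r$; extend $\br{P_{n_r}}$ to an arbitrary sequence $\br{P_n} \subset \mathcal P$. Applying \cref{thm:standard_normal} to this sequence and the subsequence $n_r$ produces a further subsequence $n_m$ along which $t_{n_m} \dto Z$ jointly with $D_{n_m} \dto D$, where $Z \mid D \sim \Norm(0,1)$. The map $\psi$ is continuous at every $(z,d)$ with $\abs{z} \ne c_\alpha$ and $f$ continuous at $d$; since $Z$ is marginally $\Norm(0,1)$ we have $\P(\abs{Z} = c_\alpha) = 0$, and \cref{as:weighting_assumptions} guarantees that $f$ is $\operatorname{Law}(D)$-almost everywhere continuous, so $\psi$ is continuous on a set of full measure under the joint limit law. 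The continuous mapping theorem then gives $\psi(t_{n_m}, D_{n_m}) \dto \psi(Z, D)$, and boundedness of $\psi$ upgrades this to $\E_{P_{n_m}}[\psi(t_{n_m}, D_{n_m})] \to \E[\psi(Z, D)]$.

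Finally I would evaluate the limit. Conditioning on $D$ and using $Z \mid D \sim \Norm(0,1)$ gives $\E[\one(\abs{Z} > c_\alpha) \mid D] = \alpha$ almost surely, so by the tower property $\E[\psi(Z,D)] = \E[f(D)(\E[\one(\abs{Z} > c_\alpha)\mid D] - \alpha)] = 0$, contradicting $\abs{\E_{P_{n_m}}[\psi(t_{n_m}, D_{n_m})]} \ge \epsilon_0$ for all $m$. The one genuinely delicate step is the continuous-mapping step: the integrand contains the discontinuous indicator $\one(\abs{\cdot} > c_\alpha)$, which is precisely why we need the full conclusion of \cref{thm:standard_normal}—that the limiting studentized statistic is \emph{unconditionally} standard normal, hence atomless at $\pm c_\alpha$, rather than merely conditionally tight. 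Everything else reduces to bounded convergence together with the conditional-normality conclusion.
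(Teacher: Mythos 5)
Your proposal is correct and follows essentially the same route as the paper's proof: a subsequencing argument by contradiction, invoking \cref{thm:standard_normal} to obtain joint weak convergence of the studentized statistic and the design along a further subsequence, \cref{as:weighting_assumptions} for the weight $f$, the continuous mapping theorem plus boundedness to pass to expectations, and conditional normality of $Z$ given $(\Pi_{1:T},T,\eta)$ to evaluate the limit as $\alpha\,\E[f]$. Your explicit handling of the atomlessness of $Z$ at $\pm z_{1-\alpha/2}$ and of the positivity of $\hat\sigma_{\tau,n}$ only makes explicit what the paper leaves implicit; the substance is the same.
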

\begin{proof}
Suppose to the contrary, for some $\epsilon > 0$, \[
\limsup_{n\to\infty} \sup_{P \in \mathcal P} \abs[\bigg]{\E_P \bk{\one \pr{ \abs{\eta_n'S_
{n}^\star -
\eta'_n\mu(P)}  >
z_{1-\alpha/2} \frac{\hat\sigma_{\tau, n}}{\sqrt{n}}} f(\Pi_{1:T_n,n}, T_n, \eta_{n})} -
\alpha \E_P[f(\Pi_{1:T_n, n}, T_n, \eta_{n})]} > 2\epsilon.\]
Then there exists a subsequence $P_{n_r}$ where for all $r > 0$, \[
\limsup_{n\to\infty} \sup_{P \in \mathcal P} \abs[\bigg]{\E_P \bk{\one \pr{ \abs{\eta_n'S_
{n}^\star -
\eta'_n\mu(P)}  >
z_{1-\alpha/2} \frac{\hat\sigma_{\tau, n}}{\sqrt{n}}} f(\Pi_{1:T_n,n}, T_n, \eta_{n})} -
\alpha \E_P[f(\Pi_{1:T_n, n}, T_n, \eta_{n})]} > \epsilon. \numberthis \label{eq:n_r_contradiction}
\]

By \cref{thm:standard_normal}, we can extract a further subsequence $n_m$ along which \[
\sqrt{n_m}\frac{\abs{\eta_{n_m}'S_{n_m}^\star - \eta'_{n_m}\mu(P_{n_m})}}{
\hat\sigma_{\tau, n_m} } \dto Z \numberthis \label{eq:n_m_normal}
\]
and \[
(\eta_{n_m}, \Pi_{1:T_{n_m}}, T_{n_m}) \dto (\eta, \Pi_{1:T}, T)
\]
where \[
Z \mid (\eta, \Pi_{1:T}, T) \sim \Norm(0,1). 
\]

By \cref{as:weighting_assumptions}, we additionally have \[
f(\Pi_{1:T_{n_m}, n_m}, T_{n_m}, \eta_{n_m}) \dto f(\Pi_{1:T}, T, \eta). 
\]
Thus \[
\one\pr{\sqrt{n_m}\frac{\abs{\eta_{n_m}'S_{n_m}^\star - \eta'_{n_m}\mu(P_{n_m})}}{
\hat\sigma_{\tau, n_m} } < z_{1-\alpha/2}} f(\Pi_{1:T_{n_m}, n_m}, T_{n_m}, \eta_{n_m}) \dto \one(Z < z_{1-\alpha/2}) f(\Pi_{1:T}, T, \eta). 
\]
Since $f$ is bounded, we have that the expectations also converge. Note that \[\E[\one(Z <
z_{1-\alpha/2}) f(\Pi_{1:T}, T, \eta)] = \alpha \E[f(\Pi_{1:T}, T, \eta)].\] Therefore,
along this subsequence $n_m$, the limit is zero, contradicting
\eqref{eq:n_r_contradiction}.
\end{proof}

\subsection{Auxiliary lemmas}

\begin{lemma}[Uniform central limit theorem via Lyapunov condition]
\label{lemma:uclt}
    Let $\pi \in \Delta^{K-1}$ be a vector of probabilities. Let $X_i,\ldots, X_n \iid
    Q$ be $\R^K$-valued random vectors for some distribution $Q \in \mathcal Q$ in a
    family $\mathcal Q$, with $\E_Q[X] = \mu
    (Q)$. Let $\Sigma(Q)= \diag(\var_Q(X_1), \ldots, \var_Q(X_K))$.
    Let $Y(Q, \pi) \in
    \R^K$ be such that its $k$\th{} coordinate is \[
Y_{k} (Q, \pi) = \frac{1}{\sqrt{n}}\sum_{i=1}^n D_{ik}(X_{ik} - \mu_k(Q)) \quad D_i =
[D_{i1},\ldots, D_
{iK}]' \iid \Mult(1, \pi).
    \]
    Correspondingly, write $Y(Q, \pi) = \frac{1}{\sqrt{n}} \sum_{i=1}^{n} Y_i$.
    Suppose that there exists $c > 0$ such
    that \[\sup_{Q \in \mathcal Q} \E_Q \bk{
        \norm{X}^{2+c}} < C < \infty,\] where $\norm{\cdot}$ denotes the Euclidean norm.
    Then, there exists $c_n \to 0$ dominating the \emph{Prokhorov metric}:   \[
    \sup_{Q \in \mathcal Q} \sup_{\pi \in \Delta^{K-1}} \mathsf{prok}(Y(Q, \pi), \Norm
    (0, \Sigma(Q)\Pi))
    \le c_n
    \to 0,
    \]
    where $\mathsf{prok}$ denotes the Prokhorov metric and $\Pi = \diag(\pi)$. Since the
    Prokhorov metric
    dominates the bounded Lipschitz metric, the same statement hold for the latter.
\end{lemma}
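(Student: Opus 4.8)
The plan is to prove the displayed bound with the specific choice
$c_n := \sup_{Q \in \mathcal Q}\sup_{\pi \in \Delta^{K-1}} \mathsf{prok}\pr{Y(Q,\pi), \Norm(0,\Sigma(Q)\Pi)}$, that is, to show $c_n \to 0$. First I would record the elementary facts about the summands. Writing $Y_i = (D_{ik}(X_{ik} - \mu_k(Q)))_{k=1}^K$, independence of $D_i$ and $X_i$ gives $\E_Q[Y_i] = 0$; and since $D_{ik}D_{i\ell} = 0$ almost surely for $k \neq \ell$ (a single multinomial trial), the covariance of $Y_i$ under $Q$ is $\Pi \Sigma(Q) = \Sigma(Q)\Pi$, both matrices being diagonal, so $\Norm(0,\Sigma(Q)\Pi)$ is exactly the Gaussian with the summands' covariance. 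Moreover $\norm{Y_i}^{2+c} = \abs{X_{ik^\star} - \mu_{k^\star}(Q)}^{2+c} \le \sum_{k} \abs{X_{ik} - \mu_k(Q)}^{2+c}$, where $k^\star$ is the arm to which unit $i$ is assigned, so by the $c_r$-inequality and the hypothesis $\E_Q\norm{X}^{2+c} < C$ there is a constant $C'$ depending only on $K,c,C$ with $\sup_{Q \in \mathcal Q}\E_Q\norm{Y_i}^{2+c} \le C'$.

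Next I would argue by contradiction along subsequences. If $c_n \not\to 0$, there are $\delta > 0$, a subsequence $n_r$, and $(Q_r,\pi_r) \in \mathcal Q \times \Delta^{K-1}$ with $\mathsf{prok}\pr{Y(Q_r,\pi_r), \Norm(0,\Sigma(Q_r)\Pi_r)} > \delta$ for all $r$, where the first argument denotes the law of $Y(Q_r,\pi_r)$ formed from $n_r$ i.i.d.\ terms. Because $\Delta^{K-1}$ is compact and, by Jensen's inequality, $\sigma_k^2(Q_r) \le \bk{\E_{Q_r}\norm{X}^{2+c}}^{2/(2+c)} \le C^{2/(2+c)}$ is bounded, I can pass to a further subsequence along which $\pi_r \to \pi_\infty \in \Delta^{K-1}$ and $\Sigma(Q_r) \to \Sigma_\infty$; then $\Norm(0,\Sigma(Q_r)\Pi_r) \dto \Norm(0,\Sigma_\infty\Pi_\infty)$ by Lévy continuity. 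Along this subsequence $Y(Q_r,\pi_r)$ is a triangular array of $n_r$ i.i.d.\ mean-zero summands with covariance $\Sigma(Q_r)\Pi_r \to \Sigma_\infty\Pi_\infty$ and uniformly bounded $(2+c)$-th moments, so the Lyapunov ratio $n_r^{-(2+c)/2}\sum_{i=1}^{n_r}\E\norm{Y_i}^{2+c} \le C' n_r^{-c/2} \to 0$; the Lyapunov central limit theorem for triangular arrays --- applied coordinate-by-coordinate via the Cram\'er--Wold device and the one-dimensional Lyapunov theorem --- yields $Y(Q_r,\pi_r) \dto \Norm(0,\Sigma_\infty\Pi_\infty)$. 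Since $\mathsf{prok}$ metrizes weak convergence on $\R^K$, the triangle inequality gives $\mathsf{prok}\pr{Y(Q_r,\pi_r),\Norm(0,\Sigma(Q_r)\Pi_r)} \to 0$, contradicting the choice of $\delta$. The final clause of the lemma is then immediate, since the Prokhorov metric dominates the bounded-Lipschitz metric.

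The main obstacle is that the limiting covariance $\Sigma_\infty\Pi_\infty$ can be singular --- indeed $\pi_{\infty,k}$ may equal $0$ --- so a direct multivariate Berry--Esseen bound, whose constants involve $\Sigma^{-1/2}$, is unavailable and would not give a supremum over $\pi \in \Delta^{K-1}$ that is uniform as the problem degenerates. Routing the argument through the Prokhorov metric together with a compactness extraction sidesteps this: weak convergence is insensitive to degeneracy, and the Cram\'er--Wold reduction handles zero-variance directions automatically, since such a coordinate converges to the constant $0$, which is exactly the corresponding marginal of the degenerate limit Gaussian. Beyond routine bookkeeping, the only points that require care are checking that the triangular-array Lyapunov CLT genuinely applies with a possibly singular limit covariance --- it does --- and verifying that every constant in the moment estimates is uniform over the class $\mathcal Q$.
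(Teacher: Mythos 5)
Your proof is correct, but it takes a genuinely different route from the paper's. The paper proves the lemma \emph{quantitatively}: it invokes Proposition A.5.2 of van der Vaart and Wellner (1996), which bounds the Prokhorov distance between a sum of independent random vectors and the Gaussian with matching covariance in terms of a truncated-second-moment function $g(t,Q,\pi)=\E\bk{\norm{Y_i}^2\one(\norm{Y_i}\ge t)}$; it bounds $g$ uniformly over $(Q,\pi)$ via H\"older and Markov using the $(2+c)$-moment hypothesis, and then chooses $\delta=n^{-1/4}$ to produce an explicit sequence $c_n\to 0$ with constants depending only on the moment bound $C$. You instead prove the statement \emph{qualitatively}: define $c_n$ as the supremum itself, argue by contradiction along subsequences, use compactness of $\Delta^{K-1}$ and the Jensen bound $\sigma_k^2(Q)\le C^{2/(2+c)}$ to extract $\pi_r\to\pi_\infty$ and $\Sigma(Q_r)\to\Sigma_\infty$, apply the triangular-array Lyapunov/Lindeberg--Feller CLT through Cram\'er--Wold (handling zero-variance directions by Chebyshev), and conclude via the fact that the Prokhorov metric metrizes weak convergence together with the triangle inequality. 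Your covariance computation, the uniform $(2+c)$-moment bound on $Y_i$, and the treatment of possibly singular limits $\Sigma_\infty\Pi_\infty$ are all sound, and the subsequence scheme does deliver the claimed uniformity over $\mathcal Q\times\Delta^{K-1}$. The trade-off is that the paper's argument is a one-step application of a known quantitative result and yields an explicit rate for $c_n$ (which is also what makes the uniformity transparent and avoids any inverse of the covariance), whereas your argument is self-contained up to the classical CLT but delivers only the existence of some $c_n\to 0$, with no rate; for the purposes of this lemma and its downstream use in the paper, either suffices.
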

\begin{proof}
    This is a straightforward application of Proposition A.5.2 in
    \cite{van1996weak}.
    Observe that, since the entries of $D_i$ are bounded by $1$, $Y_i$ satisfies the
    Lyapunov condition \[
    \norm{Y_i} \le \norm{X_i-\mu(Q)} \text{ almost surely} \implies \E_{Q, \pi}\norm{Y_i}^
    {2+c}
    < C
    \]
    and $
    \var_{Q, \pi}(Y_i) = \Sigma(Q)\diag(\pi).
    $

    For a given $Q$, let \begin{align*}
        g(t, Q, \pi) &= \E_{Q,\pi}\norm{Y_i}^2 \one(\norm{Y_i} \ge t)
     \\ &\le \E[\norm{Y_i}^{2+c} / t^c] \minwith \E[\norm{Y_i}^2]
 \le C (t^{-c} \minwith 1)
    \end{align*}

    Thus $g(\delta\sqrt n, Q, \pi) \lesssim \frac{1}{\delta^{c} n^{c/2}} \minwith 1$.
    Then Proposition A.5.2 in \cite{van1996weak} implies that up to constants that
    only depend on the moment bound $C$ and $K$, 
    \begin{align*}
    \mathsf{prok}\pr{Y(Q, \pi), \Norm(0, \Sigma(Q, \pi))} &\lesssim \frac{1}{\delta^{2+c}
    n^{c/2}} \maxwith \delta  + \frac{1}{\delta^{c/3} n^{c/6}}  + \delta^{1/4} \pr{1 +
    \sqrt{\abs{\log \delta}}}
    \end{align*}
    The expression on the right hand side vanishes if we choose $\delta = n^{-\frac{c}{4
    (2+c)}}$. THis completes the proof. 
\end{proof}

\begin{lemma}
\label{lemma:chernoff}
    Let $N \sim \Bin(n, p)$, then with probability at least $1-\Delta_n$,
    \[
    \abs[\bigg]{\frac{N}{n} - p} \le d_n
    \]
    where $d_n, \Delta_n \to 0$ are sequences independent of $p$. Hence, by a union
    bound, for any $K$ variables $N_1,\ldots, N_K$ where $N_k \sim \Bin(n, p_k)$, \[
    \P\bk{
        \forall k : \abs[\bigg]{\frac{N_k}{n} - p_k} \le d_n
    } \ge 1- K \Delta_n.
    \]
\end{lemma}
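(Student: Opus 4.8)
The plan is to apply a standard concentration inequality for the binomial that is \emph{uniform in} $p$. Since $N \sim \Bin(n,p)$ is a sum of $n$ i.i.d.\ Bernoulli$(p)$ variables bounded in $[0,1]$, Hoeffding's inequality gives, for every $t > 0$ and every $p \in [0,1]$,
\[
\P\left(\left|\tfrac{N}{n} - p\right| \ge t\right) \le 2\exp(-2nt^2).
\]
The right-hand side does not depend on $p$, which is exactly the uniformity the lemma requires. (Equivalently, a crude Chebyshev bound $\P(|N/n - p| \ge t) \le p(1-p)/(nt^2) \le 1/(4nt^2)$, using $p(1-p)\le 1/4$, would serve just as well; either suffices and neither involves $p$.)

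Next I would choose $t = d_n$ to decay to zero slowly enough that the tail bound still vanishes. The simple choice $d_n = n^{-1/4}$ works, yielding $\Delta_n = 2\exp(-2\sqrt n) \to 0$ (or $\Delta_n = 1/(4\sqrt n)\to 0$ via the Chebyshev route). Both $d_n$ and $\Delta_n$ are manifestly independent of $p$, which establishes the first display of the lemma.

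For the union-bound statement, I would note that the events $\{\,|N_k/n - p_k| > d_n\,\}$ need not be independent across $k$, but Boole's inequality requires no independence:
\[
\P\left(\exists\, k:\left|\tfrac{N_k}{n} - p_k\right| > d_n\right) \le \sum_{k=1}^K \P\left(\left|\tfrac{N_k}{n} - p_k\right| > d_n\right) \le K\Delta_n,
\]
so the complementary event has probability at least $1 - K\Delta_n$, as claimed. There is essentially no obstacle here; the only freedom is the choice of rate, and any $d_n$ with $d_n \to 0$ and $n d_n^2 \to \infty$ works.
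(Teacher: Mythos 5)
Your proof is correct. The only difference from the paper is the choice of concentration inequality: the paper invokes a \emph{multiplicative} Chernoff bound, $\P(|N/n - p| \ge \delta p) \le 2e^{-cnp\delta^2}$ (Exercise 2.3.5 of Vershynin), and then converts it to an additive statement by setting $\delta = d_n/p$ and using $p \le 1$ to get $2e^{-cnd_n^2/p} \le 2e^{-cnd_n^2}$, which is where the uniformity in $p$ comes from; you instead apply Hoeffding (or Chebyshev with $p(1-p)\le 1/4$), which is additive and uniform in $p$ from the outset, so no conversion step is needed. Both routes give the same conclusion, and your closing observation is the right way to see the lemma: any deterministic $d_n \to 0$ with $nd_n^2 \to \infty$ works, since the resulting $\Delta_n$ is then independent of $p$ and vanishes; the paper's $d_n = \log(n)/\sqrt{n} + 1/n$ and your $d_n = n^{-1/4}$ are both instances of this. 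Your version is arguably cleaner here because Hoeffding gives an explicit constant in the exponent rather than an unspecified absolute constant $c$, and it avoids the (harmless but slightly awkward) division by $p$. The union-bound step is identical in both arguments and, as you note, requires no independence across arms.
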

\begin{proof}
    Exercise 2.3.5 in \cite{vershynin2010introduction} shows the following
    Chernoff bound: For some absolute $c > 0$, \[
    \P\bk{
        \abs[\bigg]{\frac{N}{n} - p} \ge \delta p
    } \le 2 e^{-c np \delta^2} %
  \]
    If we choose $\delta = d_n / p$, we have \[
     \P\bk{
        \abs[\bigg]{\frac{N}{n} - p} \ge d_n
    } \le 2 e^{-c nd_n^2 / p} \le 2e^{-cnd_n^2}
    \]
    Pick, for instance, $d_n = \log(n)/\sqrt{n} + 1/n$ and $\Delta_n = 1-2n^{-c}$ to
    complete the proof.
\end{proof}

\begin{lemma}
\label{lemma:convergence_subscript}
Under our setup, consider a subsequence $n_m$.
Let $V_{tn_m}$ be certain random variables indexed by $t=1,\ldots, T_0$ and $n_m$. Suppose
\[
(V_{1n_m},\ldots, V_{T_0,n_m}, \Xi_{1n_m},\ldots, \Xi_{T_0, n_m}, T_{n_m}) \dto 
(V_1,\ldots, V_{T_0}, \Xi_1,\ldots, \Xi_{T_0}, T)
\]
Then $
V_{T_{n_m}, n_m} \dto V_T.
$
\end{lemma}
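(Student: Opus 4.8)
The plan is to reduce the statement to a pathwise convergence by the Skorokhod representation theorem, exploiting the fact that $T_{n_m}$ is integer-valued and bounded (it takes values in $\{1,\ldots,T_0\}$), so that weak convergence $T_{n_m}\dto T$ is rigid: along almost every realization the index $T_{n_m}$ is eventually constant and equal to $T$, whence $V_{T_{n_m},n_m}$ eventually coincides with $V_{T,n_m}$, which converges to $V_T$ coordinatewise. Throughout I use the representation $V_{T_n,n}=g(V_{1,n},\ldots,V_{T_0,n},T_n)$ with the fixed measurable map $g(v_1,\ldots,v_{T_0},t)=v_t$, equivalently $V_{T_n,n}=\sum_{s=1}^{T_0}(\Xi_{s-1,n}-\Xi_{s,n})V_{s,n}$ with the convention $\Xi_{0,n}\equiv1$.

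Concretely, first I would invoke Skorokhod's representation theorem: the joint law of $(V_{1,n_m},\ldots,V_{T_0,n_m},\Xi_{1,n_m},\ldots,\Xi_{T_0,n_m},T_{n_m})$ lives on the Polish space $(\R^K)^{T_0}\times\{0,1\}^{T_0}\times\{1,\ldots,T_0\}$ and converges weakly, so there is a single probability space carrying random variables $(\tilde V_{1,m},\ldots,\tilde V_{T_0,m},\tilde\Xi_{1,m},\ldots,\tilde\Xi_{T_0,m},\tilde T_m)$, each with the law of the corresponding term in the original sequence, together with a limit $(\tilde V_1,\ldots,\tilde V_{T_0},\tilde\Xi_1,\ldots,\tilde\Xi_{T_0},\tilde T)$ having the law of $(V_1,\ldots,V_{T_0},\Xi_1,\ldots,\Xi_{T_0},T)$, such that the convergence holds almost surely. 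Next I would argue on the almost sure event where $\tilde V_{s,m}\to\tilde V_s$ for every $s\in\{1,\ldots,T_0\}$ and $\tilde T_m\to\tilde T$. On this event $\tilde T$ takes some value $s_0\in\{1,\ldots,T_0\}$, and since $\tilde T_m$ is $\{1,\ldots,T_0\}$-valued we have $\tilde T_m=s_0$ for all $m$ large enough; hence for such $m$, $\tilde V_{\tilde T_m,m}=\tilde V_{s_0,m}\to\tilde V_{s_0}=\tilde V_{\tilde T}$. Thus $\tilde V_{\tilde T_m,m}\to\tilde V_{\tilde T}$ almost surely, and in particular in distribution.

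It remains to match laws. Since $(\tilde V_{1,m},\ldots,\tilde V_{T_0,m},\tilde T_m)$ has the same distribution as $(V_{1,n_m},\ldots,V_{T_0,n_m},T_{n_m})$ and $g$ is a fixed measurable map, $\tilde V_{\tilde T_m,m}=g(\tilde V_{1,m},\ldots,\tilde V_{T_0,m},\tilde T_m)$ has the same distribution as $V_{T_{n_m},n_m}$; likewise $\tilde V_{\tilde T}=g(\tilde V_1,\ldots,\tilde V_{T_0},\tilde T)$ has the same distribution as $V_T$. Combining with the previous paragraph yields $V_{T_{n_m},n_m}\dto V_T$. An essentially equivalent route avoids Skorokhod altogether: the map $(v_{1:T_0},\xi_{1:T_0})\mapsto\sum_{s=1}^{T_0}(\xi_{s-1}-\xi_s)v_s$ (with $\xi_0\equiv1$) is a polynomial, hence continuous, so the continuous mapping theorem applied to $(V_{1:T_0,n_m},\Xi_{1:T_0,n_m})\dto(V_{1:T_0},\Xi_{1:T_0})$ gives the result directly, once one notes that the limiting $\Xi_s$ remain $\{0,1\}$-valued and non-increasing in $s$ (a closed condition, preserved under weak limits), so exactly one summand survives in the limit and equals $V_T$. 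The only point needing care — not a genuine obstacle — is precisely this bookkeeping identification $V_T=\sum_s(\Xi_{s-1}-\Xi_s)V_s$ in the limit, together with checking that $V_{T_n,n}$ is a measurable function of the jointly converging vector, which is immediate from the displayed representation.
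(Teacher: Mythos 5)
Your proposal is correct, and your closing remark is in fact the paper's entire proof: the paper simply writes $V_{T_{n_m},n_m}=\sum_{t=1}^{T_0}(\Xi_{t-1,n_m}-\Xi_{t,n_m})V_{t,n_m}$ and applies the continuous mapping theorem to the jointly converging vector, identifying the limit sum with $V_T$ via the relation $\one(T=t)=\Xi_{t-1}-\Xi_t$, which holds by construction of the limiting stopping time in \cref{as:T_assumptions}. Your primary route is genuinely different: you bypass the $\Xi$'s entirely, using only the joint convergence of $(V_{1:T_0,n_m},T_{n_m})$, Skorokhod representation on the Polish product space, and the rigidity of integer-valued convergence ($\tilde T_m$ eventually equals $\tilde T$ almost surely), then transfer back through the fixed measurable selection map $g(v_{1:T_0},t)=v_t$. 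What your route buys is robustness and transparency: it needs no algebraic representation of the selection, no continuity of a selection map (indeed $g$ is not continuous in $t$ viewed in $\R$, which is why the Skorokhod/eventual-constancy step, or the paper's bilinear rewriting, is needed), and it makes explicit the support/closedness bookkeeping that the paper leaves implicit. What the paper's route buys is brevity: once the indicator-sum representation is written down, the map is jointly continuous (bilinear), so a single invocation of the continuous mapping theorem suffices, with the limit identification supplied directly by the definition $T=\argmax_t(\Xi_{t-1}-\Xi_t)$. Both arguments are complete; your flagged caveat about identifying $V_T=\sum_s(\Xi_{s-1}-\Xi_s)V_s$ in the limit is exactly the point the paper resolves by assumption rather than by a Portmanteau closed-support argument.
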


\begin{proof}
Note that, by continuous mapping theorem, \[
V_{T_{n_m}, n_m} = \sum_{t=1}^{T_0} (\Xi_{t-1, n_m} - \Xi_{t, n_m}) V_{t, n_m} \dto \sum_
{t=1}^{T_0} (\Xi_{t-1} - \Xi_{t}) V_{t} = V_T.
\]
This concludes the proof. 
\end{proof}

\begin{lemma}
\label{lemma:penrose}
Let $M_n, M$ be random positive semi-definite matrices such that $M_n \dto
M$. Suppose that there exists sets $\mathcal M_n, \mathcal M$ such that (a) for all $n$,
$\P(M_n \in
\mathcal
M_n) = \P(M \in \mathcal M) = 1$ and that (b) for every deterministic sequence $v_n \to v$
where $v_n \in \mathcal M_n$ and $v \in \mathcal M$, we have that for all sufficiently
large $n$ $\rank(v_n) = \rank(v)$. Then $M_n^+ \dto M^+$ where $(\cdot)^+$ is the
Moore-Penrose pseudo-inverse.
\end{lemma}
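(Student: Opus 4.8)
The plan is to read this off the extended continuous mapping theorem (Theorem 1.11.1 in \citet{van1996weak}) after isolating a purely deterministic continuity property of the pseudoinverse. Write $g(A) = A^+$ for the Moore--Penrose inverse; restricted to the $K\times K$ positive semi-definite matrices this is a Borel measurable map (it is a rational function of the entries on each fixed-rank stratum, and the strata are Borel). In the notation of the extended continuous mapping theorem I would take $\mathbb{D}_n = \mathcal M_n$, $\mathbb{D}_0 = \mathcal M$, and $g_n = g$ for every $n$. Since $M_n$ takes values in $\mathcal M_n$ almost surely, $M$ takes values in $\mathcal M$ almost surely, $M$ is Borel measurable, and $M_n \dto M$ by hypothesis, the theorem delivers $M_n^+ = g_n(M_n) \dto g(M) = M^+$ as soon as one verifies the following deterministic statement: \emph{for every sequence $v_n \to v$ with $v_n \in \mathcal M_n$ for all $n$ and $v \in \mathcal M$, one has $v_n^+ \to v^+$.}

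To verify this I would argue as follows. Fix such a sequence. Hypothesis (b) gives an $n_0$ with $r := \rank(v_n) = \rank(v)$ for all $n \ge n_0$, so we may assume this equality for all $n$. Because $v_n$ and $v$ are symmetric we order their eigenvalues $\lambda_1(\cdot)\ge\cdots\ge\lambda_K(\cdot)\ge 0$; by Weyl's inequality $|\lambda_i(v_n)-\lambda_i(v)|\le \norm{v_n-v}\to 0$ for each $i$. Since $\rank(v)=r$ we have $\lambda_r(v)>0=\lambda_{r+1}(v)$; since $\rank(v_n)=r$ we have $\lambda_{r+1}(v_n)=\cdots=\lambda_K(v_n)=0$ exactly, while $\lambda_r(v_n)\to\lambda_r(v)>0$. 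Hence for all large $n$ the spectrum of $v_n$ lies in $\{0\}\cup[\lambda_r(v)/2,\ \lambda_1(v)+1]$, a set whose two pieces are separated by a fixed gap around $0$. Fixing a circle $\Gamma\subset\mathbb{C}$ that encloses $[\lambda_r(v)/2,\ \lambda_1(v)+1]$ and excludes $0$, the resolvents converge uniformly, $(zI-v_n)^{-1}\to(zI-v)^{-1}$ for $z\in\Gamma$ (all relevant spectra stay a fixed distance from $\Gamma$ for large $n$ by Weyl), and the holomorphic functional calculus identifies $v_n^+=\frac{1}{2\pi i}\oint_\Gamma z^{-1}(zI-v_n)^{-1}\,dz$, which therefore converges to the same integral evaluated at $v$, namely $v^+$. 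Equivalently, one can run the same contour argument on the range projections of $v_n$, show they converge to that of $v$, and note $v_n^+$ is the inverse of $v_n$ restricted to its range, extended by zero, which then converges to $v^+$.

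The argument is otherwise routine; the one genuinely substantive step is the deterministic perturbation claim in the second paragraph. The pseudoinverse is discontinuous precisely where the rank can collapse --- for instance $\diag(1,n^{-1})^+=\diag(1,n)$ does not converge to $\diag(1,0)^+$ --- so the spectral-gap estimate made available by hypothesis (b) (and, in the application, by the pruning step that keeps $M_n$ inside sets $\mathcal M_n$ on which the rank cannot drop) is exactly what the contour/projection estimate needs. The remaining points are bookkeeping: confirming Borel measurability of $A\mapsto A^+$ so that the extended continuous mapping theorem of \citet{van1996weak} applies, and noting that its hypothesis is verbatim the deterministic statement proved above.
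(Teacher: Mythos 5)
Your proposal is correct and follows essentially the same route as the paper: both reduce the statement, via the extended continuous mapping theorem, to the deterministic fact that $v_n \to v$ with $\rank(v_n)=\rank(v)$ eventually implies $v_n^+ \to v^+$. The only difference is that the paper cites this continuity result (Rako\v{c}evi\'c, 1997) while you prove it directly for symmetric PSD matrices via Weyl's inequality and a contour-integral/spectral-projection argument, which is a valid self-contained substitute.
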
 

\begin{proof}
Since $\rank(v_n) = \rank(v)$ for every sequence eventually, we have that $v_n^+ \to v^+$
for all sequences \citep{rakovcevic1997continuity}. Hence by the extended continuous
mapping theorem, $M_n^+ \dto M^+.$
\end{proof}

\begin{lemma}
\label{lemma:moore_penrhose_converges}
Under the conditions of \cref{thm:standard_normal} and in its proof, \[
\pr{\frac{\lambda_{n_m} \lambda'_{n_m} / \sqrt{n_m}}{\lambda'_{n_m}1} + \frac{n_{T_{n_m}}}
  {n_m}
  \hat \Sigma^{-1}_{n_m} \hat \Pi_{T_{n_m}n_m}}^+  \dto \pr{\frac{\lambda' \lambda}
  {\lambda'1} +
  c_T
  \Sigma^
  {-1}
  \Pi_T}^+
\]
jointly. 
\end{lemma}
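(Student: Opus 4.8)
The plan is to apply \cref{lemma:penrose}, with the work being to identify the right ``good sets'' $\mathcal M_n,\mathcal M$ on which the Moore--Penrose pseudoinverse is continuous. From the proof of \cref{thm:standard_normal} we already have, along the subsequence $n_m$ and jointly with the other convergences there,
\[
M_{n_m}:=\frac{\lambda_{n_m}\lambda'_{n_m}/\sqrt{n_m}}{\lambda'_{n_m}1}+\frac{n_{T_{n_m}}}{n_m}\hat\Sigma^{-1}_{n_m}\hat\Pi_{T_{n_m}n_m}\;\dto\;M:=\frac{\lambda'\lambda}{\lambda'1}+c_T\Sigma^{-1}\Pi_T,
\]
with $M_{n_m},M$ positive semidefinite. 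So it suffices to produce sets containing $M_{n_m},M$ (with probability tending to one, which is enough after modifying $M_{n_m}$ on a vanishing-probability event, as explained below) along which the rank is locally constant.

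First I would rewrite $M_n$ in a convenient normal form. Setting $\tilde\lambda_n=\lambda_n/\sqrt n$, whose $k$th entry is $\sum_{t=1}^{T_n-1}\tfrac{n_t\hat\Pi_{tkn}}{n\hat\sigma_{kn}^2}$, the rank-one term equals $\tfrac{\tilde\lambda_n\tilde\lambda_n'}{\tilde\lambda_n'1}$, so $M_n=\tfrac{\tilde\lambda_n\tilde\lambda_n'}{\tilde\lambda_n'1}+D_n$ with $D_n=\tfrac{n_{T_n}}{n}\hat\Sigma^{-1}_n\hat\Pi_{T_nn}$ diagonal and positive semidefinite. Two structural facts drive the argument. (i) Since $n_1/n\to c_1>0$, $\Pi_1=\pi_1$ is fixed with $\pi_{1k}>\epsilon$ (\cref{as:kappa_cmt}) so $\hat\Pi_{1kn}\pto\pi_{1k}$ by \cref{lemma:chernoff}, $\hat\sigma_{kn}^2\pto\sigma_k^2\in[C_1^{-1},C_1]$ by \cref{lemma:subsequence,as:P_assumptions}, and $T_n\ge2$ always by \cref{as:Xi_continuity}; hence every entry of $\tilde\lambda_n$ lies in a fixed compact interval $[a,A]\subset(0,\infty)$ with probability tending to one, and likewise the limit $\lambda$ has entries in $[a,A]$. (ii) By the pruning assumption \cref{as:pruning_general} together with \cref{lemma:chernoff}, each diagonal entry of $D_n$ lies in $\{0\}\cup[b,B]$ with probability tending to one, for fixed $0<b\le B$, and its zero entries are exactly the arms $k$ with $\hat\Pi_{T_nkn}=0$; the same holds for the limiting diagonal matrix $c_T\Sigma^{-1}\Pi_T$ surely. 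I would then take $\mathcal M_n=\mathcal M$ to be the (nonempty) set of matrices $\tfrac{\ell\ell'}{\ell'1_K}+\diag(d)$ with $\ell\in[a,A]^K$ and each $d_k\in\{0\}\cup[b,B]$.

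It remains to verify the rank-stability hypothesis of \cref{lemma:penrose} on $\mathcal M$. For any $v=\tfrac{\ell\ell'}{\ell'1_K}+\diag(d)\in\mathcal M$, the column space of $\diag(d)$ is $\mathrm{span}\{e_k:d_k\neq0\}$, and since $\ell$ has no zero coordinate, $\ell\in\mathrm{Col}(\diag(d))$ if and only if every $d_k\neq0$; hence
\[
\rank(v)=\#\{k:d_k\neq0\}+\one\bigl(\exists\,k:\,d_k=0\bigr),
\]
which depends on $v$ only through the zero pattern of $d$. If $v_n\to v$ with all $v_n,v\in\mathcal M$, the diagonal parts converge entrywise through the closed set $\{0\}\cup[b,B]$, so a coordinate converging to $0$ is eventually $0$ and one converging into $[b,B]$ is eventually $\ge b/2>0$; thus $d_n$ has the zero pattern of $d$ for all large $n$, and $\rank(v_n)=\rank(v)$ eventually. \cref{lemma:penrose} then yields $M_{n_m}^+\dto M^+$, jointly with the remaining convergences since all of them hold jointly along $n_m$. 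The only genuine obstacle is the discontinuity of the pseudoinverse at rank jumps; the substantive point is that pruning freezes the zero pattern of $\hat\Pi_{T_nn}$ while the uniform strict positivity of $\tilde\lambda_n$ makes the rank a function of that pattern alone. The appearance of ``probability tending to one'' rather than ``surely'' in (i)--(ii) is immaterial: redefining $M_{n_m}$ to equal a fixed element of $\mathcal M$ off the relevant event alters neither the weak limit of $M_{n_m}$ nor that of $M_{n_m}^+$.
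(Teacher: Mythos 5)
Your proposal is correct and follows essentially the same route as the paper: both apply \cref{lemma:penrose}, establishing the required rank stability from the pruning-frozen zero pattern of the diagonal part $\tfrac{n_{T_n}}{n}\hat\Sigma_n^{-1}\hat\Pi_{T_n n}$ together with strict positivity of the $\lambda$-vector, via the rank formula for a strictly positive rank-one matrix plus a nonnegative diagonal matrix. Your explicit construction of the sets $\mathcal M_n=\mathcal M$ and the modification-on-a-vanishing-probability-event device to turn ``with probability tending to one'' into the almost-sure containment demanded by \cref{lemma:penrose} is, if anything, slightly more careful than the paper's handling of the same point.
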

\begin{proof}
We verify this statement with \cref{lemma:penrose}. Let \[
M_{n_m} = \frac{\lambda_{n_m} \lambda'_{n_m} / \sqrt{n_m}}{\lambda'_{n_m}1} + \frac{n_{T_
{n_m}}}
  {n_m}
  \hat \Sigma^{-1}_{n_m} \hat \Pi_{T_{n_m}n_m}
\]
and $M = \frac{\lambda' \lambda}
  {\lambda'1} +
  c_T
  \Sigma^
  {-1}
  \Pi_T$ where $M_{n} \dto M$ (along a subsequence) is already shown. Consider a fixed
  sequence $v_{m} \to v$. Note
  that $v_m$ is of the form \[
\frac{\ell_m \ell_m' / \sqrt{n_m}}{\ell_m' 1} + \frac{n_{T_m}}{n_m} \Omega^{-1}_{m} \pi_
{T_{m},m}
  \]
Similarly, we can write \[
v = \frac{\ell \ell' }{\ell' 1} + c_{T^*} \Omega^{-1} \pi_{T^*}
  \]
  Note that $v_m \to v$ implies that $\frac{\ell_m \ell_m' / \sqrt{n_m}}{\ell_m' 1} \to 
  \frac{\ell \ell' }{\ell' 1}$, by inspecting the convergence of the off-diagonal
  elements. Hence \[
\frac{n_{T_m}}{n_m} \Omega^{-1}_{m} \pi_
{T_{m},m} \to c_{T^*} \Omega^{-1} \pi_{T^*}
  \]

  Since $\Pi_{T_{n_m}, n_m} \dto \Pi_T$, by \cref{as:pruning_general}, we can restrict to
  considering $\pi_{T^*}$ whose entries are never within $(0, \epsilon)$. Thus the
  diagonal entries of $c_{T^*} \Omega^{-1} \pi_{T^*}$ are never within $(0, \min_{t \in 
  [T_0]} c_t \frac{1}{C_1} \epsilon)$ by \cref{as:P_assumptions}. Hence, if \[
\frac{\ell_m \ell_m' / \sqrt{n_m}}{\ell_m' 1} \to 
  \frac{\ell \ell' }{\ell' 1} \quad \frac{n_{T_m}}{n_m} \Omega^{-1}_{m} \pi_
{T_{m},m} \to c_{T^*} \Omega^{-1} \pi_{T^*},
  \]
  then for all sufficiently large $m$, we must have that (a) the number and location of
  zeros in the diagonal of $\frac{n_{T_m}}{n_m} \Omega^{-1}_{m} \pi_
{T_{m},m}$ remain fixed and (b) $\ell_m$ has
strictly positive entries. 

It suffices to verify that for a strictly positive vector $a \in \R^K$ and diagonal matrix
$A \succeq 0$ with rank $r$, $aa' + A$ has rank equal to $K \minwith (r+1)$. Given this
claim, since the number of zeros in the diagonal of $\frac{n_{T_m}}
{n_m} \Omega^{-1}_{m} \pi_
{T_{m},m}$ remain unchanged for sufficiently large $m$, its rank is constant. Thus the
rank of $v_m$ is also constant for sufficiently large $m$. 

To verify this claim, we will show that $a$ and $\br{e_k: A_{kk} \neq 0}$ span the
column space of $aa' + A$, where $e_k$ is the $k$\th{} standard basis vector. First, note
that if $A \succ 0$ is positive definite, then $aa' + A$ is full-rank, and the claim is
true. Otherwise, there is some $k$ for which $A_{kk} = 0$. Then $(aa' + A) e_k = (a'e_k)
a$. Since $a$ has strictly positive entries, this shows that $a$ is in the column space of
$aa'+A$. Let $\ell$ be such that $A_{\ell\ell}\neq 0$. Then $(aa' + A) e_\ell = A_
{\ell\ell} e_\ell + (a'e_\ell) a$ is a linear combination of $e_\ell$ and $a$. Since $a$
is in the column space of $aa'+A$, then $e_\ell$ is also in the column space. Lastly,
since every vector in the column space is in the span of $a$ and $\br{e_k: A_{kk} \neq
0}$: $(aa'+A)u = (a'u)a + \sum_{k: A_{kk}\neq0} A_{kk} u_k e_k$, we conclude that $a$ and
$\br{e_k: A_{kk} \neq 0}$ span the column space of $aa'+A$. We finish the proof by noting
that the dimension of $a$ and $\br{e_k: A_{kk} \neq 0}$ is $K \minwith (r+1)$.
\end{proof}

\subsection{Verification of \cref{thm:maintext}}
\label{sub:asymptotics-proof}

\thmmaintext*

\begin{proof}

The theorem follows from \cref{thm:appendix_main} pending verification of
 \cref{as:eta_assumptions,as:kappa_cmt,as:T_assumptions,as:weighting_assumptions,as:pruning_general}.

Note that our setup means that \cref{as:kappa_cmt}(1) and \cref{as:T_assumptions}(1) are
satisfied. \Cref{lemma:kappa} verifies \cref{as:kappa_cmt}(2--3) and 
\cref{as:T_assumptions}(2--3). \Cref{as:kappa_pruning} implies \cref{as:pruning_general}
directly. \Cref{lemma:eta} verifies \cref{as:eta_assumptions}. 
Finally, since $\eta_n$ and $T_n$ both have finite support (with discrete topology), for
any $\eta_1, T_1$ in the support, $f
(\eta, T) = \one(\eta
= \eta_1, T = T_1)$ satisfies \cref{as:weighting_assumptions}.
\end{proof}

\begin{lemma}
\label{lemma:kappa}

If the assignment algorithm satisfies \cref{as:kappa_continuity,as:Xi_continuity} then it
satisfies \cref{as:kappa_cmt}(2--3) and \cref{as:T_assumptions}(2--3). 

\end{lemma}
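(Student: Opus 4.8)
The plan is to reduce everything to the convergence of the fixed ``building block'' functions $\Xi_t$, $\kappa_{t+1,(0)}$, $\kappa_{t+1,(1)}$ under the extended continuous mapping theorem, and then to obtain \cref{as:kappa_cmt}(2--3) and \cref{as:T_assumptions}(2--3) by reassembling these blocks through the contingency-plan decomposition \eqref{eq:kappa_decomposition}. First I would fix $P_n\in\mathcal P$ and a subsequence $n_r$, and pass to a further subsequence $n_m$ along which the conclusions of \cref{lemma:subsequence} hold, so that $h_{n_m}\to h\in[-\infty,0]^K$, $\sigma^2_k(P_{n_m})\to\sigma_k^2$, $\hat\sigma^2_{tkn_m}\pto\sigma_k^2$, and $n_t/n_m\to c_t$; \cref{lemma:chernoff} together with the convergence $\Pi_{s,n_m}\dto\Pi_s$ granted in the preconditions of \cref{as:kappa_cmt}/\cref{as:T_assumptions} then gives $\hat\Pi_{s,n_m}\pto\Pi_s$.

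Next I would establish convergence of the arguments fed to the building blocks. Because these blocks are location-invariant in $W$, I may replace $\sqrt{n}W_{tn}$ by the recentered vector $\widetilde W_{tn}:=\sqrt{n}\bigl(W_{tn}-(\max_j\mu_{nj})1_K\bigr)$, whose $k$-th coordinate is $h_{nk}+Y_{:t,k,n}/\hat\Pi_{:t,k,n}$. Along $n_m$ this converges in distribution, jointly with the underlying $(\bY_{1:t,n_m},\Pi_{2:t,n_m},\hat\Pi_{1:t,n_m})$, to a vector $w^{(t)}\in[-\infty,\infty]^K$ whose finite coordinates are exactly $J:=\{k:h_k>-\infty\}$ (a nonempty set, containing the arm attaining $\max\mu$), with $w^{(t)}_k=h_k+\bigl(\sum_{s\le t}\sqrt{c_s}Y_{sk}\bigr)/\bigl(\sum_{s\le t}c_s\Pi_{sk}\bigr)$ for $k\in J$; simultaneously $\Omega_{tn_m}\pto\Omega_t:=\Sigma\bigl(\sum_{s\le t}c_s\Pi_s\bigr)^{-1}$, a positive-definite diagonal matrix whose entries lie in a fixed compact subinterval of $(0,\infty)$ --- bounded below because $\Pi_{1k}>\epsilon$ and $\sigma_k^2\ge C_1^{-1}$, bounded above likewise --- which we may take contained in $[\epsilon,1/\epsilon]$ after shrinking the standing $\epsilon$. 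I would then define the candidate limit functions $\kappa_{t+1}$ and $\Xi_t$ by evaluating the fixed LIAC functions at $(w^{(s)},\Omega_s)$ and assembling them via \eqref{eq:kappa_decomposition}.

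The core step is applying the extended continuous mapping theorem to each building block $g\in\{\Xi_s,\kappa_{s+1,(0)},\kappa_{s+1,(1)}\}$. By \cref{def:continuity}, $g$ is continuous at $(w^{(s)},\Omega_s)$ once $\Omega_s$ has entries in $[\epsilon,1/\epsilon]$ (true almost surely) and the finite slice $(w^{(s)}_j)_{j\in J}$ avoids a Lebesgue-null set $\mathcal B(\Omega_s,J)\subset\R^{|J|}$. To show $\P\bigl((w^{(s)}_j)_{j\in J}\in\mathcal B(\Omega_s,J)\bigr)=0$ I would condition on the realized $\Pi_{1:s}$, which --- by the measurability statement we are also proving, used inductively --- determine $\Omega_s$ and all the weights in $w^{(s)}$: by the sequential Gaussian structure of $(\Pi_{1:s},Y_{1:s})$ posited in the preconditions, and because each arm $j\in J$ receives a strictly positive-variance noise contribution (at least from batch $1$, where $c_1\Pi_{1j}\sigma_j^2>0$), the conditional law of $(w^{(s)}_j)_{j\in J}$ --- modulo the common location shift $g$ ignores --- is absolutely continuous and hence misses $\mathcal B(\Omega_s,J)$ with conditional probability one; Fubini then gives the unconditional statement. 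It follows that $\kappa_{s+1,(0),n_m},\kappa_{s+1,(1),n_m},\Xi_{s,n_m},\Xi_{s+1,n_m}$ converge jointly to their limits, and since \eqref{eq:kappa_decomposition} assembles them by sums and products (and the constant $\Pi_1$), $\Pi_{t+1,n_m}\dto\Pi_{t+1}$ jointly with the preceding data; this is \cref{as:kappa_cmt}(2), with the base case $t=1$ identical (there $Y_1\sim\Norm(0,\Pi_1\Sigma)$ with $\Pi_1$ constant, so $(w^{(1)}_j)_{j\in J}$ is plainly a nondegenerate Gaussian). The same argument applied to every $\Xi_{t,n_m}$ gives $\Xi_{t,n_m}\dto\Xi_t$ jointly, whence $T_{n_m}=\sum_{t=1}^{T_0}t(\Xi_{t-1,n_m}-\Xi_{t,n_m})\dto\sum_{t=1}^{T_0}t(\Xi_{t-1}-\Xi_t)=T$ by the continuous mapping theorem, using $\Xi_1=1$, $\Xi_{T_0}=0$, monotonicity of $\Xi$, and that the limits remain $\{0,1\}$-valued; this is \cref{as:T_assumptions}(2).

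For the measurability claims \cref{as:kappa_cmt}(3) and \cref{as:T_assumptions}(3), set $v_t:=[\sqrt{c_1}\Pi_11_K;\dots;\sqrt{c_t}\Pi_t1_K]$. The map $\bY_{1:t}\mapsto\bY_{1:t}+\theta v_t$ sends $\sum_{s\le t}\sqrt{c_s}Y_{sk}$ to $\sum_{s\le t}\sqrt{c_s}Y_{sk}+\theta\sum_{s\le t}c_s\Pi_{sk}$, hence shifts every coordinate of each $w^{(r)}$, $r\le t$, by the same $\theta$; since the building blocks are location-invariant in $w$ and the $\Omega_r$ depend on $\bY_{1:t}$ only through $\Pi_{1:t}$, the limit $\Pi_{t+1}$ (and likewise $\Xi_{t+1}$), conditional on $\Pi_{1:t}$, is invariant under this map, and therefore measurable with respect to $(B_t\bY_{1:t},\Pi_{1:t})$ for any $B_t$ with $\ker B_t=\operatorname{span}(v_t)$ --- which satisfies $B_tv_t=0$ as required. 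Inducting on $t$, $\Pi_{1:t}$ is measurable with respect to $\{B_s\bY_{1:s}:s\le t-1\}$, so $\Pi_{t+1}$ is measurable with respect to $\{B_s\bY_{1:s}:s\le t\}$ and $\Xi_t$ with respect to $\{B_s\bY_{1:s}:s\le t-1\}$, completing the verification. I expect the main obstacle to be the absolute-continuity/avoidance step in the third paragraph: showing that the limiting recentered score vector almost surely misses the Lebesgue-null discontinuity loci of the LIAC functions. This is straightforward when assignment probabilities stay bounded away from zero, but needs extra care in batches where an arm receives zero assignment --- so that its coordinate of $w^{(\cdot)}$ ``freezes'' --- where one must identify the earlier batch still supplying nondegenerate, coordinatewise-independent Gaussian noise.
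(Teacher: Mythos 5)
Your overall architecture (subsequencing via \cref{lemma:subsequence}, recentering $\sqrt{n}W_{tn}$ by $\max_j\mu_{nj}$ and absorbing the shift through location-invariance, extended CMT on the LIAC building blocks, reassembly through \eqref{eq:kappa_decomposition}, and the measurability argument via matrices $B_t$ annihilating $[\sqrt{c_1}\Pi_11_K;\ldots;\sqrt{c_t}\Pi_t1_K]$) matches the paper's proof. But the step you yourself flag as the main obstacle---the almost-sure avoidance of the discontinuity sets---is where your argument has a genuine gap, and your proposed resolution points in the wrong direction. You claim that, conditional on $\Pi_{1:s}$, the vector $(w^{(s)}_j)_{j\in J}$ is absolutely continuous ``modulo a common location shift'' because batch~$1$ already supplies noise with variance $c_1\Pi_{1j}\sigma_j^2>0$. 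This fails: $\Pi_{2:s}$ are functions of the earlier $Y$'s, and conditioning on them can pin down exactly the pre-$s$ contribution to the relevant differences. Thompson sampling (\cref{ex:thompson}) is the canonical case---by the Hotz--Miller inversion the paper uses elsewhere, $\Pi_2$ determines the differences of the batch-$1$ cumulative means exactly, so given $\Pi_{1:2}$ the batch-$1$ randomness in the differences of $w^{(2)}$ is gone. Nor does the ``common shift'' escape apply for $s\ge 2$: the leftover batch-$1$ direction moves coordinate $j$ of $w^{(s)}$ by a weight proportional to $c_1\Pi_{1j}/\sum_{r\le s}c_r\Pi_{rj}$, which is not constant in $j$, and in any case it is the law of the differences that must be absolutely continuous.

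The paper's proof instead gets absolute continuity from the \emph{freshly drawn last batch}: conditional on $\Pi_{1:t}$, $\Xi_t=1$, and $Y_{1:t-1}$, the term $Y_t\sim\Norm(0,\Pi_t\Sigma)$ is independent noise, so $(Y_{:t,j})_{j\in J}$ is absolutely continuous provided $\Pi_{tj}>0$ almost surely for every $j\in J$. Securing that positivity is precisely why the induction in the paper carries the extra claim $\P(\Pi_{t+1,j}>0\mid\Xi_t=1)=1$ for $j\in J$, which follows from the clause in \cref{as:kappa_continuity} that $\kappa_{t+1,(0),k}(w,\Omega)>0$ whenever $w_k>0$ (combined with location-invariance, so finiteness of $w_k$ suffices). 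Your proof never invokes that clause, which is the telltale sign of the missing ingredient: the difficulty is not, as your closing paragraph suggests, finding an \emph{earlier} batch with nondegenerate noise when an arm's probability hits zero (earlier batches are exactly the ones entangled with the conditioning), but showing the \emph{current} batch assigns positive probability to every arm in $J$, which is what the assumption guarantees. One further small point: the paper also has to note that $\Xi_{t+1}=\Xi_t\Xi_{t+1}$ and $\kappa_{t+1}=(1-\Xi_t)\Pi_1+\cdots$ are automatically continuous on the stopped event $\{\Xi_t=0\}$, a case your reassembly step passes over silently.
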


\begin{proof}

Note that we can rewrite \[
\sqrt{n} W_{t,n} = \underbrace{\pr{\sum_{s=1}^t \frac{n_t}{n} \hat\Pi_{sn}}^{-1}}_
{\hat\Pi_
{:t,n}^{-1}} \underbrace{\sum_
{s=1}^t \sqrt{
\frac{n_t}{n}} Y_{sn}}_{Y_{:t,n}} + \underbrace{\sqrt{n} \mu_{n}}_{h_n + \max_\ell \mu_
{n,\ell}}
\]
Hence we can equivalently write \[
\kappa_{t+1}\pr{\sqrt{n} W_{t,n}, \hat\Sigma_{tn} \hat\Pi_{:t,n}^{-1} } = \kappa_{t+1}
\pr{
    \hat\Pi_{:t,n}^{-1} Y_{:t,n} + h_n, \hat\Sigma_{tn} \hat\Pi_{:t,n}^{-1}
}
\]
and similarly for $\Xi_{t+1}$. 

With respect to the subsequence $P_{n_m}$, let $J = \br{j : h_{n_m, j} > -\infty }$. We
restrict to the subsequence $n_m$ and index by $n$ instead.

To verify \cref{as:kappa_cmt}(2--3) and \cref{as:T_assumptions}(2--3), we induct on $t$.
We will verify that
\begin{itemize}
    \item The statements in \cref{as:kappa_cmt}(2--3) and \cref{as:T_assumptions}(2--3)
    hold for batch $t$
    \item $\P(\Pi_{t+1,j} > 0 \mid \Xi_{t} = 1) = 1$ for all $j\in J$.
\end{itemize}
The base case is $t=1$ for $\kappa_2, \Xi_2$. Note that $\hat\Sigma_{tn} \pto \Sigma > 0$.
Under
\eqref{eq:precondition1},
$\hat\Pi_{:1,n} = \frac{n_1}{n}\hat\Pi_{1,n} \pto c_1 \Pi_1 > 0$. Thus, \[
\colvecb{3}{Y_{:1,n}}{\hat\Pi_{:1,n}}{\hat\Sigma_{tn}} \dto \colvecb{3}{\sqrt{c_1}Y_1}
{c_1 \Pi_1}{\Sigma}.
\]
Hence, for $t=1$, \[
\colvecb{2}{\hat\Pi_{:t,n}^{-1} Y_{:t,n} + h_n}{\hat\Sigma_{tn} \hat\Pi_{:t,n}^{-1}} \dto
\colvecb{2}{(c_1)^{-1/2} \Pi_1^{-1} Y_1 + h}{\Sigma \Pi_{1}^{-1}}.
\]
Here, the limit $(c_1)^{-1/2} \Pi_1^{-1} Y_1 + h$ is absolutely continuous with respect to
the Lebesgue measure on indices in $J$, since $Y_1$ is absolutely continuous with respect
to the Lebesgue measure on $\R^K$ and $\Pi_1 > 0$. 

Hence \cref{as:kappa_continuity,as:Xi_continuity} imply that the set of discontinuities
of $\Xi_2(\cdot), \kappa_{2,(1)}(\cdot, \cdot)$ and $\kappa_{2 ,(0)}$ is measure zero with
respect to the law of $((c_1)^{-1/2} \Pi_1^{-1} Y_1 + h, \Sigma \Pi_{1}^{-1})$. Thus, the
continuous mapping theorem applies and $\kappa_2, \Xi_2$ converges. This verifies the
statements in \cref{as:kappa_cmt}(2) and \cref{as:T_assumptions}(2) for $t=1$.

By location-invariance, $\Pi_2, \Xi_2$ are measurable with respect to the differences \[
\br{c_1^{-1/2} \Pi_{1k}^{-1} Y_{1k} - c_1^{-1/2} \Pi_{1K}^{-1} Y_{1K} : k\in [K-1]}.
\]
These differences can be written as $B_1 Y_{1}$ where $B_1 \sqrt{c_1}\Pi_11_K = 0$. This
proves \cref{as:kappa_cmt}(2--3) and \cref{as:T_assumptions}(2--3) for $t=1$.\footnote{It
is easy to show that \[
T_n = \argmax_t \Xi_{t-1,n} - \Xi_{t,n} \dto T = \argmax_t \Xi_{t-1} - \Xi_t
\]
provided that $\Xi_{1:T_0, n} \dto \Xi_{1:T_0}$ by the continuous mapping theorem, since
the support of $(\Xi_{t-1,n} - \Xi_{t,n})_{t=2}^{T_0}$ is the set of standard basis
vectors in
$\R^{T_0}$. 

} Moreover, observe that $\P(\Pi_{2k} >
0) = 1$ whenever $h_{nk} > -\infty$.

In the inductive case, for a generic $t$, assume \eqref{eq:precondition2}. We have that \[
\colvecb{2}{\hat\Pi_{:t,n}^{-1} Y_{:t,n} + h_n}{\hat\Sigma_{tn} \hat\Pi_{:t,n}^{-1}} \dto
\colvecb{2}{\Pi_{:t}^{-1} Y_{:t} + h}{\Sigma \Pi_{:t}^{-1}}.
\]
Note that $\Pi_{:t} > c_1\Pi_{1} > 0$ almost surely. Note that the inductive hypothesis
implies that $\P(\Pi_{tk} > 0 \mid \Xi_{t-1} = 1) = 1$ for $k$ where $h_{nk} > -\infty$.
Note that this means $\br{Y_{:t, j}
: j \in J}$ is
absolutely continuous \emph{conditional on $\Pi_{1:t}, \Xi_t = 1$}
with respect to the Lebesgue measure on $\R^{|J|}$, since the last term $Y_{t} \sim
\Norm(0, \Pi_t\Sigma)$.
As a result, the set of discontinuities of $\Xi_{t+1}, \kappa_{t+1,(0)}, \kappa_{t+1,
(1)}$ is measure zero with respect to the law of $(\Pi_
{:t}^{-1}Y_{:t} + h, \Sigma\Pi_{:t}^{-1}) \mid \Xi_t = 1$. Note that we can write \[
\Xi_{t+1} = \Xi_{t} \Xi_{t+1}.
\] 
Since both $\Xi_{t+1}$ and $\kappa_{t+1}$ are continuous when $\Xi_{t} = 0$, we conclude
that their discontinuities are measure zero with respect to the law of $(\Pi_
{:t}^{-1}Y_{:t} + h, \Sigma\Pi_{:t}^{-1}, \Xi_t)$. Therefore, $\kappa_{t+1}, \Xi_{t+1}$
converges weakly by the continuous mapping theorem.

Similarly, by location-invariance, $\Pi_{t+1}$ is measurable with respect to \[
\br{\Pi_{:t,k}^{-1} Y_{:t,k} - \Pi_{:t,K}^{-1} Y_{:t,K}: k \in [K-1]}.
\]
When written as a transformation $B_t \bY_{1:t}$, we have that \[
B_t \colvecb{3}{\sqrt{c_1}\Pi_1 1_K}{\vdots}{\sqrt{c_{t} } \Pi_t 1_K} = 0.
\]
Lastly, we have that $\P(\Pi_{t+1, j} > 0 \mid \Xi_{t} = 1) = 1$ by assumption. This
completes the proof. 
\end{proof}

\begin{lemma}
\label{lemma:eta}
Under \cref{as:pruning_general} and \cref{as:P_assumptions}, if $\eta_n(\cdot)$ satisfies
\cref{as:eta_pruning}, then it
satisfies
\cref{as:eta_assumptions}.
\end{lemma}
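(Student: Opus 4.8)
The plan is to verify the five conditions bundled into \cref{as:eta_assumptions} one at a time, working along a subsequence $n_m$ that already realizes the conclusions of \cref{lemma:subsequence}, \cref{thm:main_convergence}, and \cref{cor:large_sample}, so $(\hat\Pi_{1:T_0,n_m},Y_{1:T_0,n_m})\dto G_{T_0}$ and $T_{n_m}\dto T$ jointly. By \cref{as:eta_pruning}, $\eta_n$ is a finite linear combination of the fixed vectors $\eta_{\varsigma,t,E}$ with weights $\one(T_n=t)$, $\one(\{k:\Pi_{T_n,n,k}=0\}=E)$, and $\one(W_{T_n,n,\varsigma(1)}<\cdots<W_{T_n,n,\varsigma(K)})$. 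The heart of the argument is to show each of these indicators converges in distribution along $n_m$, jointly with $G_{T_0,n_m}$ and $T_{n_m}$; the continuous mapping theorem then gives $\eta_{n_m}\dto\eta$ for the natural limiting function $\eta$, which is the first requirement of \cref{as:eta_assumptions}.

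For the indicators: $\one(T_n=t)=\Xi_{t-1,n}-\Xi_{t,n}$ converges by \cref{as:T_assumptions}. By \cref{lemma:chernoff}, $\hat\Pi_{tkn}-\Pi_{tkn}\pto0$; since $\Pi_{tkn}=0$ forces $\hat\Pi_{tkn}=0$ while $\Pi_{1k}>\epsilon$ keeps batch-$1$ probabilities bounded away from $0$, the zero pattern of $\hat\Pi_{T_n,n}$ matches that of $\Pi_{T_n,n}$ for all large $m$, and with the pruning assumption \cref{as:pruning_general} (which keeps every coordinate of $\Pi_{T_n,n}$ out of $(0,\epsilon)$) this zero pattern is stable in the limit, so $\one(\{k:\Pi_{T_n,n,k}=0\}=E)$ converges. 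For the ordering indicator I would use that, by location invariance, the event depends only on the pairwise gaps $\sqrt n(W_{T_n,k,n}-W_{T_n,\ell,n})=Y_{:T_n,k,n}/\hat\Pi_{:T_n,k,n}-Y_{:T_n,\ell,n}/\hat\Pi_{:T_n,\ell,n}+\sqrt n(\mu_{nk}-\mu_{n\ell})$, which converge in $[-\infty,\infty]$ by the continuous mapping theorem (using that $\hat\Pi_{:T_n,k,n}$ and $\hat\sigma_{kn}^2$ converge and $\sqrt n(\mu_{nk}-\mu_{n\ell})\to\Delta\mu_{k,\ell}$). The ordering map is continuous at the limit off the tie set, and the tie set is null: for pairs with $\Delta\mu_{k,\ell}=\pm\infty$ the limiting gap is $\pm\infty$, while for pairs with finite $\Delta\mu_{k,\ell}$ the limiting gap is a nonconstant function of the independent nondegenerate batch-$1$ Gaussians $Y_{1,k},Y_{1,\ell}$ (nondegenerate because $\Pi_{1k},\Pi_{1\ell}>\epsilon$ and $\sigma_k^2\ge C_1^{-1}$), hence absolutely continuous, so ties have probability zero and the continuous mapping theorem applies.

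Next, measurability (the second requirement): $T$ is already part of $(T,\{B_s\bY_{1:s}:s\le T-1\})$; the limiting zero set of $\Pi_T$ is a function of $\Pi_{1:T}$, which is $\{B_s\bY_{1:s}\}$-measurable by \cref{as:kappa_cmt}(3); and the limiting ordering is a location-invariant functional of the $\bY$'s, hence recoverable from the leftover differences $B_s\bY_{1:s}$ --- adding a multiple of the location direction $[\sqrt{c_1}\Pi_11_K;\dots]'$ shifts every standardized cumulative mean by a common scalar and so leaves all pairwise gaps (and their order) unchanged, and that direction spans the null space of the relevant $B_s$, while for divergent-gap pairs the order is deterministic. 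For the third requirement, $\norm{\eta_{n_m}}>\epsilon$ holds because \cref{as:eta_pruning} makes each coordinate outside the (nonempty) set $\{k:\Pi_{T_n,n,k}=0\}$ exceed $\epsilon$ in modulus, and this bound passes to the limit. For the two column-space conditions, I would invoke the linear-algebra fact from the proof of \cref{lemma:moore_penrhose_converges}: for strictly positive $a$ and nonnegative diagonal $A$, $\mathrm{col}(aa'+A)=\mathrm{span}(a,\{e_k:A_{kk}\neq0\})$. Here $a$ is a positive multiple of $\lambda$ (resp.\ $\lambda_{n_m}$), whose entries are strictly positive since batch-$1$ probabilities exceed $\epsilon$, and $A=c_T\Sigma^{-1}\Pi_T$ (resp.\ its finite-sample analogue, whose zero set equals that of $\Pi_{T_n,n}$ for large $m$); since \cref{as:eta_pruning} forces $\eta$ (resp.\ $\eta_{n_m}$) to vanish precisely on that zero set, it lies in $\mathrm{span}\{e_k:\Pi_{T,k}>0\}$ and hence in the column space.

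The step I expect to be the main obstacle is the ordering indicator: showing the tie set is genuinely null (the absolute-continuity argument resting on the nondegenerate batch-$1$ Gaussian reaching every arm via $\Pi_{1k}>\epsilon$) and, more delicately, checking that the limiting ordering is measurable with respect to the leftover differences $\{B_s\bY_{1:s}:s\le T-1\}$ without any batch-$T$ information entering $\eta$ --- this is exactly where location invariance of the target does the work, and where care is needed to line up the information sets so that the noise $Y_T$ stays independent of $\eta$ in the limit.
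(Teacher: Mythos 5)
Your proposal follows essentially the same route as the paper's proof: since $\eta_n$ has finite support determined by $(T_n,\text{ordering of }W_{T_n,n},\{k:\Pi_{T_n,n,k}=0\})$, you reduce everything to joint convergence of these indicators, handled via the scaled drift limits $\Delta\mu_{k,k'}$, atomlessness of the limiting gaps, and pruning plus Chernoff for the stability of the zero pattern, with the norm bound taken from the assumption and the column-space claims obtained from the span of a positive rank-one term plus a nonnegative diagonal, exactly the content of \cref{lemma:colspace_rank1}. The differences are cosmetic---you additionally sketch the measurability clause that the paper's proof leaves implicit (your parenthetical that the location direction ``spans the null space of $B_s$'' is not literally correct since that null space has higher dimension, but the operative point that the limiting pairwise gaps are functions of $B_s\bY_{1:s}$ given $\Pi_{1:s}$ is right, and your absolute-continuity sketch is at the same level of detail as the paper's bare assertion that $\P(W_{T,k}-W_{T,k'}=h)=0$)---so the argument matches the paper's.
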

\begin{proof}

Let $\varsigma_{n} = \varsigma(W_{T_n,n})$ denote the strict ranking of the entries of $W_
{T_n, n}$. Likewise, let $E_n =
\br{k :
\Pi_{T_n,
n, k} = 0}$.

For the weak convergence of
$\eta_n$ along a subsequence $n_m$, it suffices to show
that \[
(T_{n_m}, \varsigma_{n_m}, E_{n_m})
\] 
converges weakly along the subsequence, since their support is finite (and hence any
function is continuous under the discrete topology).
Note that since $\varsigma_{n}$ is a function of the indicators $I_{k,k', n}
= \one(
\sqrt{n} W_{T_n,
t, k} - \sqrt{n} W_{T_n,
t, k'} > 0), k\neq k'$ and $k,k' \in [K]$ (and a continuous function since the domain is
discrete). It further suffices to show joint convergence
of \[
(T_{n_m}, \br{I_{k,k', n_m}: k,k' \in [K], k\neq k'}, E_{n_m}). 
\]

Let us consider the above display as some function $\psi_{m}$ of $U_{m} = (T_{n_m},
\hat\Pi_{T_
{n_m}}, \sqrt{n_m}(W_{T_{n_m, n_m}} - \mu_{n_m}))$, where \[
U_m \dto \pr{T, \Pi_{T}, W_T} \quad W_T = \pr{\sum_{t=1}^{T} c_t \Pi_t }^{-1} \sum_{t=1}^T
\sqrt{c_t} Y_t
\]
Note that for any $h \in \R$, $\P(W_{T,k} - W_{T,k'} =
h) = 0$. Along the subsequence, we also have by \cref{lemma:subsequence} that \[
\sqrt{n_m} (\mu_{n_m,k} - \mu_{n_m,k'}) \to \Delta \mu_{k,k'} \in [-\infty, \infty].
\]
Fix some $t_0, \pi_{t_0}, w_{t_0}$ where $w_{t_0,k} - w_{t_0, k'} \neq -\Delta \mu_{kk'}$.
It suffices to show that every sequence $t_m \to t_0, \pi_{t_m} \to \pi_{t_0}, w_{t_m} \to
w_{t_0}$ in the support of $U_m$ has that \[
\psi_n(t_m, \pi_{t_m}, w_{t_m}) \to \pr{
    t_0, \br{\one(w_{t_0, k} - w_{t_0, k'} + \Delta \mu_{kk'} > 0) : k,k' \in [K], k'\neq
    k}, \br{k: \pi_{t_m,k} = 0}
}. 
\numberthis \label{eq:limits_for_eta}
\]
The convergence of $t_m$ is immediate. Moreover, for all sufficiently large $m$, $t_m =
t_0$. Since $w_{t_0,k} - w_{t_0, k'} \neq -\Delta \mu_{kk'}$, the indicator functions
likewise converge. Lastly, note that if $\pi_{t_0,k} = 0$, then for all sufficiently large
$m$, $\pi_{t_m, k} = 0$ as well since $\pi_{t_m, k} \not\in (0, \epsilon)$ by 
\cref{as:pruning_general}; likewise, if $\pi_{t_m, k} > \epsilon$, then for all sufficiently large
$m$, $\pi_{t_m, k} > \epsilon$ as well. Hence $\br{k: \pi_{t_m,k} = 0} = \br{k: \pi_
{t_0,k} = 0}$ for all sufficiently large $m$. This proves \eqref{eq:limits_for_eta}, and
hence it proves the first part of \cref{as:eta_assumptions}.

$\norm{\eta_n} > \epsilon$ is by assumption. Lastly, note that for all sufficiently large
$m$, $\lambda_{n_m} > 0$ and $\hat\Sigma_{n_m} > 0$. Hence, by
\cref{lemma:colspace_rank1}, we have that $\eta_{n_m}$ is in the column space of \[
\frac{\lambda_{n_m} \lambda'_{n_m} / \sqrt{n_m}}{\lambda'_{n_m}1} + \frac{n_T}{n_m}
\hat\Sigma^{-1}_{n_m} \hat \Pi_{Tn_m}
\]
almost surely for all sufficiently large $m$. Similarly for the limit $\eta$.
\end{proof}

\begin{lemma}
\label{lemma:colspace_rank1}
    Let $\R^K \ni u > 0$ be a vector with positive entries. Let $D \ge 0$ be a diagonal
    matrix with
    strictly positive entries on $\emptyset \neq J \subset [K]$. Consider $v\in \R^K$
    whose
    nonzero entries are a subset of $J$. Then $v$ is in the column space of $cuu' +
    D$ for $c > 0$.
\end{lemma}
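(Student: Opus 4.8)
The plan is to show directly that the column space of $M \equiv cuu' + D$ contains $e_k$ for every $k \in J$; since $v$ has support contained in $J$, this gives $v = \sum_{k \in J} v_k e_k \in \mathrm{Col}(M)$. This is the same spanning argument that already appears inside the proof of \cref{lemma:moore_penrhose_converges}, specialized to the statement at hand.

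First I would dispose of the degenerate case $J = [K]$. Here $D$ is positive definite, so $M = cuu' + D$ is positive definite (a positive semidefinite rank-one matrix plus a positive definite matrix), hence invertible, and $\mathrm{Col}(M) = \R^K \ni v$.

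Next, assume $J \subsetneq [K]$ and fix any index $\ell \in [K] \setminus J$, so that $D_{\ell\ell} = 0$. Then $M e_\ell = c (u' e_\ell) u = c u_\ell u$, and since $c > 0$ and $u_\ell > 0$ this shows $u \in \mathrm{Col}(M)$. Now for each $k \in J$ we have $M e_k = c u_k u + D_{kk} e_k$ with $D_{kk} > 0$, hence $e_k = D_{kk}^{-1}(M e_k - c u_k u)$ is a linear combination of $M e_k$ and $u$, both of which lie in $\mathrm{Col}(M)$. Therefore $e_k \in \mathrm{Col}(M)$ for all $k \in J$, and writing $v$ in this basis completes the argument.

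I do not expect a genuine obstacle here; the only point requiring care is the case split according to whether $D$ has a zero diagonal entry, since the step producing $u \in \mathrm{Col}(M)$ relies on the existence of such an index $\ell$. When no such index exists, $D$ is positive definite and the claim is immediate, so the two cases together cover all possibilities.
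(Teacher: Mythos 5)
Your proof is correct, but it takes a different (if closely related) route from the paper's own proof of \cref{lemma:colspace_rank1}. You show that $\mathrm{Col}(cuu'+D)$ contains a spanning set for the support of $v$: using an index $\ell\notin J$ you get $(cuu'+D)e_\ell = cu_\ell u$, hence $u\in\mathrm{Col}(cuu'+D)$, and then each $e_k$, $k\in J$, follows by subtracting $cu_ku$ from $(cuu'+D)e_k$ and dividing by $D_{kk}>0$; this is exactly the spanning argument the paper deploys inside the proof of \cref{lemma:moore_penrhose_converges}, as you note. The paper's proof of \cref{lemma:colspace_rank1} itself instead exhibits an explicit preimage: it builds $q$ supported on $J\cup\{j_0\}$ (with $j_0\in J^C$) satisfying $Dq=v$ and $q'u=0$, so that $(cuu'+D)q=v$ directly. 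Both arguments hinge on the same structural fact---the existence of a coordinate outside $J$ when $J\subsetneq[K]$, with the case $J=[K]$ dispatched by positive definiteness---so the difference is mainly one of packaging: your version additionally identifies $\mathrm{Col}(cuu'+D)$ as the span of $u$ and $\{e_k: k\in J\}$ (useful if one wants the rank statement from \cref{lemma:moore_penrhose_converges}), while the paper's version gives a closed-form solution of the linear system; incidentally, your route also avoids the normalization of the $j_0$ coordinate (the paper's displayed $q_{j_0}$ should carry a factor $1/u_{j_0}$ for $q'u=0$ to hold), a small slip that does not affect the result.
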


\begin{proof}
If $J = [K]$, then $cuu' + D$ is full rank. Thus $v$ is in the column space. If $J
\subsetneq [K]$, let $j_0 \in J^C$. Set $q \in \R^K$ where for $j \in J$, $q_j = v_{j}/D_
{jj}$. Let $q_{j_0} = -\sum_{j\in J} u_j q_j$. Let $q_{k} = 0$ for $k \not\in J \cup
\br{j_0}$. By construction, $q'u = 0$ and $Dq = v$. This concludes the proof.
\end{proof}

\section{Two-batch experiment and properties of \citet{zhang2020inference}}
\label{asub:zhang_bet_proof}

To illustrate properties of \citet{zhang2020inference}'s procedure, we restrict to a
simple two-batch setup. Consider \[
X_1 \sim \Norm(\mu, 2I) \text{ and } X_2 \mid X_1 \sim \Norm(\mu, \diag\pr{1/\Pi,
1/(1-\Pi)}).
\]

Since $\Pi_2 = (\Pi_{21}, 1-\Pi_{21})$ is summarized by $\Pi_{21}$, we drop the subscript
and refer to $\Pi = \Pi_{21}$. Note that if the assignment algorithm is
location-invariant---that is, $\Pi(X_1) = \Pi_t (X_1 + h1_2)$ for every $h$---then
$\Pi_t(X_1) = \Pi_t(X_{11}, X_{12}) = \Pi_t(X_{11} - X_{12},0) = \Pi_t(X_{11} - X_{12})$
depends only on the difference $\Delta \equiv X_{11} - X_{12}$.

Suppose we are interested in making inferences about $\mu_1$. \citet{zhang2020inference}
propose a simple inference procedure. The Gaussian limit analogue of their procedure is
based on the following statistic, which is a sum of studentized batch means: \[Z =
\frac{1}{\sqrt{2}} X_{11} + \Pi(\Delta)^{1/2} X_{21}.
\]
Since the studentized second batch statistic is independent
of $\Delta$,\[
\Pi^{1/2} (X_{21} - \mu_1) \mid \Delta \sim \Norm(0, 1),
\]
the following quantity is pivotal: \[
Z^* = Z^*(\mu_1) = Z - \pr{\frac{1}{\sqrt{2}} + \Pi^{1/2}} \mu_1 \sim \Norm(0, 2).
\]
Thus, for example, we can test $H_0 : \mu_1 = \mu_{01}$ by comparing $Z^* (\mu_{01})$ to
its distribution $\Norm(0,2)$ under $H_0$. As another example, the natural estimator
$\pr{\frac{1}{\sqrt{2}} + \Pi^{1/2}}^{-1} Z$ is median-unbiased for $\mu_1.$

Note that the joint density of $X_1, X_2$ factors into
\begin{align*} p_\mu(X_1, X_2) &= f_1(U \mu_1)f_2(V \mu_2)f_3(\Pi(\Delta) (\mu_1^2 -
\mu_2^2)) h_1(\mu) h_2(X_1, X_2)
\end{align*}
where $U = \frac{1}{2} X_{11} + \Pi X_{21}$, $V = \frac{1}{2} X_{12} +
(1-\Pi) X_{22}$, $h_1$ is free of $(X_1,X_2)$, and $h_2$ is free of $\mu$. As a
result, the statistics $(U, V, \Pi)$ are sufficient for $\mu$ in the law of $(X_1, X_2)$.
Consequently, $(U, V, \Delta)$ is sufficient as well. We can decompose the statistic $Z$
as a linear combination of the sufficient statistics $(\Delta, U, V)$ and a Gaussian noise
term. We do so by studying the joint Gaussian distribution of $(U, V, Z)$ conditional on
$\Delta$.

\begin{restatable}{prop}{proptwostage}
\label{prop:twostage}
    The statistic $Z$ obeys the following representation: \begin{align*}
    Z &= a\Delta + b(\Pi) V + c(\Pi) U + \sigma(\Pi) \xi
    \\
    &=\frac{\Delta}{2\sqrt{2}} + \frac{\sqrt{2} \Pi - \sqrt{\Pi}}{1 + 4\Pi - 4\Pi^2} V + \frac{\sqrt{2} + 5\sqrt{\Pi} - \sqrt{2}\Pi - 4 \Pi^{3/2}}{1+4\Pi - 4\Pi^2} U  \\ &\quad+ \sqrt{
\frac{(1+2\Pi - 2\sqrt{2\Pi})(1-\Pi)}{1 + 4\Pi - \Pi^2}
} \cdot \xi
    \end{align*}
    for some $\xi \sim \Norm(0,1)$ independently of $(U, V, \Delta)$.
\end{restatable}

\begin{proof}
    Observe that \[
    \colvecb{3}{\bar X_1}{X_{21}}{X_{22}} \mid \underbrace{X_{11} - X_{12}}_{\Delta} \sim
    \Norm \pr{
        \colvecb{3}{\frac{\mu_1 + \mu_2}{2}}{\mu_1}{\mu_2}, \begin{bmatrix}
            1 & 0 & 0 \\
            0 & 1/\Pi & 0 \\
            0 & 0 & 1/(1-\Pi)
        \end{bmatrix}
    }
    \]
    where $\bar X_1 = \frac{1}{2} (X_{11} + X_{12})$. Note that, since $X_{11} = \bar X_1 +
    \Delta/2$ and $X_{12} = \bar X_1 - \Delta/2$, the above display captures the distribution
    of
    the data conditional on $\Delta$. Hence we can compute that \[
    \colvecb{3}{Z-\frac{\Delta}{2\sqrt{2}}}{U - \frac{\Delta}{4}}{V + \frac{\Delta}{4}} \mid
    \Delta \sim \Norm\pr{
        \colvecb{3}{\frac{\mu_1+\mu_2}{2\sqrt{2}} +
        \sqrt{\Pi}\mu_1}{\frac{\mu_1+\mu_2}{4} + \Pi \mu_1}{
        \frac{\mu_1+\mu_2}{4} + (1-\Pi) \mu_2}
    , \begin{bmatrix}
        \frac{3}{2} & \frac{1}{2\sqrt2} + \sqrt{\Pi} & \frac{1}{2\sqrt{2}} \\
        \frac{1}{2\sqrt2} + \sqrt{\Pi} & \frac{1}{4} + \Pi & \frac{1}{4} \\
        \frac{1}{2\sqrt{2}} & \frac14 & \frac54 - \Pi
    \end{bmatrix}},
    \]
    since the left-hand side are functions of $\bar X_1, X_{21}, X_{22}$ and $\Delta$. The
    claim follows by computing that \[
    \E[Z \mid \Delta, U, V] = a \Delta + b(\Pi) V + c(\Pi) U
    \]
    and $\var(Z \mid \Delta, U, V) = \sigma^2(\Pi)$, since $Z,U,V$ are jointly Gaussian
    conditional on $\Delta$.
    \end{proof}

One can further show that the projection of the pivot $Z^*$ onto $(\Delta, V, U)$ is
Gaussian with conditional mean $a(\Delta + \mu_2 - \mu_1)$ and conditional variance $
\frac{3}{2} - \sigma^2(\Pi) $:
\[a \Delta + b(\Pi)V + c(\Pi) U - \pr{\frac{1}{\sqrt{2}} +
\Pi^{1/2}} \mu \mid \Delta \sim \Norm\pr{ a(\Delta + \mu_2 - \mu_1), \frac{3}{2} -
\sigma^2(\Pi) }\] As a result, the unconditional distribution of the projection of the
pivot onto $(\Delta, U, V)$ is a mean and variance mixture of Gaussians, whose
unconditional mean is zero. The pivot $Z^*$ is thus Gaussian by injecting
$\Delta$-dependent noise $\sigma(\Pi)\xi$ so as to stabilize the $\Delta$-conditional
variance.

Due to the presence of the noise term $\xi$, the statistic $Z$ suffers from certain
deficiencies. For instance, there exist \emph{recognizable subsets}---events on which
inferences based on $Z$ are unreliable.

\begin{restatable}{cor}{cortwostageone}
\label{cor:twostage1}
Assume that $\Pi \neq \frac{1}{2}$ with positive probability and $\Pi \in (0,1)$ almost
surely.
    Fix any significance level $\alpha \in (0,1/2)$. There exists an event $A$ which does
    not depend on the unknown parameter $\mu$ on which the coverage probability is at most
    $1/2$. That is, for all $\mu \in \R^2$, $\P_\mu(A) > 0$, and  \[
\P_\mu\pr{\abs[\bigg]{Z - \pr{\frac{1}{\sqrt{2}} + \Pi^{1/2}} \mu} \le \sqrt{2} \Phi^{-1}
(1-\alpha/2) \mid A} \le \frac{1}{2}.
    \]
\end{restatable}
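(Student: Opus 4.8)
The plan is to exhibit a \emph{recognizable subset} by conditioning on the independent noise term $\xi$ from \cref{prop:twostage} being large in absolute value. First I would record that $\xi=\bigl(Z-a\Delta-b(\Pi)V-c(\Pi)U\bigr)/\sigma(\Pi)$ is a function of the data alone (well-defined on $\{\Pi\neq1/2\}$, and set to $0$ otherwise), that by \cref{prop:twostage} it is $\Norm(0,1)$ and independent of $(\Delta,U,V)$ under every $\P_\mu$, and that, writing $W^\star:=a\Delta+b(\Pi)V+c(\Pi)U-(\tfrac1{\sqrt2}+\Pi^{1/2})\mu_1$, one has the exact identity $Z^\star:=Z-(\tfrac1{\sqrt2}+\Pi^{1/2})\mu_1=W^\star+\sigma(\Pi)\xi$, where $W^\star$ is a function of $(\Delta,U,V)$ and hence independent of $\xi$, and where $|Z^\star|\le\sqrt2\,\Phi^{-1}(1-\alpha/2)$ is exactly the event that the ZJM interval covers $\mu_1$.

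Next I would locate a region where the injected noise is non-degenerate. From the explicit formula in \cref{prop:twostage}, on $(0,1)$ one has $\sigma^2(\pi)=0$ if and only if $\pi=1/2$: the numerator factor $1+2\pi-2\sqrt{2\pi}$ equals $(1-\sqrt{2\pi})^2$, which vanishes only at $\pi=1/2$, while the remaining factors are strictly positive on $(0,1)$. Since $\Pi=\Pi(\Delta)$ and $\Delta\sim\Norm(\mu_1-\mu_2,4)$ has a strictly positive density, the hypothesis $\P(\Pi\neq1/2)>0$ is equivalent to $\mathrm{Leb}\{d:\sigma^2(\Pi(d))>0\}>0$, so there is $\delta>0$ with $\mathrm{Leb}(B_\delta)>0$ for $B_\delta:=\{d:\sigma^2(\Pi(d))\ge\delta\}$. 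Set $t:=\sqrt2\,\Phi^{-1}(1-\alpha/2)$, fix any constant $c>t/\sqrt\delta$, and take $A:=\{\Delta\in B_\delta\}\cap\{|\xi|\ge c\}$. This $A$ is data-measurable and does not mention $\mu$; and because $\xi\perp\Delta$, $\P_\mu(A)=\P_\mu(\Delta\in B_\delta)\,\P(|\xi|\ge c)>0$ for every $\mu\in\R^2$.

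For the coverage bound I would condition on $\Delta=d\in B_\delta$ and split on the sign of $\xi$. On $\{|\xi|\ge c\}$ each sign has conditional probability $1/2$ by symmetry of $\Norm(0,1)$, and $\sigma(\Pi)\ge\sqrt\delta$ on $B_\delta$; hence on $\{\xi\ge c\}$ we get $Z^\star=W^\star+\sigma(\Pi)\xi\ge W^\star+\sqrt\delta\,c$, so $|Z^\star|\le t$ forces $W^\star\le t-\sqrt\delta\,c<0$, while on $\{\xi\le-c\}$ it symmetrically forces $W^\star\ge\sqrt\delta\,c-t>0$. Using $W^\star\perp\xi$ given $\Delta=d$, these two events are disjoint, and therefore $\P_\mu(|Z^\star|\le t\mid\Delta=d,|\xi|\ge c)\le\tfrac12\,\P_\mu(|W^\star|\ge\sqrt\delta\,c-t\mid\Delta=d)\le\tfrac12$, uniformly over $d\in B_\delta$ and over $\mu$. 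Averaging this over the conditional law of $\Delta$ given $A$ yields $\P_\mu(|Z^\star|\le t\mid A)\le\tfrac12$ for all $\mu$, which is the assertion.

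The one genuinely delicate ingredient is the measure-theoretic upgrade in the second step: turning ``$\Pi\neq1/2$ with positive probability'' into a positive-Lebesgue-measure set $B_\delta$ on which $\sigma^2(\Pi)$ is bounded away from $0$, which rests on the elementary fact that $\sigma^2(\pi)$ vanishes exactly at $\pi=1/2$ on $(0,1)$. Everything else is bookkeeping: the sign-splitting bound in the third step is immediate once the identity $Z^\star=W^\star+\sigma(\Pi)\xi$ and the independence $W^\star\perp\xi$ are in place, and — crucially — it never uses the value of $\E[W^\star\mid\Delta]$, so it holds simultaneously for all $\mu$.
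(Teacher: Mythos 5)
Your proof is correct and rests on the same idea as the paper's own argument: the ZJM statistic contains an independent, symmetric noise term $\xi$, and conditional on $|\xi|$ being large relative to the interval's half-width, coverage can occur for at most one sign of $\xi$, so the conditional coverage probability is at most $1/2$. The only difference is cosmetic: the paper conditions on the data-dependent event $\{|\xi| > \sqrt{2}\,\Phi^{-1}(1-\alpha/2)/\sigma(\Pi),\ \Pi \neq 1/2\}$ and uses the reflection $Z \mapsto Z'$ with $-\xi$, whereas you use a constant threshold $c$ together with a positive-Lebesgue-measure set $B_\delta$ on which $\sigma(\Pi)$ is bounded away from zero---an extra (correct) measure-theoretic step that the paper's choice of threshold renders unnecessary.
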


\begin{proof}
    Let $A = \br{|\xi| > \sqrt{2} \frac{1}{\sigma(\Pi)} \Phi^{-1} (1-\alpha/2), \Pi \neq
    1/2}$. Since when $\Pi \neq 1/2$, $\sigma(\Pi) > 0$, and $\Pi \neq 1/2$ with positive
    probability, $A$ occurs with positive probability. Consider $Z = a\Delta + b(\Pi)V
    + c(\Pi) U + \sigma(\Pi) \xi$ and $Z = a\Delta + b(\Pi)V
    + c(\Pi) U - \sigma(\Pi) \xi$. Note that on the event $A$, \[
    |Z-Z'| > 2\sqrt{2} \Phi^{-1} (1-\alpha/2).
    \]
    Hence, on $A$, \[
    \abs{Z - (1/\sqrt{2} + \Pi^{1/2}) \mu}  < \sqrt{2} \Phi^{-1}(1-\alpha/2) \text{ and } \abs{Z' - (1/\sqrt{2} + \Pi^{1/2}) \mu}  < \sqrt{2} \Phi^{-1}(1-\alpha/2)
    \]
    are mutually exclusive events. Moreover, the probabilities of these two events conditional
    on $A$ are identical by symmetry. Therefore, \[
    2\P\bk{\abs{Z - (1/\sqrt{2} + \Pi^{1/2}) \mu}  < \sqrt{2} \Phi^{-1}(1-\alpha/2) \mid A}
    \le 1.
    \]
    This completes the proof.
    \end{proof}

Moreover, the natural estimator $\pr{\frac{1}{\sqrt{2}} + \Pi^{1/2}}^{-1} Z$ can be
improved in mean-squared error by Rao--Blackwellization against $(U, V, \Delta)$:
\begin{restatable}{cor}{cortwostagetwo}
\label{cor:twostage2}
Let $T_0 = \pr{\frac{1}{\sqrt{2}} + \Pi^{1/2}}^{-1} Z$.
    Let \[T^* = \E\bk{T_0 \mid U, V,
    \Delta} = \pr{\frac{1}{\sqrt{2}} + \Pi^{1/2}}^{-1}\pr{a\Delta + b(\Pi) V + c(\Pi)
    U}\]
    be its conditional expectation with respect to $U, V, \Delta$.
    Then, for any $\mu \in \R^2$, \[
\E_\mu[(T^* - \mu_1)^2] \le \E_{\mu}[(T_0 - \mu_1)^2].
    \]
\end{restatable}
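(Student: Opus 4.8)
The plan is to invoke the Rao--Blackwell theorem, using the sufficiency of $(U,V,\Delta)$ for $\mu$ that follows from the factorization of $p_\mu(X_1,X_2)$ displayed above. First I would record that $T_0 = \pr{\tfrac{1}{\sqrt 2}+\Pi^{1/2}}^{-1} Z$ has finite second moment under every $\mu$: conditional on $\Delta$ the triple $(Z,U,V)$ is jointly Gaussian, $\Pi^{1/2}\in(0,1)$, and $\pr{\tfrac{1}{\sqrt 2}+\Pi^{1/2}}^{-1}\le\sqrt 2$ is bounded, so the conditional expectation defining $T^*$ is well defined and, by Jensen, $T^*$ is square-integrable as well. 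The crucial point is that $T^* = \E[T_0 \mid U,V,\Delta]$ does \emph{not} depend on the unknown $\mu$. This is immediate from sufficiency of $(U,V,\Delta)$, and is in any case visible directly from the closed form $T^* = \pr{\tfrac{1}{\sqrt 2}+\Pi^{1/2}}^{-1}\pr{a\Delta + b(\Pi)V + c(\Pi)U}$ obtained from \cref{prop:twostage}. Hence $T^*$ is a legitimate estimator of $\mu_1$.

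Next I would carry out the usual decomposition. Writing $T_0 - \mu_1 = (T_0 - T^*) + (T^* - \mu_1)$ and expanding the square,
\[
\E_\mu[(T_0-\mu_1)^2] = \E_\mu[(T_0-T^*)^2] + 2\,\E_\mu[(T_0-T^*)(T^*-\mu_1)] + \E_\mu[(T^*-\mu_1)^2].
\]
The cross term vanishes: since $T^*-\mu_1$ is $(U,V,\Delta)$-measurable, conditioning on $(U,V,\Delta)$ gives $\E_\mu[(T_0-T^*)(T^*-\mu_1)\mid U,V,\Delta] = (T^*-\mu_1)\,\E_\mu[T_0-T^*\mid U,V,\Delta] = 0$ by the very definition of $T^*$ (the square-integrability checked above justifies the tower property and this interchange). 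Therefore $\E_\mu[(T_0-\mu_1)^2] = \E_\mu[(T_0-T^*)^2] + \E_\mu[(T^*-\mu_1)^2] \ge \E_\mu[(T^*-\mu_1)^2]$, which is exactly the asserted inequality.

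There is essentially no substantive obstacle here beyond bookkeeping. The only point requiring a word of care is confirming that $T^*$ is genuinely $\mu$-free, so that it qualifies as an estimator rather than an infeasible oracle quantity; this is guaranteed by the explicit formula from \cref{prop:twostage} (equivalently, by sufficiency), after which the argument is the textbook Rao--Blackwellization step.
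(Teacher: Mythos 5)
Your proof is correct and takes essentially the same route as the paper: both are the standard Rao--Blackwell mean-squared-error decomposition of $T_0-\mu_1$ into $(T_0-T^*)+(T^*-\mu_1)$ with a vanishing cross term. The only cosmetic difference is that the paper kills the cross term by writing $T_0=T^*+q(\Pi)\xi$ with $\xi$ independent of $(U,V,\Delta)$ via \cref{prop:twostage}, whereas you use the tower property and the defining projection property of $T^*=\E[T_0\mid U,V,\Delta]$.
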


\begin{proof}
    Note that we can write $T_0 = T^*(U, V, \Delta) + q(\Pi) \xi$ for some $q(\Pi)$. Note that
    \[
    \E[(T_0 - \mu_1)^2] = \E[q(\Pi)^2 \xi^2] + \E[(T^* - \mu_1)^2] \ge \E[(T^* - \mu_1)^2].
    \]
\end{proof}

\end{document}